\definecolor{light-gray}{gray}{0.95}
\theoremstyle{definition}
\newtheorem{definition}{Definition}[subsection]
\newtheorem{theorem}[definition]{Theorem}
\newtheorem{conjecture}[definition]{Conjecture}
\newtheorem{proposition}[definition]{Proposition}
\newtheorem{lemma}[definition]{Lemma}
\newtheorem{corollary}[definition]{Corollary}
\declaretheoremstyle[
  spaceabove=5pt, spacebelow=5pt,
  headfont=\itshape,
  notefont=\normalfont, notebraces={(}{)},
  bodyfont=\normalfont,
  postheadspace=1em,
  qed=$\spadesuit$
]{pluto2}
    \declaretheorem[style=pluto2,name=Remark,    sibling=definition]{remark}
\renewcommand{\top}{\mathrm{top}}
\newcommand{\fg}{\mathfrak{g}}
\newcommand{\calF}{\mathcal{F}}
\newcommand{\qsp}[2]{\,\ensuremath{\raise.5ex\hbox{$#1$}\big\slash\raise-.5ex\hbox{$#2$}}}
\newcommand{\LQ}{\mathcal{L}_{Q}}
\newcommand{\LE}{\mathcal{L}_{\mathsf{E}}}
\newcommand{\bD}{\mathbb{\Delta}}
\newcommand{\bbL}{\mathbb{L}}
\newcommand{\oloc}{\Omega_{\mathrm{loc}}}
\newcommand{\ohor}{\Omega_{\mathrm{H}}}
\newcommand{\oham}{\Omega_{\mathrm{ham}}}
\newcommand{\oht}{\Omega_{\mathrm{ht}}}
\newcommand{\Xloc}{\mathfrak{X}_{\loc}}
\newcommand{\fracF}{\mathfrak{F}}
\newcommand{\loc}{\mathrm{loc}}
\newcommand{\ham}{\mathrm{ham}}
\newcommand{\src}{\mathrm{src}}
\newcommand{\id}{\mathrm{id}}
\newcommand{\im}{\mathrm{Im}}
\newcommand{\bb}[1]{\mathbb{X}_{#1}}
\newcommand{\bom}{\bul{\omega}}
\newcommand{\bth}{\bul{\theta}}
\newcommand{\bF}{\bul{F}}
\newcommand{\bL}{\bul{L}}
\newcommand{\bbP}{\mathbb{P}}
\newcommand{\wt}[1]{\widetilde{#1}}
\newcommand{\Liou}[1][\bul{\eta}]{\mathsf{L}_{#1}}
\newcommand{\bul}[2][]{#2^{\bullet #1}}
\title{Homotopies for Lagrangian field theory}
\author{Michele Schiavina}
\author{Jonas Schnitzer}
\address{Dipartimento di Matematica, Università di Pavia, via Ferrata 5, 27100, Pavia, Italy}
\email{michele.schiavina@unipv.it, jonaschristoph.schnitzer@unipv.it}
\date{}
\begin{document}
\begin{abstract}
    Consider the variational bicomplex for $\mathcal{E}$ the space of sections of a graded, affine bundle. Local functionals $\mathcal{F}$ are defined as an equivalence class of density-valued functionals, which represent Lagrangian densities. A choice of a $k$-symplectic local form $\omega$ on $\mathcal{E}$ induces a Lie$[k]$ algebra structure on (Hamiltonian) local functionals $(\mathcal{F}_\ham,\{\cdot,\cdot\}_{\mathrm{ham}})$. For any $\omega$ and any choice of a cohomological vector field $Q$ compatible with $\omega$, we build three explicit $L_\infty$ algebras on a resolution of $\mathcal{F}_{\mathrm{ham}}$, which are all $L_\infty$ quasi-isomorphic to a dgL$[k]$a $(\mathcal{F}_{\mathrm{ham}},d_{\mathrm{ham}},\{\cdot,\cdot\}_{\mathrm{ham}})$. In particular, one of our equivalent $L_\infty$ algebras is a dgL$[k]$ algebra. In the case $k=-1$, this provides an explicit lift of the standard Batalin--Vilkovisky framework to local forms enriched by the $L_\infty$ structure, in terms of local homotopies, which interprets the modified classical master equation as a Maurer--Cartan equation for the distinguished dgL$[k]$a we construct. We conjecture that the data of a lift to local forms of a BV theory contains a homotopy moment map on the cohomology of the Koszul complex of the underlying Lagrangian field theory.
\end{abstract}
\maketitle

\section{Introduction}
One of the key features in the Lagrangian theory of fields is locality. This is the requirement that the basic structures upon which classical field theory is built should be determined at a point in some ambient space(-time), and observables should not depend on field theory data at arbitrary large separations. This axiom is successfully encoded by means of the variational bicomplex, a natural structure that arises when looking at jets of sections of fibre bundles over said spacetime manifold. This allows one to work with local forms on the space of sections of the given underlying bundle. From the point of view of the geometry of the problem, locality effectively tames the infinite dimensionality of the spaces involved.

Further mathematical depth and complexity emerges when looking at field theory data that is invariant under local Lie group actions. This feature is often called \emph{gauge symmetry} and is both a resource of important insight for pure mathematics as well as a challenge, since ``physical observables'' should only depend on gauge-equivalence classes of configurations, and thus are expected not to depend on choices of representatives. When formalizing a local theory of classical fields with local symmetry (especially in view of its quantisation), one needs to combine the variational bicomplex with a method to account for invariants. 

This task is handled via the Batalin--(Fradkin)--Vilkovisky framework \cite{BFV0,BFV1,BV1,BV2,BFV2}, where out of the classical field theory input one builds a differential graded-symplectic manifold whose cohomology describes the locus of classical solutions of the equations of motion of the given theory, modulo the gauge action. In other words, it aims for a cohomological description of the moduli space of the theory (see \cite{CMS_BV} and references therein).

One important advantage of this method is that the dg manifold one builds out of a gauge theory is again phrased in terms of local field theory data: one then works with a \emph{graded} local Lagrangian field theory, whose associated local cohomology theory is the original physical space of interest. Moreover, it was suggested that, in order to fully exploit the power of the BV formalism, one should probe what the theory induces on lower dimensional strata. This led to the BV-BFV formalism\footnote{The second acronym is after Batalin--Fradkin--Vilkovisky.} of Cattaneo, Mnev and Reshetikhin \cite{CMR2014,CMR18}. 

While the algebraic structure underpinning the BV formalism for closed manifolds\footnote{Or noncompact manifolds with appropriate compact support conditions, see e.g.\ \cite{rejznerbook}.} is well understood, much less clear is how one should \emph{lift} this structure to the variational bicomplex. For a review on modern techniques on this topic see \cite{Grigoriev} and references therein. Indeed, one can define a shifted Poisson bracket on \emph{local (Hamiltonian) functionals}, denoted by $\calF_{(\ham)}$ and defined as equivalence classes of Lagrangian densities, seen as a subalgebra of $C^\infty(\mathcal{E})$ by integration of the Lagrangian density (see Definition \ref{def:localfunctionals} and Remark \ref{Rem: Stasheff}). The BV data is encoded by a compatible differential, which turns $\calF_\ham$ into a dgLa $(\calF_\ham,d_\ham,\{\cdot,\cdot\}_\ham)$.

The variational bicomplex was also used in \cite{BFLS} to construct a strong--homotopy Lie (a.k.a.\ $L_\infty$) algebra covering the Lie algebra of Hamiltonian local functionals, when symplectic data is given. One limit of the mentioned approach is that it assumes the existence of a resolution of the space of local functionals, which was explicitly shown to exist only for $M$ contractible. Moreover, the mentioned work fails to provide explicit expressions for the higher bracket of the resulting $L_\infty$ algebra.

In this paper we improve on that construction by providing, given a choice of homotopies for the variational bicomplex:
\begin{enumerate}
    \item A resolution of the space of local functionals $\calF$ for any (affine) bundle $E\to M$. We do this by explicitly constructing a deformation retract (Theorem \ref{thm:resolution}, see also Theorem \ref{Thm: DefRetractwithoutconstants}).
    \item An $L_\infty$ algebra, whenever $\mathcal{E}$ is endowed with a (weak) $k$-symplectic structure $\omega$, determined by a \emph{choice of theory} in the form of a local, cohomological vector field $Q\in\Xloc(\mathcal{E})$ (Theorem \ref{Thm: Linfty}). Such $L_\infty$ data is $L_\infty$ quasi-isomorphic to the dg Lie algebra of Hamiltonian functionals $(\calF_\ham,d_\ham,\{\cdot,\cdot\}_\ham)$, and it can be thought of as \emph{adapted} to the choice of a (BV) theory. This is done by promoting the deformation retract presented above to a $L_\infty$ quasi-isomorphism, for which an explicit quasi-inverse is provided (Lemma \ref{lem:quasiinverse}).
    \item An $L_\infty$ algebra that only depends on $(\mathcal{E},\omega)$, and is also $L_\infty$ quasi-isomorphic to Hamiltonian functionals (Proposition \ref{Prop: twisteq}). This is shown to be related to the above ($Q$-dependent) $L_\infty$ algebra \emph{via} twist.
    \item A dgL$[k]$ algebra which is $L_\infty$ 
    quasi-isomorphic to all of the above (Eq. \ref{e:Stasheffsres}).
    \item Explicit formulas for all $L_\infty$ morphisms above, and their quasi-inverses, for given choices of homotopies.
    \item A reformulation of the BV formalism in terms of the above constructions. Indeed, we show how the lift of the BV formalism to the variational bicomplex is encoded by the notion of Hamiltonian triples (Definition \ref{def:hamtriples}), and how natural constructions within the formalism provide us with a distinguished Maurer--Cartan element of the associated $L_\infty$ algebras (Theorems \ref{thm:Hamtriples} and \ref{thm:MCelement}). This recovers and interprets the approaches presented in \cite{CMR2014} and \cite{MSW}.
    \item We show how said lift of a BV theory within the variational bicomplex in terms of Hamiltonian triples is tied to the existence primitives of an extended BV multisymplectic form (Theorem \ref{thm:multisympBV} and Corollary \ref{cor:twistedmultisymplectic}). In Conjecture \ref{Conjecture} we propose that such data should yield a homotopy moment map on the solution space of the underlying Lagrangian field theory \cite{CalliesFregierRogersZambon,Bernardy_homotopy}. 
\end{enumerate}

It should be observed that several people worked on  questions related to the one we address here: Beyond the already mentioned \cite{BFLS}, we chiefly refer to \cite{BarnichBrandtHenneaux,BarnichHenneaux} among others. Our results are an improvement on the current state of the art. First, we close a gap in generality: the cone construction we implement in several places is a simple way to extend even the implicit results of \cite{BFLS} beyond contractible base manifolds. Moreover, one of the most important features that distinguishes our approach from the previous ones is that we do not use abstract results to conclude that certain (infinity)-structures simply exist by homotopy-transfer, but we rather use homotopies to give explicit formulas and control the induced structures. 

This is important because it allows us to \emph{classify} the choices involved in selecting an explicit presentation of the $L_\infty$ algebra underlying the Lagrangian field theory in the BV formalism (Theorem \ref{thm:Hamtriples}). In particular, since such choices are controlled by \emph{Hamiltonian triples}, we encode the data change in the notion of a triple redefinition (Definitions \ref{def:hamtriples} and \ref{def:tripleredef}).

From the point of view of physics, we note that the ``correct'' definition of Noether charges for Lagrangian field theory, which then leads to measurable physical quantities, is nontrivial and potentially ambiguous when phrased in standard language (see, only to name a few, \cite{FOPS, OdakRBSpeziale, AshtekarSpeziale_charges, spezialelectures} and the precursor of this analysis \cite{Wald_Zoupas}). This is because such definition requires the choice of primitives of the horizontal complex, and these choices are usually done by hand, and not systematically. Our results clarify how they instead can be made to descend from a single choice of homotopy, and are thus valid at all codimensions. This ultimately leads to the physically-relevant statement that said choices are ultimately immaterial. We believe this will have applications in the analysis of field-theoretic data on manifolds with boundaries and corners, see e.g.\ \cite{RielloSchiavina_PS}.

Furthermore, the next step in this program is to extend our considerations to the case in which a compatible associative product is given on the space of local functionals. In other words, one would like to discuss the local Poisson-infinity structure on functionals, along the lines of \cite{GwilliamRejzner_Obs}. This is necessary when thinking about some ``classical-to-quantum'' map of $L_\infty$ or even $P_\infty$ algebras (see. e.g.\ \cite{Pridham}).  In order to do this one has to define multilocal functionals, and establish the existence of Poisson data over them. This is technically  complicated by a number of challenges, both of algebraic nature (see \cite{FrabettiKravchenkoRyvkin}) as well as analytical \cite{HwkinsRejznerVisser_equicausal}. Moreover, deformation quantisation in infinite dimensions is a subtle problem, where only a few examples are known that rely on explicit formulas. In this sense our explicit, homotopy dependent, formulas, may serve as a technical tool in this regard.

We conclude this note by relating the BV formalism in the variational bicomplex to multisymplectic geometry \cite{gotay1998momentum,gotay2004momentum,rogers,BlohmannLFT}. 

Some of the structures that appear naturally in our investigations were also discovered in relation to characteristic classes for dg manifolds by \cite{KotovStrobl}. We note that, while some of the ``extended BV structures'' appearing in this context emerge naturally in the context of AKSZ theories, our analysis is general. One could argue that a large class of gauge theories can be given an AKSZ formulation, for instance following \cite{Grigoriev_parentLag}, this is however outside of the scope of our paper.

This manuscript is structured as follows: after a brief section on preliminaries about the variational bicomplex, homological algebra and $L_\infty$ algebras, we tackle the homological algebra of the variational bicomplex by establishing a number of useful deformation retracts in Section \ref{sec:homVBC}. These are then used in Section \ref{sec:Linfty} to build various explicit $L_\infty$ algebras and relations among them on the variational bicomplex. We then reformulate the BV formalism in these terms in Section \ref{sec:BVBFV}.

\section*{Acknowledgements}
We thank L.\ Vitagliano for providing comments on a draft and E.\ Getzler for insightful exchange. M.S.\ also thanks C.\ Blohmann and J.\ Bernardy for providing insight relevant to the completion of Section \ref{sec:multisymp}. We especially would like to thank F.\ Bonechi for suggesting that we consider the bracket $\{\cdot,\cdot\}^B$ defined in Proposition \ref{prop:bracket}, as it helped improve the content of our paper.
After completion we were told that I.\ Khavkine has a weaker version of Theorem \ref{Thm: Linfty}, point 2, which in particular states that $(C^\bullet_{\ham},\{\cdot,\cdot\}^B, D-\LQ)$ is a dgLa (i.e.\ without higher brackets). We could not find a published version of that statement, other than the notes of a talk by him.\footnote{It can be found here \href{https://users.math.cas.cz/~khavkine/talk-vienna-homotopy.pdf}{https://users.math.cas.cz/~khavkine/talk-vienna-homotopy.pdf}.} However, this is a simple consequence of Lemma \ref{Lem: PkillsJaco}, while our result mainly states that all $L_\infty$ algebras we construct are quasi-isomorphic (with explicit quasi-inverses).

\section{Preliminaries}

\subsection{Basics on local forms}
We fix a manifold $M$, equipped with a (possibly graded, see Section \ref{sec:BVBFV}) affine bundle $E\to M$. We denote the space of smooth sections of $E$ with $\mathcal{E}\doteq \Gamma(M,E)$.

The jet evaluation map is a smooth function into the infinite jet bundle \cite{Takens77,Zuckerman,Anderson,Delgado,BlohmannLFT}
\[
j^\infty\colon \mathcal{E}\times M \to J^\infty E.
\]
The complex of differential forms on $J^\infty E$ (which is thought as a pro-smooth manifold  \cite{Delgado}, a Diffeology \cite{BlohmannLFT}, or a Fr\'echet manifold \cite{krieglmichor}) splits into a bicomplex $\left(\Omega^{p,q}(J^\infty E),d_V^\infty,d_H^\infty\right)$, often called the \emph{variational bicomplex}, with respect to the vertical and horizontal differentials and, assuming $j^\infty$ is surjective, we can embed it (as a bicomplex) inside $\Omega^\bullet(\mathcal{E}\times M)$. 
\begin{definition}\label{def:localforms}
    The bicomplex of local forms is
    \[
        \left(\oloc^{\bullet,\bullet}(\mathcal{E}\times M)\doteq (j^\infty)^*\Omega^{\bullet,\bullet}(J^\infty E),  d_V  , d_H\right),
    \]
    where $ d_V  (j^\infty)^* \alpha = (j^\infty)^* d^\infty_V\alpha$ and $d_H (j^\infty)^* \alpha = (j^\infty)^* d^\infty_H\alpha$, with $d^\infty_V,d^\infty_H$ the vertical and horizontal differential in the variational bicomplex over $J^\infty E$. We will henceforth shorten $\oloc^{\bullet,\bullet}(\mathcal{E}\times M)\equiv \oloc^{\bullet,\bullet}$, and always understand $\mathcal{E}\times M$ when non-ambiguous.

    We also define:
    \begin{enumerate}
        \item The horizontal complex is $\Omega_H^\bullet \equiv \oloc^{0,\bullet}$ endowed with the horizontal differential $d_H$.
        \item The complexes of horizontal $p$-forms, for $p\geq 1$ are $(\oloc^{p,\bullet},d_H)$. 
        \item The higher horizontal complex is the direct sum over $p\geq 1$ of all complexes of higher horizontal $p$-forms, denoted $\left(\oloc^{\geq 1,\bullet},d_H\right)$.
    \end{enumerate}

    Given a vertically-homogeneous local form, of vertical form degree $p$, but inhomogeneous in the horizontal direction, we will denote it by $\bul{\alpha}\in\oloc^{p,\bullet}$. Furthermore, we denote by $\alpha^{k}$ its component of homogeneous \emph{co-form degree} $k$, i.e.\ $\alpha^k \in \oloc^{p,\top-k}$. 

    A local form $\bul{\alpha}\in\oloc^{p,\bullet}$ is called \emph{ultralocal} if it descends to a $(p,\bullet)$-form on $E$.

    A vector field $\bb{}\colon \mathcal{E}\to T\mathcal{E}$ is called local if it descends to a smooth bundle map covering the identity $X\colon J^kE \to VE$ for some $k$, where $VF\to M$ is the vertical bundle.
    \end{definition}    

    \begin{remark}
    A local vector field $\bb{}$ is often called \emph{evolutionary}, which is a terminology often reserved for the associated map $X$. Note that $X$ is not a vector field on $J^kE$ (see \cite[Section 5.1.4]{BlohmannLFT}). Importantly, local vector fields $X\in\Xloc(\mathcal{E})$ are such that $[\mathcal{L}_X,d_H] = 0$. Since $[d_H, d_V ]=0$ the previous condition is equivalent to $[\iota_X,d_H]=0$.
    \end{remark}

    \begin{theorem}[Anderson, Takens]\label{thm:sourceproj}
        There exists a unique operator $\Pi$ such that
        \begin{enumerate}
            \item $\bigoplus_{k=0}^{\top-1}\oloc^{p,k}\subseteq \ker \Pi$
            \item for all $\alpha\in \oloc^{p,\top}$, with $p\geq 1$, we have $\alpha - \Pi\alpha \in d\oloc^{p,\top-1}$,
            \item it is a projector: $\Pi^2 = \Pi$,
            \item it annihilates the image of the horizontal differential $\Pi \circ d_H=0$,
            \item it defines a differential $(\Pi \circ d_V)^2=0$.
        \end{enumerate}
        We call the image of $\Pi$ on $\oloc^{p,\bullet}$ the space of \emph{source} or \emph{functional} forms, and denote it by $\Omega_\src^{p,n}$.
    \end{theorem}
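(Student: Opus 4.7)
The plan is to construct $\Pi$ explicitly in local jet coordinates via integration by parts along the total derivative directions, verify the five properties from this construction (using a short algebraic argument for (5)), and then obtain uniqueness by comparing any candidate with this explicit model.

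First I would work in local adapted coordinates $(x^\mu, y^a_I)$ on $J^\infty E$ with contact $1$-forms $\theta^a_I = d y^a_I - y^a_{I\mu}\, dx^\mu$, and expand any $\alpha\in\oloc^{p,\top}$ as a sum of monomials of the form $P^{a_1 I_1 \cdots a_p I_p}\,\theta^{a_1}_{I_1}\wedge\cdots\wedge\theta^{a_p}_{I_p}\wedge d^n x$. Repeated integration by parts using the total derivatives $D_\mu$ (which implement $d_H$ on functions as $d_H f = D_\mu f\, dx^\mu$) moves all horizontal multi-index derivatives off one chosen contact factor modulo a $d_H$-exact remainder. Antisymmetrising over the $p$ contact factors produces the interior Euler operator $\Pi$; a direct change-of-coordinates computation using the contact filtration shows $\Pi$ is coordinate-independent, hence globally well-defined on $\oloc^{p,\top}$ and extended by zero on lower horizontal degrees, so that condition (1) is automatic.

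Conditions (2) and (4) are then the direct content of the integration-by-parts identity: the remainder $\alpha-\Pi\alpha$ is precisely the collected $d_H$-exact boundary term, and if $\alpha = d_H\beta$ those boundary terms telescope so that $\Pi\alpha=0$. Condition (3) is immediate, since a form already in the image of $\Pi$ has no horizontal derivatives on its contact factors and is therefore a fixed point of further integration by parts. Condition (5) then follows formally: properties (2) and (4) together yield the intertwining $\Pi\circ d_V\circ\Pi = \Pi\circ d_V$ on $\oloc^{p,\top}$, since if $\alpha-\Pi\alpha = d_H\gamma$ then $d_V\alpha - d_V\Pi\alpha = -d_H d_V\gamma$ is killed by $\Pi$ via (4); combining with $d_V^2=0$ gives $(\Pi d_V)^2 = \Pi d_V\Pi d_V = \Pi d_V^2 = 0$ (on lower horizontal degrees it is trivial by (1)).

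For uniqueness, any other candidate $\Pi'$ satisfies $\Pi\alpha-\Pi'\alpha\in d_H\oloc^{p,\top-1}$ by (2), while properties (3) and (4) force $\mathrm{Im}(\Pi)\cap d_H\oloc^{p,\top-1}=\{0\}$: if $\gamma=\Pi\gamma = d_H\beta$ then $\gamma = \Pi d_H\beta = 0$, and the analogous statement holds for $\Pi'$. Matching the two resulting direct-sum decompositions $\oloc^{p,\top}=\mathrm{Im}(\Pi)\oplus d_H\oloc^{p,\top-1} = \mathrm{Im}(\Pi')\oplus d_H\oloc^{p,\top-1}$ against the explicit local model identifies their common complement and yields $\Pi=\Pi'$. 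The main obstacle throughout is establishing coordinate-invariance of the interior Euler operator so that the locally defined $\Pi$ patches to a global object on $\oloc^{p,\bullet}$; once this is in place, all five axioms and the uniqueness claim reduce to essentially formal manipulations.
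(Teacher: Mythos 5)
The paper gives no proof of this theorem: it is quoted as a classical result of Anderson and Takens (the later Theorem~\ref{thm:sourcedecomposition} is likewise imported by pullback along $j^\infty$), so your attempt can only be measured against the standard construction in the literature. Your existence argument does follow that route (the interior Euler operator, built by integrating by parts against the total derivatives and antisymmetrising over the contact factors), and your derivation of (5) from (2), (4) and $d_V^2=0$ is correct. Two steps are glossed, though. Coordinate invariance, which you yourself flag as the main obstacle, is the actual content of the existence part and is not carried out. And your justification of (3) --- that an element of $\mathrm{Im}(\Pi)$ is ``a fixed point of further integration by parts'' because it ``has no horizontal derivatives on its contact factors'' --- is only accurate for $p=1$: for $p\geq 2$ the image of $\Pi$ consists of forms $\theta^a\wedge P_a$ whose factor $P_a$ still contains contact forms of arbitrary jet order, so idempotence needs the genuine computation with the antisymmetrised formula, not the stated heuristic.

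The genuine gap is the uniqueness argument. Properties (2) and (4) determine the kernel of $\Pi$ on $\oloc^{p,\top}$ (it is exactly $d_H\oloc^{p,\top-1}$) and hence force the decomposition $\oloc^{p,\top}=\mathrm{Im}(\Pi)\oplus d_H\oloc^{p,\top-1}$; but two projectors with the same kernel can have different images, and ``matching the two decompositions against the explicit local model'' presupposes what is to be proved. Concretely, if $\Pi$ satisfies (1)--(5) and $T$ is any linear map with $T=T\circ\Pi$, $\Pi\circ T=0$ and $\mathrm{Im}(T)\subseteq d_H\oloc^{\bullet,\top-1}$, then $\Pi'=\Pi+T$ is again a projector satisfying all five properties: (1)--(4) are immediate, and for (5) one checks $\Pi' d_V T d_V=0$ using (4), while $T d_V \Pi' d_V=T\Pi d_V\Pi d_V=0$ by (5) for $\Pi$ and $T=T\Pi$. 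So the listed axioms alone do not single out $\Pi$, and your argument cannot close as written. To finish, you must invoke an additional characterisation of the image --- e.g.\ the classical pointwise description of source forms as $\theta^a\wedge P_a$ with zeroth-order contact factor, or locality/naturality of the operator --- and prove that your explicit local construction is the unique candidate with that extra property.
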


\subsection{Basics on homological algebra}
Let us collect here some standard arguments and concepts from homological algebra. 

\begin{definition}
    Let $(A,d_A)$ and $(B,d_B)$ be two cochain complexes and let $f\colon A\to B$  be a cochain map. 
    \begin{itemize}
        \item $f$ is called quasi-isomorphism, if it induces an isomorphism in cohomology. 
        \item The cochain complex $(C(f)=A[1]\oplus B,d_f)$ with 
            \begin{align*}
                d_f(a,b)=(-d_Aa,d_Bb+f(a))
            \end{align*}
        is called the cone of $f$. 
        \item $f$ is called homotopic to a cochain map 
            $\wt{f}\colon A\to B$, if there exists a 
            degree $-1$ map $h\colon A\to B$, such that
                \begin{align*}
                    f-\wt{f}=h\circ d_A +d_B \circ h. 
                \end{align*}
        \item $f$ is a homotopy equivalence, if there exists 
        $g\colon B\to A$ such that $g\circ f$ is homotopic to 
        $\id_A$ and $f\circ g$ is homotopic to 
        $\id_B$. 
        Pictorially, we denote homotopy equivalences by 
            \begin{equation}
          \xymatrix{
      h_A\circlearrowright (A, d_A) \ar@<1ex>[r]^-{f} & (B,d_B)\ar@<1ex>[l]^-g \circlearrowleft h_B.
       }
      \end{equation}
      A homotopy equivalence is called deformation retract, if either $h_A=0$ or $h_B=0$. A deformation retract is said to be \emph{special} iff, in addition
      \[
      h_A^2 = h_A\circ g = f \circ h_A = 0.
      \]
    \end{itemize}
\end{definition}

In this paper we are mainly interested in quasi-isomorphisms and, in particular, homotopy equivalences and constructions built upon these. The reason for that is that in field theory, and in particular within the BV-formalism, the actual \emph{physical} data are often given in terms of cohomology, which are then obviously preserved by quasi-isomorphisms. 

\begin{lemma}
Let     
\begin{equation}
          \xymatrix{
      h_A\circlearrowright (A, d_A) \ar@<1ex>[r]^-{f} & (B,d_B)\ar@<1ex>[l]^-g \circlearrowleft h_B.
       }
      \end{equation}
be a homotopy equivalence, then $f$ and $g$ are quasi-isomorphism which are inverse to each other in cohomology. Moreover, let 
    \begin{equation}
          \xymatrix{
      \wt{h}_B\circlearrowright (B, d_B) \ar@<1ex>[r]^-{i} & (C,d_C)\ar@<1ex>[l]^-j \circlearrowleft h_C.
       }
      \end{equation}
be another homotopy equivalence, then 
     \begin{equation}
          \xymatrix{
      H_A\circlearrowright (A, d_A) \ar@<1ex>[r]^-{i\circ f} & (C,d_C)\ar@<1ex>[l]^-{g\circ j} \circlearrowleft H_C.
       }
       \end{equation}
with $H_A=h_A- g\circ \wt{h}_B\circ f$ and $H_C=h_C-i\circ h_B\circ j$ is a homotopy equivalence as well. 
\end{lemma}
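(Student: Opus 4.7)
The plan is to prove both statements directly from the definition of chain homotopy, relying on a single observation: two homotopic cochain maps induce the same map on cohomology, because the right-hand side $h\circ d_X + d_Y\circ h$ sends cocycles to coboundaries and hence vanishes on $H^\bullet$. No deeper tool is needed.

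For the first claim, I would apply this observation to the given homotopies $g\circ f\sim \id_A$ and $f\circ g\sim \id_B$. Passing to cohomology yields $g_*\circ f_* = \id_{H^\bullet(A)}$ and $f_*\circ g_* = \id_{H^\bullet(B)}$, so $f_*$ and $g_*$ are mutually inverse, and in particular both $f$ and $g$ are quasi-isomorphisms.

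For the second claim, the chain-map property of $i\circ f$ and $g\circ j$ is automatic from composing chain maps. The heart of the proof is the direct computation
\begin{align*}
(g\circ j)\circ(i\circ f) - \id_A
&= g\circ(j\circ i - \id_B)\circ f + (g\circ f - \id_A) \\
&= g\circ(\tilde h_B\circ d_B + d_B\circ \tilde h_B)\circ f + h_A\circ d_A + d_A\circ h_A,
\end{align*}
after which I would use the chain-map identities $d_B\circ f = f\circ d_A$ and $g\circ d_B = d_A\circ g$ to slide $d_B$ past $f$ and $g$, collecting the result into $d_A\circ H_A + H_A\circ d_A$ with $H_A = h_A \pm g\circ\tilde h_B\circ f$. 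The argument for $H_C$ on the $C$-side is verbatim symmetric, using instead $f\circ g\sim\id_B$ and $i\circ j\sim\id_C$.

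The only subtlety — and the main point requiring care — is bookkeeping of signs: the stated form $H_A = h_A - g\circ\tilde h_B\circ f$ corresponds to interpreting the homotopy relation as $\id - g\circ f = h_A\circ d_A + d_A\circ h_A$ (rather than $g\circ f - \id$ on the left), which is the standard convention for deformation retracts and matches the orientation of the pictorial loops $\circlearrowright$ in the statement. Once this sign is fixed consistently for both $h_A,\tilde h_B,h_B,h_C$, the computation above produces exactly the claimed $H_A$ and $H_C$; everything else is routine.
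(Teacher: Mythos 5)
The paper states this lemma without proof (it is treated as a standard fact of homological algebra), so there is no argument of the authors' to compare against; your strategy --- the cohomology argument for the first claim and the direct computation for the second --- is the standard and intended one, and the first claim and the algebraic skeleton of the second are fine.

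The problem is precisely the step you single out as ``the main point requiring care'': the sign does \emph{not} come out as you assert. Under the paper's own convention (Theorem \ref{thm:sourcedecomposition} and the $I_V,P_V$ theorem both read $\id - (\text{composite}) = [d,h]$, uniformly on both sides of the diagram), the computation gives
\begin{equation*}
\id_A - (g\circ j)\circ(i\circ f) \;=\; (\id_A - g\circ f) + g\circ(\id_B - j\circ i)\circ f \;=\; [d_A,h_A] + g\circ[d_B,\wt{h}_B]\circ f \;=\; [d_A,\; h_A + g\circ \wt{h}_B\circ f],
\end{equation*}
since sliding $d_B$ past the degree-$0$ chain maps $f$ and $g$ introduces no sign. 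The same happens if one instead reads the relation as $(\text{composite}) - \id = [d,h]$ for all four homotopies: any \emph{uniform} convention produces $H_A = h_A + g\circ\wt{h}_B\circ f$ and $H_C = h_C + i\circ h_B\circ j$, with a plus. Your claim that the stated minus sign ``corresponds to interpreting the homotopy relation as $\id - g\circ f = h_A\circ d_A + d_A\circ h_A$'' is therefore false; the minus sign in the statement can only be obtained by giving $\wt{h}_B$ and $h_B$ the opposite sign convention from $h_A$ and $h_C$ (or by replacing them with their negatives, which are of course also valid homotopies). This is harmless for everything downstream --- a homotopy is only ever used through $[d,h]$, and $-h$ works whenever $h$ does --- but a proof must either derive the plus sign honestly or state explicitly the mixed convention under which the minus sign of the lemma (and of the Corollary computing $H=H_{0^*}-P_V\circ\wt{h}^\nabla\circ I_V$) is correct. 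As written, your sign reconciliation is the one step of the argument that fails.
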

As a last general statement, we need the \emph{homological perturbation lemma}, which will prove useful for almost all of our construction, see \cite{crainic} for further details and references therein. 

\begin{theorem}[Homological pertubation lemma] 
\label{Thm: HomPtLem}
Let $f\colon A\to B$ and $g\colon B\to A$ be two cochain maps such that $f\circ g$ is homotopic to $\id_B$ via 
$h\colon B\to B$, i.e.\ 
    \begin{align*}
        \id_B-f\circ g= [d_B,h].
    \end{align*}
Moreover, let $k\colon B\to B$ be a degree $1$ map, such that 
$d_B+k$ is a differential. If $M=\id+kh$ is invertible, 
then we get for the maps 
    \begin{itemize}
    \item $\wt{f}=f+hM^{-1}kf$
    \item $\wt{g}=g+gM^{-1}kh$
    \item $\wt{h}=h+hM^{-1}kh$
    \item $\wt{d_A}=d_A + gM^{-1}kf$
    \end{itemize}
that $\wt{f}\colon (A,\wt{d}_A)\to (B,d_B+k)$ and $g\colon (B,d_B+k)\to (A,\wt{d}_A)$ are chain maps,
such that $\wt{f}\circ\wt{g}$ is homotopic to $\id_B$ via $\wt{h}$.
\end{theorem}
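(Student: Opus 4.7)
The plan is to verify the four claims (chain map conditions for $\wt{f}$ and $\wt{g}$, $\wt{d}_A^2 = 0$ implicit in calling it a differential, and the homotopy identity $\id - \wt{f}\wt{g} = [d_B+k, \wt{h}]$) by direct algebraic manipulation, relying throughout on a small set of resolvent-type identities that will make the bookkeeping tractable.

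First, I would establish the basic identities for $M = \id + kh$. From $M\circ M^{-1} = M^{-1}\circ M = \id$ one reads off
\[
M^{-1} = \id - khM^{-1} = \id - M^{-1}kh.
\]
Next, introducing the auxiliary operator $N = \id + hk$, the equalities $hM = Nh$ and $Mk = kN$ hold by inspection, so on the nose
\[
hM^{-1} = N^{-1}h, \qquad M^{-1}k = kN^{-1}.
\]
These \emph{swap identities} let me commute $h$ or $k$ past $M^{-1}$ at the cost of replacing $M$ by $N$, and they are the essential tool in every subsequent calculation.

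Second, I would verify that $\wt{d}_A$ squares to zero. Expanding
\[
\wt{d}_A^{\,2} = d_A^{\,2} + d_A\, gM^{-1}kf + gM^{-1}kf\, d_A + gM^{-1}kf\, gM^{-1}kf,
\]
using $d_A^{\,2} = 0$ and the chain-map relations $fd_A = d_Bf$, $d_Ag = gd_B$ to push differentials inside $g(\cdots)f$, then applying $[d_B,k] = -k^2$ (which is equivalent to $(d_B+k)^2 = 0$) and $\id - fg = [d_B,h]$, the inner expression collapses via the swap identities to zero. Third, the chain-map conditions $(d_B+k)\wt{f} = \wt{f}\,\wt{d}_A$ and $\wt{g}(d_B+k) = \wt{d}_A\,\wt{g}$ are verified by the same mechanism: in each case the ``new'' piece of the differential on one side is produced from the ``old'' commutator $[d_B, hM^{-1}kf]$ on the other, after using $\id - fg = [d_B,h]$ to convert $d_Bh$-terms into $\id - fg - hd_B$ and then reassembling with $M^{-1} = \id - M^{-1}kh$ to factor out $f$ (respectively $g$) on the right (respectively left).

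Fourth, I would check the homotopy identity $\id_B - \wt{f}\wt{g} = (d_B + k)\wt{h} + \wt{h}(d_B + k)$. Expanding the left-hand side as
\[
\id_B - fg - hM^{-1}kfg - fgM^{-1}kh - hM^{-1}kfgM^{-1}kh
\]
and the right-hand side by $d_Bh + hd_B = \id_B - fg$ together with $[d_B,k]=-k^2$, the degree-zero pieces match immediately; the remaining $k$-dependent contributions are reconciled by substituting $fg = \id - [d_B,h]$ in the quartic cross-term and applying the swap identities, so that the nested $M^{-1}$'s telescope against $\id - M^{-1}kh = M^{-1}$. The main obstacle throughout is not conceptual but notational: each expression is a nested composition involving $M^{-1}$, and without a disciplined use of the swap identities and $M^{-1}kh = \id - M^{-1}$ the computations balloon. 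With these identities in hand every step reduces to a short telescoping manipulation, which is what makes the perturbation lemma effective in practice.
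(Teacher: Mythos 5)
Your overall strategy --- the resolvent identities $hM^{-1}=N^{-1}h$ and $M^{-1}k=kN^{-1}$ with $N=\id+hk$, together with $M^{-1}=\id-M^{-1}kh$, followed by brute-force verification of the four identities --- is exactly the standard proof of the basic perturbation lemma, and it is the right tool. (The paper does not prove this theorem at all; it only cites Crainic, so there is no internal proof to compare against.) One small point: $hM=Nh$ yields $hM^{-1}=N^{-1}h$ only once you know $N$ is invertible; this does follow from invertibility of $M$, via $N^{-1}=\id-hM^{-1}k$, but it needs a line.

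The genuine gap is that everywhere you assert that terms ``collapse'' or ``are reconciled'' you have not actually performed the cancellation, and if you do perform it, it fails for the formulas as displayed in the theorem: the perturbation terms of $\wt{f}$, $\wt{g}$, $\wt{h}$ must carry a minus sign, i.e.\ $\wt{f}=f-hM^{-1}kf=N^{-1}f$, $\wt{g}=g-gM^{-1}kh=gM^{-1}$, $\wt{h}=h-hM^{-1}kh=hM^{-1}$, while $\wt{d}_A=d_A+gM^{-1}kf$ is correct as stated. This is precisely what the Neumann series in the remark following the theorem give (e.g.\ $h\sum_{i}(-kh)^i=hM^{-1}=h-hM^{-1}kh$, using $M^{-1}kh=\id-M^{-1}$), so the theorem and its own remark already disagree by a sign, which your write-up should have detected. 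A minimal test showing your step four cannot close as written: take $A=0$, $f=g=0$, $B$ concentrated in degrees $0,1$ with $d_Bh+hd_B=\id_B$ and $kh=c\,\id$ on $B^1$; then $[d_B+k,\,h+hM^{-1}kh]=(1+2c)\id_B\neq\id_B$ for $c\neq 0$, whereas $[d_B+k,\,hM^{-1}]=\id_B$. So either prove the sign-corrected formulas (with which your swap identities do make everything telescope) or explicitly flag and repair the discrepancy; as it stands, the proposal certifies identities that are false.
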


\begin{remark}
Throughout this paper, if we are in the situation 
of Theorem \ref{Thm: HomPtLem}, it is always the case that 
$\id+kh$ is invertible, because $kh$ is nilpotent, this means that the following sum is always finite 
    \begin{align*}
        (\id+kh)^{-1}=\sum_{i=0}^\infty (-kh)^i
    \end{align*}
and the corresponding perturbed maps are given by 
    \begin{itemize}
    \item $\wt{f}=(\sum_{i=0}^\infty (-hk)^i)f$,
    \item $\wt{g}=g(\sum_{i=0}^\infty (-kh)^i)$,
    \item $\wt{h}=h(\sum_{i=0}^\infty (-kh)^i)$ and 
    \item $\wt{d_A}=d_A + g(\sum_{i=0}^\infty (-kh)^i)kf$. 
    \end{itemize}
\end{remark}

\subsection{Basics on $L_\infty$ algebras}
$L_\infty$-algebras (also known as strong-homotopy Lie algebras) were first introduced in \cite{lada.stasheff:1993a,lada.markl:1994a} and play a prominent role in deformation theory.
In this section we recall the notions of $L_\infty$-algebras,
$L_\infty$-morphisms and their twists by Maurer--Cartan elements. 
For proofs and details we refer the reader to
\cite{kraft:2024}.

\begin{definition}($L_\infty$-algebra)\label{def:Linfty}
Let $\mathfrak{L}^\bullet$ be a graded vector space over $\mathbb{K}$. An \emph{$L_\infty$-structure} 
on $\mathfrak{L}^\bullet$ is a degree $+1$ coderivation $Q$ on the conilpotent
cocommutative coalgebra $S(\mathfrak{L}[1]^\bullet)$ cofreely cogenerated by $\mathfrak{L}[1]^\bullet$ such that $Q^2=0$. We call the pair $(\mathfrak{L}^\bullet,Q)$ an $L_\infty$-algebra.
\end{definition}

Here 
$S(\mathfrak{L}[1]^\bullet)$ denotes the  conilpotent cocommutative coalgebra cogenerated by a (graded) vector space 
$\mathfrak{L}[1]^\bullet$, which 
can be realised as the symmetrised deconcatenation
coproduct on the space $\bigoplus_{n\geq1}\bigvee^n\mathfrak{L}[1]^\bullet$
where $\bigvee^n\mathfrak{L}[1]^\bullet$ is the space of coinvariants for the
usual (graded) action of the symmetric group in $n$ letters $S_n$ on
$\otimes^n\mathfrak{L}^\bullet[1]^\bullet$, see e.g.\ \cite{kraft:2024}. Let us denote by 
\begin{equation}
	Q_n^k\colon \bigvee^n\mathfrak{L}[1]^\bullet\longrightarrow \bigvee^k\mathfrak{L}[1]^\bullet
\end{equation}
the components of the coderivation. 
 Any degree $+1$
coderivation $Q$ on $S(\mathfrak{L}[1]^\bullet)$ is then 
uniquely determined by the
components $Q^1_n$
through the formula 
\begin{equation}
	Q(\gamma_1\vee\ldots\vee\gamma_n)
	=
	\sum_{k=1}^n\sum_{\sigma\in\mbox{\tiny Sh($k$,$n-k$)}}
	\epsilon(\sigma)Q_k^1(\gamma_{\sigma(1)}\vee\ldots\vee
	\gamma_{\sigma(k)})\vee\gamma_{\sigma(k+1)}\vee
	\ldots\vee\gamma_{\sigma(n)}.
\end{equation} 
Here Sh($k$,$n-k$) denotes the set of $(k, n-k)$ shuffles in $S_n$, and
$\epsilon(\sigma)=\epsilon(\sigma,\gamma_1,\ldots,\gamma_n)$ is a sign
given by the rule $
\gamma_{\sigma(1)}\vee\ldots\vee\gamma_{\sigma(n)}=
\epsilon(\sigma)\gamma_1\vee\ldots\vee\gamma_n $.

The components $Q_n^1$ are often called higher brackets, and the condition $Q^2=0$ is equivalent to the higher Jacobi identities for the brackets
    \begin{equation}
	\label{eq:QsquaredZero}
	  \sum_{k=1}^n  \sum_{\sigma\in Sh(k,n-k)} 
		\epsilon(\sigma) 
		Q_{n-k+1}^1(Q_k^1(x_{\sigma(1)}\vee\cdots\vee x_{\sigma(k)})\vee 
		x_{\sigma(k+1)}\vee \dots \vee 	x_{\sigma(n)})
		=
		0.
	\end{equation}

    \begin{definition}
        Let $(\mathfrak{L}^\bullet,Q)$ be an $L_\infty$-algebra. An element $\pi\in \mathfrak{L}^1$ is called a \emph{Maurer--Cartan element} if it satisfies the equation
    	\begin{align*}
    	\sum_{k\geq 1} \frac{1}{k!}Q^1_k(\pi^{\vee k})=0.
    	\end{align*}
        We denote by $\mathsf{MC}(\mathfrak{L}^\bullet,Q)$ the set of Maurer--Cartan elements of the $L_\infty$-algebra $(\mathfrak{L}^\bullet,Q)$.
    \end{definition}

\begin{definition}[$L_\infty$-morphism]
Let us consider two $L_\infty$-algebras $(\mathfrak{L}^\bullet,Q)$ and
$(\wt{\mathfrak{L}}^\bullet,\wt{Q})$.  A degree $0$ 
coalgebra morphism
\begin{equation*}
	\Phi\colon 
	S(\mathfrak{L}[1]^\bullet)
	\longrightarrow 
	S(\wt{\mathfrak{L}}[1]^\bullet)
\end{equation*}
such that $\Phi Q = \wt{Q}\Phi$ is said to be a
$L_\infty$-morphism.
\end{definition} 

A coalgebra morphism $F$ from $S(\mathfrak{L}^\bullet)$ to
$S(\wt{\mathfrak{L}}^\bullet)$ is uniquely determined by its
components (also called \emph{Taylor coefficients})
\begin{equation*}
	\Phi_n^1\colon \bigvee^n\mathfrak{L}[1]^\bullet\longrightarrow \wt{\mathfrak{L}}[1]^\bullet,
\end{equation*}
where $n\geq 1$. Namely, we use the formula
\begin{equation*}
	\Phi(\gamma_1\vee\ldots\vee\gamma_n)=
\end{equation*}
\begin{equation}
	\label{coalgebramorphism}
	\sum_{p\geq1}\sum_{\substack{k_1,\ldots, k_p\geq1\\k_1+\ldots+k_p=n}}
	\sum_{\sigma\in \mbox{\tiny Sh($k_1$,..., $k_p$)}}\frac{\epsilon(\sigma)}{p!}
	\Phi_{k_1}^1(\gamma_{\sigma(1)}\vee\ldots\vee\gamma_{\sigma(k_1)})\vee\ldots\vee 
	\Phi_{k_p}^1(\gamma_{\sigma(n-k_p+1)}\vee\ldots\vee\gamma_{\sigma(n)}),
\end{equation}
where $\mathrm{Sh}(k_1,\dots,k_p)$ denotes the set of $(k_1,\ldots,
k_p)$-shuffles in $S_n$. For later use, we also denote 
    \begin{equation*}
	\Phi_n^k\colon \bigvee^n\mathfrak{L}[1]^\bullet\longrightarrow 
    \bigvee^k \wt{\mathfrak{L}}[1]^\bullet,
\end{equation*}
and therefore the condition of $\Phi$ being a morphism of $L_\infty$-algebras can be written as 
    \begin{align}
        \Phi^1_iQ^i_k=\tilde{Q}^1_i\Phi_k^i
    \end{align}
where we used Einstein's summation convention. Taking a Maurer-Cartan element 
$\pi\in \mathfrak{L}^1=\mathfrak{L}[1]^0$, we can use an $L_\infty$-morphism 
to obtain 
    \begin{align*}
        \Phi_\mathsf{MC}(\pi):=\sum_{k\geq 1}\frac{1}{k!} \Phi_k^1(\pi^{\vee k})
    \end{align*}
to obtain a Maurer-Cartan element in $\tilde{\mathfrak{L}}[1]$ (if we can make sense of the possibly infinite sum). 
Moreover, for a Maurer--Cartan element $\pi$ the structure maps 
    \begin{align*}
        Q^{1,\pi}_k(x_1\vee\dots\vee x_k):=
        \sum_{n=0}^\infty\frac{1}{n!} Q^{1}_{k+n}(\pi^{\vee n}\vee x_1\vee\dots\vee x_k)
    \end{align*}
define a new $L_\infty$-algebra structure on $\mathfrak{L}^\bullet$, called the $L_\infty$-algebra structure twisted by $\pi$. 
And we will denote the new $L_\infty$-algebra simply by $\mathfrak{L}^\pi$. 
If we have an $L_\infty$-morphism $\Phi\colon \mathfrak{L}^\bullet\to \tilde{\mathfrak{L}}^\bullet$ and a Maurer--Cartan element $\pi$, then the structure maps 
    \begin{align}\label{e:twistedmorphism}
        \Phi^\pi\colon  \mathfrak{L}^\pi\to \tilde{\mathfrak{L}}^{\Phi_{\mathsf{MC}}(\pi)},\qquad \Phi_k^{1,\pi}(x_1\vee\dots\vee x_k):=
        \sum_{n=0}^\infty\frac{1}{n!} \Phi^{1}_{k+n}(\pi^{\vee n}\vee x_1\vee\dots\vee x_k)
    \end{align}
define a morphism of $L_\infty$-algebras.

\begin{remark}
A dgL[k]a-structure (resp. $L_\infty[k]$-algebra structure) on a graded vector space $\mathfrak{L}^\bullet$ is the structure of a differential graded Lie algebra (resp. $L_\infty$-algebra) on $\mathfrak{L}^\bullet[k]$. 
If we have a $L_{\infty}[k]$-algebra $V$, then we have multilinear skew-symmetric maps $l_n\colon \bigwedge^n \mathfrak{L}^\bullet[k]\to \mathfrak{L}^\bullet[k]$ fulfilling a Jacobi identity, or equivalently by the maps 
    \begin{align*}
        \delta_n\colon S^n \mathfrak{L}^\bullet[k+1]\to \mathfrak{L}^\bullet[k+1]
    \end{align*}
given by the pre-concatenation with the décalage isomorphism which are the Taylor coefficients of the codifferential (cf.\ Definition \ref{def:Linfty}). 
\end{remark}

\section{Homotopies for local Lagrangian field theory}\label{sec:homVBC}
We lay here the groundwork that will lead to our formulation of local Lagrangian field theory. Our aim is to use the rich structure of the variational bicomplex to embed Lagrangian field theory in a consistent framework that allows us to work up to homotopy. 

We describe the space of equivalence classes of \emph{local Lagrangian densities} in terms of natural maps within the variational bicomplex, and build a resolution of it as a chain complex. In Section \ref{sec:Linfty} we will assign higher algebraic structures to the resolution built here, which will play an important role in field theory.

\subsection{Homotopies for local forms}
In \cite{BFLS} it was shown that, when $M$ is contractible and $E\to M$ is a vector bundle, the cone complex $\mathbb{R}[1]\oplus\oloc^{0,\bullet}$ for the canonical inclusion of constants into local forms is a resolution of the space of local functionals, defined as an appropriate space of equivalence classes of $(0,\top)$ local forms. In the mentioned literature, $(0,\top)$ forms are considered equivalent if they coincide on all sections of compact support (cf.\ Remark \ref{Rem: Stasheff}). In this section we review and extend the cited result, and we will give an equivalent, but sharper, definition of local functionals in terms of natural maps that arise from the cohomological analysis of the variational bicomplex (see Definition \ref{def:localfunctionals}).
Throughout the paper, if not stated differently, we always work with the \emph{total degree}, i.e.\ the sum of vertical, horizontal and internal degree (see also Section \ref{Sec:gradings} for further information on gradings).

We begin by stating a result by Anderson:
\begin{theorem}[\cite{Anderson}]\label{thm:sourcedecomposition}
    The higher horizontal complex $(\oloc^{\geq 1,\bullet},d_H)$ is a deformation retract of the trivial complex of $(\geq 1,\top)$ source forms:
\begin{equation}
\xymatrix{
      h^\nabla \circlearrowright (\oloc^{\geq 1,\bullet},d_H) \ar@<1ex>[r]^-{\Pi} & \Omega_\src^{\geq 1, n} \ar@<1ex>[l]^-I 
    }, \qquad [d_H,h^\nabla]+I\circ \Pi=\id
\end{equation}
where $h^\nabla$ is Anderson's global homotopy (depending on a symmetric connection $\nabla$), the map $\Pi$ denotes the projector onto source forms of Theorem \ref{thm:sourceproj}, and $I$ is the trivial inclusion. Moreover:
    \begin{align*}
        \Pi\circ I=\id, \qquad \Pi\circ h^\nabla=0. 
    \end{align*}
\end{theorem}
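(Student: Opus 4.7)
The plan is to follow Anderson's variational bicomplex construction, where $h^\nabla$ is built by iterated integration by parts against the symmetric connection $\nabla$ (the higher Euler operators). I would first recall the explicit local formula for $h^\nabla\colon \oloc^{p,q}\to\oloc^{p,q-1}$ on jet coordinates adapted to $\nabla$: for a horizontally-homogeneous form $\alpha$ one writes out its coefficients in a symmetric multi-index expansion and then applies the inverse of the (total) divergence to move total derivatives off the vertical factors, weighted by combinatorial coefficients given by binomial/Bernoulli-type numbers. The role of the symmetric $\nabla$ is to produce a globally defined operator that agrees fibrewise with the flat local formula.

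Once $h^\nabla$ is in hand, two of the three identities are immediate. The equality $\Pi\circ I=\id$ is tautological, since source forms lie in $\oloc^{p,\top}$ and $\Pi$ is a projector (Theorem \ref{thm:sourceproj}(iii)) whose image is precisely the space of source forms. For $\Pi\circ h^\nabla=0$, observe that $h^\nabla$ strictly decreases horizontal degree by one, so its image lies in $\bigoplus_{k\leq\top-1}\oloc^{p,k}\subseteq\ker\Pi$ by Theorem \ref{thm:sourceproj}(i).

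The main content is the homotopy identity $[d_H,h^\nabla]+I\circ\Pi=\id$, which I would verify by splitting on horizontal degree. For $q<\top$, one has $\Pi\alpha=0$, and the identity reduces to $d_H h^\nabla\alpha+h^\nabla d_H\alpha=\alpha$, i.e.\ the algebraic Poincaré lemma for $d_H$ in vertical degree $p\geq 1$ and sub-top horizontal degree. For $q=\top$, one has $d_H\alpha=0$ for horizontal-degree reasons, and the identity becomes $d_H h^\nabla\alpha=\alpha-I\Pi\alpha$. This is precisely Anderson's Euler--Lagrange decomposition of a Lagrangian-type form into its source (Euler--Lagrange) part plus a total horizontal divergence, exhibiting the explicit primitive at the cochain level.

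The main obstacle is the explicit combinatorial verification of the two cases above: both come down to matching the higher Euler operators against iterated applications of $d_H$ on jet coordinates, and the cancellations among binomial coefficients are nontrivial. The vertical degree hypothesis $p\geq 1$ is essential — it ensures there is always at least one vertical factor on which to perform the integration by parts — which is why the statement is restricted to $\oloc^{\geq 1,\bullet}$ and why the result is not merely the horizontal Poincaré lemma. Once these identities are established degree-by-degree, it follows immediately that $(\Omega_\src^{\geq 1,n},0)$ (trivially a complex, since $d_H$ would overshoot the top) sits inside $(\oloc^{\geq 1,\bullet},d_H)$ as a (special) deformation retract with the stated data.
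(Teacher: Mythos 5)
Your proposal is correct in structure, but it takes a genuinely different route from the paper. The paper's proof is a one-liner: Anderson proves this theorem on the variational bicomplex $\Omega^{\bullet,\bullet}(J^\infty E)$, and since all the operators involved ($\Pi$, $I$, $h^\nabla$) are local, they descend through the pullback $(j^\infty)^*$ that defines $\oloc^{\bullet,\bullet}(\mathcal{E}\times M)$, so the statement transfers verbatim. You instead sketch a reconstruction of Anderson's argument itself: the two easy identities ($\Pi\circ I=\id$ from $\Pi^2=\Pi$ and $\im\Pi=\Omega_\src$; $\Pi\circ h^\nabla=0$ from the degree count and $\ker\Pi\supseteq\bigoplus_{k<\top}\oloc^{p,k}$) are handled exactly as one should, and your split of the homotopy identity into the sub-top case (acyclicity of the interior rows, realised by the global homotopy) and the top case (the Euler--Lagrange decomposition $\alpha=I\Pi\alpha+d_Hh^\nabla\alpha$, i.e.\ item (2) of Theorem \ref{thm:sourceproj}) is precisely how Anderson's proof is organised. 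What your route buys is self-containedness and an explicit handle on $h^\nabla$; what it costs is that the entire combinatorial core --- the verification that the connection-corrected higher Euler operators actually satisfy the homotopy identity --- is deferred rather than carried out, whereas the paper sidesteps it entirely by citation.

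Two caveats. First, you never address the passage from $J^\infty E$ to $\mathcal{E}\times M$; since $\oloc^{\bullet,\bullet}$ is \emph{defined} as $(j^\infty)^*\Omega^{\bullet,\bullet}(J^\infty E)$, the locality of the three operators (hence their compatibility with $(j^\infty)^*$) is the one step the paper's proof actually supplies, and it should appear in yours too. Second, you assert at the end that the retract is \emph{special}; by the paper's definition this would additionally require $(h^\nabla)^2=0$ and $h^\nabla\circ I=0$, neither of which is claimed in the statement, so you should either prove these or drop the adjective.
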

\begin{proof}
    Anderson proves an analogous theorem in the variational bicomplex $\Omega^{\bullet,\bullet}(J^\infty E)$. Since all maps are local in his construction, we can simply pullback along $j^\infty$ and obtain the desired result.
\end{proof}

Now, using the homological perturbation lemma (see \cite{crainic}), 
we can perturb $d_H$ in  the direction $d_V$ to obtain the de Rham differential (the total differential on the variational bicomplex), 
\begin{proposition}
    There is a deformation retract
\begin{equation}\label{e:perturbeddefretract}
\xymatrix{
      \wt{h}^\nabla \circlearrowright (\oloc^{\geq 1,\bullet},d) \ar@<1ex>[r]^-{\wt{\Pi}_\src} & (\Omega_\src^{\geq 1, n},\wt{d}_V) \ar@<1ex>[l]^-{\wt{I}} 
    }
\end{equation} where 
    \begin{itemize}
        \item $\wt{\Pi}_\src=\Pi\sum_{k=0}^{\dim M}(-d_Vh^\nabla)^k= \Pi$
        \item $\wt{I}=(\sum_{k=0}^{\dim M} (-h^\nabla d_V)^k)I$
        \item $\wt{h}^\nabla=h^\nabla\sum_{k=0}^{\dim M} (-d_Vh^\nabla)^k $
        \item $\wt{d}_V=\Pi(\sum_{k=0}^{\dim M} (-d_Vh^\nabla)^k) d_V I=\Pi d_V I$.
    \end{itemize}
\end{proposition}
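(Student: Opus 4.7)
The plan is to apply the homological perturbation lemma (Theorem \ref{Thm: HomPtLem}) directly to the deformation retract of Theorem \ref{thm:sourcedecomposition}, taking the perturbation of $d_H$ to be $k = d_V$. First I would identify the data: set $A = \Omega_\src^{\geq 1, n}$, equipped with the zero differential since it is concentrated in horizontal degree $\top$, take $B = (\oloc^{\geq 1, \bullet}, d_H)$, and use the quasi-isomorphisms $f = I$, $g = \Pi$ together with the homotopy $h = h^\nabla$. The identity $[d_H, h^\nabla] = \id - I \circ \Pi$ from Theorem \ref{thm:sourcedecomposition} is exactly the hypothesis of Theorem \ref{Thm: HomPtLem}. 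The choice $k = d_V$ is admissible because $d_V^2 = 0$ and $[d_H, d_V] = 0$ in the variational bicomplex, so $d_H + d_V$ is the full de Rham differential on $\oloc^{\geq 1, \bullet}$.

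Next I would check the invertibility of $M = \id + d_V h^\nabla$ via nilpotency. Since $h^\nabla$ has bidegree $(0,-1)$ and $d_V$ has bidegree $(+1, 0)$, the composition $d_V h^\nabla$ has bidegree $(+1, -1)$, so on $\oloc^{p,q}$ the iterate $(d_V h^\nabla)^k$ lands in $\oloc^{p+k, q-k}$ and necessarily vanishes once $k > \dim M$. Therefore
\[
M^{-1} = \sum_{i=0}^{\dim M}(-d_V h^\nabla)^i
\]
is a finite sum, and the remark after Theorem \ref{Thm: HomPtLem} produces the perturbed maps $\wt{I}$, $\wt{\Pi}_\src$, $\wt{h}^\nabla$, $\wt{d}_V$ in exactly the form displayed in the proposition (with the infinite series truncated at $k = \dim M$ by the same nilpotency bound). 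By the conclusion of Theorem \ref{Thm: HomPtLem}, this is automatically a deformation retract between $(\oloc^{\geq 1, \bullet}, d)$ and $(\Omega_\src^{\geq 1, n}, \wt{d}_V)$.

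Finally I would justify the two cosmetic simplifications $\wt{\Pi}_\src = \Pi$ and $\wt{d}_V = \Pi \circ d_V \circ I$. Both reduce to the property, built into Theorem \ref{thm:sourceproj}(1), that $\Pi$ annihilates any form of horizontal degree strictly less than $\top$. Since $(d_V h^\nabla)^k$ strictly lowers horizontal degree for every $k \geq 1$, the composition $\Pi \circ (d_V h^\nabla)^k$ vanishes identically on $\oloc^{\geq 1, \bullet}$ for $k \geq 1$. This collapses the sum in $\wt{\Pi}_\src$ to its $k=0$ term, and analogously the sum $\wt{d}_V = \Pi\bigl(\sum_{k=0}^{\dim M}(-d_V h^\nabla)^k\bigr) d_V I$ collapses to $\Pi \circ d_V \circ I$. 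The only real bookkeeping in the argument is tracking these bidegrees through the compositions; aside from this, the proof is a mechanical invocation of Theorem \ref{Thm: HomPtLem}.
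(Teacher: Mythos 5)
Your proposal is correct and follows essentially the same route as the paper: a direct application of the homological perturbation lemma to the retract of Theorem \ref{thm:sourcedecomposition} with perturbation $d_V$, nilpotency of $d_Vh^\nabla$ established by the bidegree count (the horizontal degree is bounded below), and the collapse of $\wt{\Pi}_\src$ and $\wt{d}_V$ justified by $\Pi$ vanishing away from horizontal degree $\top$. The paper additionally notes in passing that $\Pi\circ\wt{I}=\id$ (which is not automatic from the perturbation lemma), but that is not part of the stated proposition.
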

\begin{proof}
    This is a standard application of the homological perturbation lemma, after the observation that the map $d_V h^\nabla\colon \oloc^{s,r} \mapsto\oloc^{s+1,r-1}$ is nilpotent, since $\oloc^{s,-i}=0$ for all $i\geq 1$. Using the formulas from the homological pertubation lemma, we get the sought deformation retract.
    
    Moreover, in the first and the last point of the statement we used that $\Pi$ vanishes if the horizontal degree is not $\dim M$. Moreover, this implies also that $\Pi\circ\wt{I}=\id$, which does not automatically follow from the homological perturbation lemma. 
\end{proof}

\begin{remark}
    If we consider the vertical differential as a cochain morphism 
    \[
    d_V\colon (\ohor^\bullet[-1],-d_H)\to (\oloc^{\geq 1,\bullet},d)
    \]
    then its cone $C(d_V)$ is canonically isomorphic to 
    $\oloc^{\bullet,\bullet}$, which is quasisomorphic to $(\Omega^\bullet(E),d)$. On the other hand, by the discussion before, the cone of $d_V$ as described above is quasi-isomorphic to the cone of $\Pi\circ d_V\colon (\ohor^\bullet[-1],-d_H)\to (\Omega_\src^{\geq 1,n},\wt{d}_V)$, since $\Pi\colon \oloc^{\geq 1,\bullet} \to \Omega_{\src}^{\geq 1,\bullet}$ is a quasisomorphism in virtue of Theorem \ref{thm:sourcedecomposition}. We obtain the cone $C(\Pi\circ d_V)\doteq\ohor^\bullet\oplus \Omega_\src^{\geq 1,n}$, endowed with the differential 
        \begin{align*}
            d_C(\alpha,\beta)=(d_H\alpha,\wt{d_V}\beta+\Pi d_V \alpha)
        \end{align*}
    Hence, comparing degrees of $\ohor^\bullet$ and $\Omega_\src^{\geq 1, n}$, one can see that 
        \begin{equation}
            \begin{cases}
                \mathrm{H}^i_\mathrm{H}=\mathrm{H}^i(E) & \text{for } i\in \{0,\dots,\dim M-1\}\\
                \frac{\ker \Pi\circ d_V\colon \ohor^{n}\to \Omega_\src^{1,n}}{\im d_H\colon \ohor^{n-1}\to \ohor^{n}} =\mathrm{H}^n(E) & \text{for } i=n \\
                \mathrm{H}_\src^j=\mathrm{H}^j(E)& \text{for } j>n
            \end{cases}
        \end{equation}
    This proves a statement appearing in (the unpublished notes) \cite[Theorem 5.2.6]{BlohmannLFT}, which also clarifies a statement appearing in \cite[Theorem 5.9]{Anderson}. In the literature, the operator $\Pi$ is often called the interior Euler operator, and $\Pi\circ d_V$ is called the exterior Euler operator.
\end{remark}

Let us now assume that $E\to M$ and thus also $J^\infty E$ is a vector bundle over $M$.
\begin{theorem}[Anderson]\label{thm:verticalhomotopy}
    There exists a special deformation retract
    \[
    \xymatrix{
      h_V \circlearrowright (\oloc^{\bullet,\bullet},d_V) \ar@<1ex>[r]^-{0^*} & (\Omega^{\bullet}(M),d_M) \ar@<1ex>[l]^-{p^*} 
    }
    \]
    where we denoted the $p\colon J^\infty E \to M$ the vector bundle projection and by $0\colon M\to J^\infty E$ the zero-section. 
    
    Moreover, the \emph{vertical homotopy} $h_V\colon \oloc^{\bullet,\bullet}\to \oloc^{\bullet-1,\bullet}$ commutes with the horizontal differential, so that 
    \begin{align*}
        [d,h_V]+p^*0^*=[d_V,h_V]+p^*0^*=\id
    \end{align*}
    and the special deformation retract conditions 
    \[
    h_V^2=h_Vp^* = 0^*h_V = 0
    \]
    are satisfied.
\end{theorem}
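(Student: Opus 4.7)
The plan is to exploit the vector bundle structure of $E\to M$ to build $h_V$ as the Poincar\'e-type homotopy associated with fiberwise radial contraction onto the zero section. Since $J^\infty E\to M$ is a vector bundle (jet prolongation of a vector bundle being linear), there is a well-defined fiberwise scaling $\phi_t\colon J^\infty E\to J^\infty E$, $\phi_t(j^\infty\psi)=j^\infty(t\psi)$, interpolating between $\phi_1=\id$ and $\phi_0 = p\circ 0$. Let $Z$ denote the vertical Euler field generating this family. Since $Z$ and $\phi_t$ are local and compatible with $j^\infty$ (this is exactly where the vector bundle hypothesis is used), every construction below descends from $\Omega^{\bullet,\bullet}(J^\infty E)$ to $\oloc^{\bullet,\bullet}$.

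I would then define, for $\alpha$ of positive vertical degree,
\[
h_V(\alpha) := \int_0^1 \phi_t^*(\iota_Z\alpha)\,\frac{dt}{t},
\]
and set $h_V|_{\oloc^{0,\bullet}}=0$ (where $\iota_Z$ vanishes anyway). Convergence at $t=0$ is automatic because $\phi_t^*\iota_Z$ carries an extra factor of $t$ on forms of positive vertical weight, as one sees in local jet coordinates. The homotopy identity then follows from Cartan's magic formula applied under the integral:
\[
\alpha - \phi_0^*\alpha = \int_0^1 \frac{d}{dt}\phi_t^*\alpha\,dt = \int_0^1 \frac{1}{t}\phi_t^*\mathcal{L}_Z\alpha\,dt = \int_0^1 \frac{1}{t}\phi_t^*(d_V\iota_Z + \iota_Z d_V)\alpha\,dt,
\]
where the reduction $\mathcal{L}_Z = d_V\iota_Z + \iota_Z d_V$ for the vertical field $Z$ uses that the horizontal piece $d_H\iota_Z+\iota_Z d_H$ must vanish by bidegree reasons, in line with the remark after Definition~\ref{def:localforms}. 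Pulling $d_V$ outside the integral (it commutes with $\phi_t^*$ and with $t$-integration) identifies the right hand side with $[d_V,h_V]\alpha$, yielding $\id - p^*0^* = [d_V,h_V]$.

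The special deformation retract conditions come essentially for free. The relation $h_V\circ p^* = 0$ is immediate because $Z$ is vertical, so $\iota_Z$ annihilates any form pulled back from $M$. For $0^*\circ h_V = 0$, note that $\phi_t\circ 0 = 0$ for all $t$, so the integrand collapses to $0^*(\iota_Z\alpha)$, which vanishes since $Z$ vanishes along the zero section. For $h_V^2 = 0$, the Euler field satisfies $\phi_{t*}Z = Z\circ\phi_t$, so $\iota_Z$ commutes with $\phi_t^*$, and the double integral defining $h_V^2$ contains $\iota_Z^2=0$.

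Finally, the commutation $[d_H, h_V]=0$ — the crucial compatibility for the applications in later sections — follows because $\phi_t$ is a bundle morphism over $M$ (hence preserves the bigrading and commutes with $d_H$); the Euler field $Z$ is a local vertical vector field, so $[d_H,\iota_Z]=0$; and $d_H$ commutes with $t$-integration. I expect the main subtlety not to be any single step — all are classical once $E$ is a vector bundle — but in carefully tracking that every operation descends along $j^\infty$ to $\oloc^{\bullet,\bullet}$. This hinges on the compatibility of $\phi_t$ with $j^\infty$, which is exactly what the vector bundle hypothesis on $E$ provides and fails for a general affine bundle, a point worth flagging since the preceding constructions in the paper only required the affine structure.
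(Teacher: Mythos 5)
The paper gives no proof of this theorem, attributing it to Anderson; your construction of $h_V$ as the fiberwise radial-contraction homotopy $\int_0^1\phi_t^*(\iota_Z\,\cdot\,)\,\frac{dt}{t}$ is precisely Anderson's argument, and the verifications you give (integrability at $t=0$ via the extra factor of $t$ on positive vertical weight, the Cartan-formula computation, the three side conditions from $\iota_Z p^*=0$, $Z\circ 0=0$, $\iota_Z^2=0$ with $\phi_t^*\iota_Z=\iota_Z\phi_t^*$, and $[d_H,h_V]=0$) are all correct, as is your flag that the vector bundle hypothesis is what makes $\phi_t$ descend along $j^\infty$. One small slip: $[d_H,\iota_Z]=0$ does not hold \emph{by bidegree reasons} (that anticommutator has bidegree $(-1,+1)$, which is a perfectly legitimate nonzero bidegree on the bicomplex); it holds because $Z$ is the prolongation of an evolutionary vector field, which is exactly the content of the remark you cite, so the conclusion stands.
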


\begin{remark}[Extension to affine bundles]
    Theorem \ref{thm:verticalhomotopy} and the rest of our paper can be generalised to the case of affine bundles, by replacing the zero section $0\colon M \to J^\infty E$ with the jet prolongation $j^\infty s_0$ of any reference section $s_0\colon M \to E$. This is particularly relevant for it allows us to generalise the discussion to standard \emph{gauge} theories, usually formulated in terms of connections on a principal bundle $P\to M$, since they can also be seen as sections of the affine bundle $E=TP/G$ \cite{Kobayaschi,MangiarottiSardanashvily2000}. Throughout, we shall keep the notation introduced above for simplicity.
\end{remark}

Let us now introduce a new complex:
\begin{definition}
    The horizontal cone is $C^\bullet(p^*)\doteq \Omega(M)^\bullet[1]\oplus  \ohor^\bullet$ with differential
    \begin{align*}
       D\colon C^\bullet(p^*)\to C^{\bullet+1}(p^*),\qquad D(\alpha,\omega)=(d\alpha,-d_H\omega+p^*\alpha).
    \end{align*}
    Denote by $\pi_2\colon C^\bullet(p^*)\to \ohor^\bullet$.
\end{definition}

\begin{theorem}
    There is a homotopy equivalence
    \[
    \xymatrix{
    H_{0^*} \circlearrowright(\Omega(M)^\bullet[1]\oplus  \ohor^\bullet,D)  \ar@<1ex>[r]^-{I_V}    
        &
    (\oloc^{\geq 1,\bullet}[1],d) \circlearrowleft H_V \ar@<1ex>[l]^-{P_V} 
    }
    \]
    where $I_V=\pi_2(0\oplus d_V)$ and $P_V=i_2\circ h_V\circ p^{1,\bullet}$, with $p^{1,\bullet}$ the projection to forms of vertical degree exactly $1$, while 
    \[
    H_{0^*}\colon (\Omega(M)[1]\oplus  \ohor)^\bullet \to  (\Omega(M)[1]\oplus  \ohor)^{\bullet-1}, \quad (\alpha,\omega)\mapsto (0^*\omega,0)
    \]
    and
    \[
    H_V\colon \oloc^{p\geq 1,\bullet}\to \oloc^{p\geq 1,\bullet}, \qquad 
    H_V=\begin{cases}
        h_V & p>1\\
        0 & \text{else}
    \end{cases}.
    \]
    In particular, then
    \begin{align*}
        [d,H_V]+ I_V\circ P_V =\id, \qquad
        [D,H_{0^*}] + P_V\circ I_V =\id.
    \end{align*}
\end{theorem}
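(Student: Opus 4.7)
My approach is a direct verification: after checking that $I_V$ and $P_V$ have the right degree and are chain maps, I will compute the compositions $P_V\circ I_V$ and $I_V\circ P_V$ explicitly, and then match the defects against $[D,H_{0^*}]$ and $[d,H_V]$ using the vertical special deformation retract of Theorem \ref{thm:verticalhomotopy}.

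First I would set up the bookkeeping. Since $d_V$ raises the vertical degree by one, the shift $[1]$ on the target makes $I_V$ a degree $0$ map, and the constraint that $P_V$ land in $(\Omega(M)[1]\oplus\ohor^\bullet)$ in matching degree forces $P_V$ to be supported on the vertical degree $1$ component, as indicated by the projector $p^{1,\bullet}$. The chain map conditions then reduce to the standard identities $d_Vd_H+d_Hd_V=0$, $d_Vp^*=0$ (since $p^*\alpha$ is horizontal), $d_Hp^*=p^*d_M$ and $[d_H,h_V]=0$; these give $d\circ I_V=I_V\circ D$ and $D\circ P_V=P_V\circ d$ on the nose (up to the sign conventions implicit in the shift $[1]$).

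Next I would verify the cone identity $[D,H_{0^*}]+P_V\circ I_V=\id$. A direct computation yields
\[
[D,H_{0^*}](\alpha,\omega)=\bigl(0^*d_H\omega-0^*d_H\omega+0^*p^*\alpha,\; p^*0^*\omega\bigr)=(\alpha,\; p^*0^*\omega),
\]
using $0^*d_H=d_M 0^*$ and $0^*p^*=\id_M$. On the other side, since $d_V\omega$ already has vertical degree $1$, we have $P_V I_V(\alpha,\omega)=(0,h_V d_V\omega)$, and Theorem \ref{thm:verticalhomotopy} gives $h_V d_V\omega=\omega-p^*0^*\omega$ on the vertical-degree-$0$ piece $\ohor^\bullet$ (where $h_V$ vanishes on the range, so only one term in $[d_V,h_V]$ survives). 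Summing, the two $p^*0^*\omega$ contributions cancel and one obtains $(\alpha,\omega)$.

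For the second identity $[d,H_V]+I_V\circ P_V=\id$ on $\oloc^{\geq 1,\bullet}[1]$, I would split on the vertical degree $p$ of $\eta\in\oloc^{p,\bullet}$. For $p\geq 2$, both $H_V$ and $h_V d_V$, $h_V d_H$ agree, so $[d,H_V]\eta=[d_V,h_V]\eta+[d_H,h_V]\eta=\eta-p^*0^*\eta=\eta$, using $[d_H,h_V]=0$ and the key vanishing $p^*0^*=0$ on $\oloc^{\geq 1,\bullet}$ (pullback along the section $0$ annihilates strictly vertical forms); meanwhile $P_V\eta=0$ since $\eta$ has vertical degree $\geq 2$. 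For $p=1$, $H_V\eta=0$ by definition, so only $H_V d\eta=h_V d_V\eta$ contributes from the commutator; combined with $I_VP_V\eta=d_Vh_V\eta$, the sum reproduces $[d_V,h_V]\eta=\eta$, again by the vanishing $p^*0^*\eta=0$. The main subtlety — and the place I would expect sign-conventions to bite — is precisely this $p=1$ boundary case, where the deliberate redefinition $H_V=0$ (as opposed to $h_V$) is what forces $I_VP_V$ to make up the missing piece of the homotopy, so one must match the vanishing $H_V|_{\oloc^{1,\bullet}}=0$ with the support of $P_V$ exactly on the same stratum.
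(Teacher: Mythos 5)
Your proposal is correct, and it fills in exactly the computation the paper leaves implicit: the paper's own proof is the one-line appeal ``this is a consequence of the general construction'' (i.e.\ the standard cone/deformation-retract machinery applied to the vertical special deformation retract of Theorem \ref{thm:verticalhomotopy}), and your direct verification of the two homotopy identities --- including the cancellation of the $p^*0^*\omega$ terms in the cone identity and the correct handling of the $p=1$ boundary case where $H_V=0$ and $I_V\circ P_V$ supplies the missing $[d_V,h_V]$ term --- is precisely the unpacking of that construction. The only points to keep an eye on are the sign conventions ($[d_H,h_V]=0$ as a graded commutator, and the minus sign on $d_H$ in $D$), which you flag and which do work out as you computed.
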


\begin{proof}
This is a consequence of the general construction.\end{proof}

This tells us that, using general results
\begin{corollary}
    There is a diagram of homotopy equivalences 
    \[
    \xymatrix{
    (\Omega(M)^\bullet[1]\oplus  \ohor^\bullet,D) \ar@<1ex>[r]^-{I_V}    
        &
    (\oloc^{\geq 1,\bullet}[1],d)  \ar@<1ex>[l]^-{P_V} \ar@<1ex>[r]^-{\Pi} 
            & (\Omega_\src^{\geq 1, n}[1],\wt{d}_V) \ar@<1ex>[l]^-{\wt{I}},
    }
    \]
    which yields a homotopy equivalence
    \[   
        \xymatrix{
          H \circlearrowright (\Omega(M)^\bullet[1]\oplus  \ohor^\bullet,D)\ar@<1ex>[r]^-{\Pi\circ I_V} & (\Omega_\src^{\geq 1, n}[1],\wt{d}_V) \ar@<1ex>[l]^-{P_V\circ\wt{I}} \circlearrowleft \wt{h}_V
        }.
    \]
    where the homotopies are given by 
    \begin{align*}
        H=H_{0^*} - P_V\circ \wt{h}^\nabla\circ I_V \colon (\Omega(M)[1]\oplus  \ohor)^\bullet \to (\Omega(M)[1]\oplus  \ohor)^{\bullet-1}
    \end{align*}
    and
    \begin{align*}
        \wt{h}_V=-\Pi \circ H_V \circ \wt{I}
        =-\Pi \circ H_V \circ I
        \colon \Omega_\src^{\geq 1, n}[1]^\bullet \to \Omega_\src^{\geq 1, n}[1]^{\bullet-1}.
    \end{align*}
\end{corollary}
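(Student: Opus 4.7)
The plan is to invoke the composition lemma for homotopy equivalences stated in the preliminaries, applied to the two homotopy equivalences already in hand. The first is the one established in the preceding theorem, with structure maps $I_V, P_V$ and side homotopies $H_{0^*}, H_V$, relating $(\Omega(M)^\bullet[1]\oplus \ohor^\bullet, D)$ to $(\oloc^{\geq 1,\bullet}[1], d)$. The second is the perturbed deformation retract with structure maps $\wt{\Pi}_\src = \Pi$, $\wt{I}$, and homotopy $\wt{h}^\nabla$ between $(\oloc^{\geq 1,\bullet}[1], d)$ and $(\Omega_{\src}^{\geq 1, n}[1], \wt{d}_V)$; being a deformation retract, the homotopy on the source-form side is zero by construction.

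Substituting into the composition lemma's formulas $H_A = h_A - g\circ \wt{h}_B\circ f$ and $H_C = h_C - i\circ h_B\circ j$, with $(f, g) = (I_V, P_V)$ and $(i, j) = (\Pi, \wt{I})$, yields the composite maps $\Pi\circ I_V$ and $P_V\circ \wt{I}$ together with the homotopies
\[
H = H_{0^*} - P_V\circ \wt{h}^\nabla\circ I_V, \qquad \wt{h}_V = -\Pi\circ H_V\circ \wt{I},
\]
which match the first form of each homotopy in the statement. The composition lemma then guarantees that we obtain a homotopy equivalence.

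It remains to verify the alternative expression $\wt{h}_V = -\Pi\circ H_V\circ I$. From the explicit series formula $\wt{I} = \sum_{k=0}^{\dim M}(-h^\nabla d_V)^k I$ one reads off
\[
\wt{I} - I = \sum_{k=1}^{\dim M}(-h^\nabla d_V)^k I.
\]
Each term on the right lands in horizontal degree strictly less than $n$, because every factor of $h^\nabla$ lowers the horizontal degree by one while $d_V$ preserves it. Since $H_V$ is built from the vertical homotopy $h_V$ and therefore preserves horizontal degree, and since $\Pi$ annihilates forms of horizontal degree less than $n$ by Theorem \ref{thm:sourceproj}, we conclude $\Pi\circ H_V\circ (\wt{I} - I) = 0$, which yields the second equality. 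This bidegree bookkeeping is the only substantive point; everything else is a mechanical application of the composition lemma together with the explicit formulas from the homological perturbation output.
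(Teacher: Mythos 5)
Your proposal is correct and takes the same route as the paper: the corollary is stated there as a direct application of the composition lemma for homotopy equivalences, with the two inputs being the preceding theorem and the perturbed deformation retract onto source forms. Your degree-counting verification that $\Pi\circ H_V\circ(\wt{I}-I)=0$ (each factor $(-h^\nabla d_V)^k$, $k\geq 1$, drops the horizontal degree below $\top$, $H_V$ preserves it, and $\Pi$ kills such forms) correctly fills in the one detail the paper leaves implicit.
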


\begin{remark}
Let us make some observations 
    \begin{itemize}
        \item The complex $(\Omega(M)^\bullet[1]\oplus  \ohor^\bullet,D)$ is concentrated in degrees $\{-1,\dots,\dim M\}$ and the complex $\Omega_\src^{\geq 1, n}[1]$ is concentrated in degree bigger than or equal to $\dim M$, so the cohomologies of both are concentrated in degree $\dim M$. 
        \item Using the previous point and the equation for the homotopy equivalence, we get that 
        \[
        \id= (\Pi\circ I_V)\circ (P_V\circ \wt{I})\colon \Omega_\src^{1, n}[1]^{cl}\to \Omega_\src^{1, n}[1]^{cl},
        \]
        and conclude that  $ (\Pi\circ I_V)$ is surjective and $P_V\circ \wt{I}$ is injective. 
        \item The map $P_V$ concatenates $p^{1,\bullet}$, whose kernel is any vertical form of degree higher than $1$, with $h_V$. Hence, we have that 
        \[
        P_V \circ \wt{I} = i_2 \circ h_V \circ p^{1,\bullet} \circ \wt{I} = i_2 \circ h_V \circ I.
        \]
    \end{itemize}
\end{remark}

Let us define the space of functional $(1,n)$ forms as
\[
\mathcal{F}^n\doteq \ker(\Pi \circ d_V\colon \Omega^{1,n}_\src\to \Omega^{2,n}_\src).
\]
From the discussion before, it is clear that we can restrict our 
homotopy equivalence to a deformation retract 
    \[   
    \xymatrix{
      H \circlearrowright (\Omega(M)^\bullet[1]\oplus  \ohor^\bullet,D)\ar@<1ex>[r]^-{\Pi\circ I_V} & \mathcal{F}^n \ar@<1ex>[l]^-{P_V\circ\wt{I}}, 
    }.
\]
where we see $\mathcal{F}^n$ as a cochain complex with trivial differential. 
Consider the space
\[
\im(P_V\circ \wt{I}) = \im(i_2\circ h_V \circ I)\subset \Omega^\bullet(M)[1]\oplus\ohor^{n} \subset C^\bullet(p^*).
\]
Because of the discussion above, the map 
    \begin{align*}
        \mathbb{p} = (P_V\circ \wt{I})\circ(\Pi\circ I_V) \equiv (i_2\circ h_V\circ I)\circ(\Pi\circ I_V)
    \end{align*}
is a projector with image $\im(P_V\circ \wt{I})\equiv 0\oplus \im(h_V\circ I)$, which is canonically isomorphic to $\mathcal{F}^n$ and therefore to the cohomology of $\Omega_\src^{\geq 1, n}$. Note, furthermore, that we canonically also have 
\begin{align}
    [D,H]+\mathbb{p} =\id,
\end{align}
whence we obtain the deformation retract 
    \begin{equation}
        \xymatrix{
          H \circlearrowright (\Omega(M)^\bullet[1]\oplus  \ohor^\bullet, D)\ar@<1ex>[r]^-{\mathbb{p}} & \im(P_V\circ \wt{I}) \ar@<1ex>[l]^-{i_{\mathrm{can}}}
        },
    \end{equation}
    with $i_{\mathrm{can}}$ the canonical inclusion of a subspace. This suggests the following:
    
    \begin{definition}\label{def:localfunctionals}
    Denote by $\bbP=\pi_2\circ \mathbb{p}=h_V \circ I\circ \Pi \circ I_V$.
    The space of local functionals is $\calF \doteq \im (\bbP) = \im (h_V \circ I\circ \Pi \circ I_V)\equiv \pi_2(\im(P_V\circ \wt{I}))\subset \ohor^n$.
    \end{definition}

    Collecting the results above we come to the main result in this section.
\begin{theorem}[Resolution of local functionals]\label{thm:resolution}
    The horizontal cone is a deformation retract of the space of local functionals $\mathcal{F}\doteq\im(\bbP)$
    \begin{equation}\label{e:locformdefret}
        \xymatrix{
          H \circlearrowright (\Omega(M)^\bullet[1]\oplus  \ohor^\bullet, D)\ar@<1ex>[r]^-{\bbP} & \calF \ar@<1ex>[l]^-{i}
        },
    \end{equation}
    where $i\colon \mathcal{F}\hookrightarrow \Omega(M)^\bullet[1]\oplus  \ohor^\bullet$ is the inclusion of $\im(\bbP)\subset \ohor^n\subset \Omega(M)^\bullet[1]\oplus  \ohor^\bullet$.
\end{theorem}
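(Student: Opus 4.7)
The plan is to obtain the deformation retract in the statement directly from the deformation retract
\[
\xymatrix{
  H \circlearrowright (\Omega(M)^\bullet[1]\oplus  \ohor^\bullet, D)\ar@<1ex>[r]^-{\mathbb{p}} & \im(P_V\circ \wt{I}) \ar@<1ex>[l]^-{i_{\mathrm{can}}}
}
\]
established in the paragraph preceding the theorem, by transporting it along the canonical identification of $\im(P_V\circ\wt I)$ with $\calF$ provided by the projection $\pi_2$ onto the second summand of the horizontal cone.

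First I would note that, since $P_V\circ\wt I=i_2\circ h_V\circ I$, the subspace $\im(P_V\circ\wt I)$ sits inside $\{0\}\oplus\ohor^n$, and $\pi_2$ restricts to a linear isomorphism $\im(P_V\circ\wt I)\xrightarrow{\sim}\calF$ whose inverse is precisely the inclusion $i=i_2|_\calF$ appearing in the statement. By definition $\bbP=\pi_2\circ\mathbb{p}$, and because $\mathbb{p}$ already takes values in $\{0\}\oplus\ohor^n$ one has $i\circ\bbP=i_2\circ\pi_2\circ\mathbb{p}=\mathbb{p}$; in particular the identity $[D,H]+i\circ\bbP=\id$ is literally the identity $[D,H]+\mathbb{p}=\id$ already recorded. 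The remaining identity $\bbP\circ i=\id_\calF$ follows because $\mathbb{p}$ is a projector onto $\im(P_V\circ\wt I)$, so $\bbP\circ i=\pi_2\circ\mathbb{p}\circ i_2|_\calF=\pi_2\circ i_2|_\calF=\id_\calF$. Throughout $\calF$ is treated as a cochain complex with trivial differential, consistently with the preceding restrictions of the homotopy equivalence.

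There is essentially no obstacle here: at this stage all the analytic content has been packaged into the projector $\mathbb{p}$ and the homotopy $H$, and the theorem is a cosmetic reformulation of that data under the bijection $\pi_2\colon \im(P_V\circ\wt I)\to\calF$. The only mild care needed is to keep the identification between $\im(P_V\circ\wt I)\subset\{0\}\oplus\ohor^n$ and $\calF\subset\ohor^n$ explicit at each step, after which both deformation retract identities transfer automatically.
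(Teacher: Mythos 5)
Your proposal is correct and follows exactly the paper's route: the theorem is stated there as an immediate consequence of the deformation retract onto $\im(P_V\circ \wt{I})$ established just before it, and your argument simply makes explicit the transport along the identification $\pi_2\colon \im(P_V\circ\wt I)\to\calF$ that the paper leaves implicit. No gap; the verification that $i\circ\bbP=\mathbb{p}$ and $\bbP\circ i=\id_\calF$ is precisely the content being "collected."
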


\begin{remark}[Local functionals: comparison of definitions]\label{Rem: Stasheff}
    Local functionals are usually defined as equivalence classes of $(0,\top)$-forms that vanish along the zero section,\footnote{Note that this means discarding constant Lagrangians, which do not give rise to (nontrivial) equations of motion.} i.e.\ which are in the kernel of $0^*$, and the equivalence relation is given by
    \begin{align*}
        (j^\infty)^*\beta\sim (j^\infty)^*\beta' \colon\iff \int_M (j^\infty\phi)^*(\beta-\beta')=0
    \end{align*}
for all compactly supported $\phi \in \mathcal{E}$. It is clear that 
the image of $\bbP $ is contained in the kernel of $0^*$, since $h_V$ is the homotopy of a special deformation retract (Theorem \ref{thm:verticalhomotopy}) and $0^*\circ h_V=0$. Let us now assume that we have a $\beta\in \ohor^n$, such that $0^*\beta=0$ and $\beta\sim 0$, which implies that $\beta=d_H\alpha$ (see \cite[Lemma 10]{BFLS} who cite \cite{Olver}) or, equivalently, $D(0,\alpha)=(0,\beta)$ and thus 
$\beta\in \ker \bbP $. On the other hand let 
$\beta\in \ker \bbP $ and $0^*\beta=0$, then by the explicit form of the homotopy, there exists an $\alpha\in \ohor^{n-1}$, such that
    \begin{align*}
        D(0,\alpha)=(0,d_H\alpha)=(0,\beta),\qquad 0^*\alpha=0,
    \end{align*}
and therefore we get 
    \begin{align*}
        \int_Mj^\infty\phi^*\beta=\int_Mj^\infty\phi^*d_H\alpha
        =\int_Md j^\infty\phi^*\alpha=0,
    \end{align*}
for all compactly supported sections $\phi\in \mathcal{E}_c$.
This implies that our definition coincides with the usual definition. Note that the condition that the differential forms vanish along the zero-section ensures that $\mathrm{supp} (j^\infty\phi^*\beta)\subseteq \mathrm{supp}(\phi)$ and the integral is well-defined for compactly supported sections.

In case that the manifold $M$ is contractible, 
we can find a deformation retract 
 \begin{equation}
        \xymatrix{
          h_M\circlearrowright (\Omega(M)^\bullet, D)\ar@<1ex>[r] & \mathbb{R}\ar@<1ex>[l] 
        },
    \end{equation}
and with this we find that the horizontal cone is quasi-isomorphic to $\mathbb{R}[1]\oplus \ohor^\bullet$, which is exactly the resolution used in \cite{BFLS}. 
\end{remark}

Sometimes it can be also useful to \emph{remove the constants}, i.e.\ we consider the $\ker 0^*\subseteq \ohor^\bullet$ as a subspace of $\Omega(M)[1]\oplus \ohor$
by including it in the second summand.
\begin{theorem}
\label{Thm: DefRetractwithoutconstants}
    The subspace  $0\oplus\ker 0^*\subseteq \Omega(M)[1]\oplus\ohor^\bullet$ is a subcomplex, with $H(0\oplus\ker 0^*)\subseteq 0\oplus\ker 0^*$ and $\im(\mathbb{p})\subseteq 0\oplus\ker 0^*$ and the induced diagram 
        \begin{align*}
            \xymatrix{
          H \circlearrowright (0\oplus\ker 0^*, D)\ar@<1ex>[r]^-{\bbP } & \mathcal{F} \ar@<1ex>[l]^-{i}
        }
        \end{align*}
is a deformation retract, with $i$ the canonical inclusion, as above. Hence, there is a deformation retract
\[
\xymatrix{
          h \circlearrowright (\ker 0^*, d_H)\ar@<1ex>[r]^-{\bbP_0} & \mathcal{F} \ar@<1ex>[l]^-{i}.
        }
\]
\end{theorem}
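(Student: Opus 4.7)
The plan is to verify by direct computation the three inclusions claimed in the statement, then obtain the restricted deformation retract simply by restricting the one from Theorem \ref{thm:resolution}, and finally translate the diagram from $0\oplus\ker 0^*$ to $\ker 0^*$ via an isomorphism of complexes.

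First I would check that $\ker 0^*$ is closed under $d_H$: since $0\colon M\to J^\infty E$ is smooth, $0^*\circ d_H = d_M\circ 0^*$, so $d_H$ maps $\ker 0^*$ into itself, and $D(0,\omega)=(0,-d_H\omega)\in 0\oplus\ker 0^*$ for $\omega\in\ker 0^*$. Next I would verify that both $H$ and $\mathbb{p}$ preserve $0\oplus\ker 0^*$. Writing $H=H_{0^*}-P_V\circ\wt{h}^\nabla\circ I_V$, the first summand $H_{0^*}(0,\omega)=(0^*\omega,0)$ vanishes on the subcomplex. The second summand, as well as $\mathbb{p}=(P_V\circ\wt{I})\circ(\Pi\circ I_V)$, factors through $P_V=i_2\circ h_V\circ p^{1,\bullet}$, whose image lies in $0\oplus\im h_V$. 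The special deformation retract property $0^*\circ h_V=0$ from Theorem \ref{thm:verticalhomotopy} then forces $\im h_V\subseteq\ker 0^*$, so both $H$ and $\mathbb{p}$ preserve the subcomplex. In particular $\mathcal{F}=\im\bbP\subseteq\ker 0^*$, and the canonical inclusion $i\colon\mathcal{F}\hookrightarrow\Omega(M)^\bullet[1]\oplus\ohor^\bullet$ factors through $0\oplus\ker 0^*$.

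With these three facts the deformation retract of Theorem \ref{thm:resolution} restricts to $0\oplus\ker 0^*$ without change: the homotopy identity $[D,H]+\mathbb{p}=\id$ and the section identity $\mathbb{p}\circ i=\id_{\mathcal{F}}$ both hold by restriction, yielding the first claimed diagram. For the second diagram I would identify $(0\oplus\ker 0^*, D)$ with $(\ker 0^*, d_H)$ via the second-summand projection, absorbing the sign coming from $D(0,\omega)=(0,-d_H\omega)$ into a corresponding sign on the transported homotopy $h$. The induced maps are then $\bbP_0=\pi_2\circ\bbP|_{\ker 0^*}$ and the canonical inclusion $i\colon\mathcal{F}\hookrightarrow\ker 0^*$, and the resulting homotopy identity on $(\ker 0^*, d_H)$ is the direct transport of its counterpart on $0\oplus\ker 0^*$.

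The main obstacle, to the extent there is one, is purely bookkeeping: tracking the sign arising from $-d_H$ in $D$, and confirming that the special-deformation-retract property $0^*\circ h_V=0$ propagates through the composite expressions for $H$ and $\mathbb{p}$. The substantive content is already contained in Theorem \ref{thm:resolution} together with the special-deformation-retract property of Theorem \ref{thm:verticalhomotopy}.
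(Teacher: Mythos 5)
Your proposal is correct and follows essentially the same route as the paper: check that $D$, $H$ and $\mathbb{p}$ preserve $0\oplus\ker 0^*$ using the special-deformation-retract identity $0^*\circ h_V=0$, then restrict the retract of Theorem \ref{thm:resolution} and transport it along the second-summand projection. The only cosmetic difference is that the paper verifies $\im(\bbP|_{\ker 0^*})=\im(\bbP)$ by the explicit substitution $\bbP(\alpha,\bL)=\bbP(0,\bL-p^*0^*\bL)$, whereas you deduce it from $\mathcal{F}\subseteq\ker 0^*$ together with $\bbP\circ i=\id_{\mathcal{F}}$; both are valid.
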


\begin{proof}
    Let us start with the differential
    \begin{align*}
        D(0,\bL)=(0,d_H\bL),
    \end{align*}
so if $\bL\in \ker 0^*$, also $d_H\bL\in \ker 0^*$, since $0^*$ is a chain map. For the homotopy we get 
    \begin{align*}
        H(0,\bL)=(0,-h_Vh^\nabla d_V\bL),
    \end{align*}
and since $0^*h_V=0$, we get that $H(\ker 0^*)\subseteq \ker 0^*$. Moreover, since 
\[
\mathbb{p} (\alpha,\bL)=(0, h_VI\Pi d_V\bL),
\]
the same argument holds. So the only thing we need to check is that 
    \begin{align*}
        \mathrm{Im}(\bbP |_{\ker 0^*})=\mathrm{Im}(\bbP ),
    \end{align*}
so let $(\alpha,\bL)\in \Omega(M)[1]\oplus\ohor$, then $(0,\bL-p^*0^*\bL)\in {\ker 0^*}$ and 
\[
\bbP (\alpha,\bL)= h_VI\Pi d_V \bL = h_VI\Pi d_V(\bL-p^*0^*\bL) = \bbP (0,\bL-p^*0^*\bL)
\] 
since $d_Vp^*0^*\bL=0$. The second statement follows by defining $\bbP\vert_{0\oplus\ker{0^*}}=0\oplus \bbP_0$ and $h \bL=-h_Vh^\nabla d_V \bL$.
\end{proof}

\begin{remark}
    In \cite{BFLS} the authors also give two variants of resolutions of the local functionals, for the case of $M$ contractible. In view of Remark \ref{Rem: Stasheff}, Theorem \ref{Thm: DefRetractwithoutconstants} provides the other resolution, but again now in the non-contractible case. 
\end{remark}

\subsection{Graded bundles and gradings}
\label{Sec:gradings}
The affine bundle $E\to M$ that is typically considered when working within the BV framework---our main application (see Section \ref{sec:BVBFV})---is $\mathbb{Z}$-graded,\footnote{We could allow the bundle to be nonlinear in degree zero. Since our results are mainly for vector and affine bundles, we will make this slightly restrictive assumption.} meaning that we consider an affine bundle in the category of graded manifolds, with base $M$ of degree $0$. Note that from the point of view of the homotopy structure of the variational bicomplex nothing changes in the graded case, so we can simply port the discussion outlined so far to the graded scenario, using the same symbols.

However, owing to this additional ``internal'' grading, the space $\oloc^{\bullet,\bullet}(\mathcal{E}\times M)$ comes equipped several relevant gradings and combinations thereof, so it is convenient to introduce the following disambiguation:
\begin{definition}[Gradings]
~
\begin{enumerate}
    \item An element $\alpha\in\oloc^{\bullet,\bullet}(\mathcal{E}\times M)$ has \emph{vertical form degree} $\mathsf{vfd}(\alpha)=p$ iff $\alpha\in\oloc^{p,\bullet}(\mathcal{E}\times M)$.
    \item An element $\alpha\in\oloc^{\bullet,\bullet}(\mathcal{E}\times M)$ has \emph{horizontal co-form degree} $\mathsf{hcd}(\alpha)=k$ iff $\alpha\in\oloc^{\bullet,\mathrm{top}-k}(\mathcal{E}\times M)$. 
    \item An element $\alpha\in\oloc^{\bullet,\bullet}(\mathcal{E}\times M)$ has \emph{total form degree} $\mathsf{tfd}(\alpha) = k$ iff $\alpha\in\oloc^{p, q}(\mathcal{E}\times M)$ with $k=p +q$ and $\mathsf{tfd} = \mathsf{vfd} - \mathsf{hcd} + \mathrm{dim}(M)$. We will call the \emph{effective form degree} of a homogeneous element $\alpha\in\oloc^{p,q}(\mathcal{E}\times M)$ the number $\mathsf{efd}\doteq \mathsf{vfd} - \mathsf{hcd}$, so that 
    \[
    \mathsf{tfd}=\mathsf{efd} + \mathrm{dim}(M).
    \]
    \item The internal degree specified by $F$ is called \emph{ghost degree}\footnote{Also known as ghost number, in the literature.}, and it is denoted by $\mathsf{ghd}(\alpha)$. 
    \item The \emph{partial effective degree} of an element $\alpha$, homogeneous in ghost degree and horizontal coform degree, is the difference $\mathsf{ped}=\mathsf{ghd} - \mathsf{hcd}$.
    \item The \emph{total effective degree} of an element is the sum of the partial degree and the vertical form degree $\mathsf{ted}= \mathsf{ped} + \mathsf{vfd}$, and we have the following identity: 
    \[
    \mathsf{ted}= \mathsf{ped} + \mathsf{vfd} = \mathsf{ghd} - \mathsf{hcd} + \mathsf{vfd} = \mathsf{ghd} + \mathsf{efd}. 
    \]
\end{enumerate}
\end{definition}

Note that, restricted to horizontal forms the total degree, which we elected to use preferentially throughout, differs from the partial effective degree by the dimension of the manifold. We will use the degree $\mathsf{ped}$ when talking about skew symmetry of graded symplectic forms, and as a useful bookkeeping device later on.

\begin{definition}[Developments of forms]\label{def:developments}
    Given a local $p$-form $\alpha\in\oloc^{p,\top}$, a \emph{development} of $\alpha$ is a horizontally inhomogenous form $\bul{\alpha}$ such that it is homogeneous of partial effective degree $\mathsf{ped}(\bul{\alpha})=\mathsf{ped}(\alpha^0)$, and $\alpha^0=\alpha$.
\end{definition}

The notion of form development, for purposes similar to those treated in this paper, was studied by \cite{Sharapov1,Sharapov2}. See also \cite{Grigoriev} and references therein. It relates, as we will see, to the notion of descent equations, see e.g.\ \cite{ZUMINO1985477,ZuminoDescent}.


\begin{definition}
The \emph{graded Euler vector field} $\mathsf{E}\in\Xloc(\mathcal{E})$ is the ghost degree-$0$ vector field that acts on ghost-number-homogeneous local forms (and vector fields) by 
\[
\LE \alpha = \mathsf{ghd}(\alpha) \alpha.
\]
\end{definition}

\begin{lemma}[{\cite[Lemma 18]{MSW}}]\label{QEuler}
Given $X\in\Xloc(\mathcal{E})$ homogeneous of ghost degree $\mathsf{ghd}(X)=1$, evolutionary, vector field, and $\mathsf{E}$ the graded Euler vector field, we have
\[
\mathcal{L}_X = \mathcal{L}_{[\mathsf{E},X]}=\LE\mathcal{L}_X - \mathcal{L}_X\LE, \qquad \mathcal{L}_X\LE = (\LE - 1) \mathcal{L}_X.
\]
\end{lemma}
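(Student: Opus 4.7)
The plan is to reduce both identities to the single fact that $\mathsf{E}$ measures ghost degree, plus the standard naturality $[\mathcal{L}_Y,\mathcal{L}_X]=\mathcal{L}_{[Y,X]}$ for the Lie derivative on the variational bicomplex (which is valid for evolutionary vector fields, since they satisfy $[\mathcal{L}_X,d_H]=0$ and commute, up to the usual Cartan magic formulas, with the total de Rham differential).

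First, I would establish the auxiliary identity $[\mathsf{E},X]=X$. By definition, $\LE$ acts on any local form or vector field homogeneous in ghost degree as multiplication by its ghost degree. For a local, evolutionary vector field $X$ of ghost degree $1$, expanding $[\mathsf{E},X]$ on any ghost-homogeneous form $\alpha$ of ghost degree $g$ yields
\begin{equation*}
    [\mathsf{E},X](\alpha)=\mathsf{E}(X\alpha)-X(\mathsf{E}\alpha)=(g+1)X\alpha-gX\alpha=X\alpha,
\end{equation*}
so $[\mathsf{E},X]=X$ as evolutionary vector fields. (The same conclusion can be read off from the fact that conjugation by the flow of $\mathsf{E}$ rescales a ghost-degree-$k$ vector field by a factor depending only on $k$.)

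Next, I would invoke the standard identity $\mathcal{L}_{[Y,X]}=[\mathcal{L}_Y,\mathcal{L}_X]$ with $Y=\mathsf{E}$. Since $\mathsf{E}$ has ghost degree $0$, the graded commutator coincides with the ordinary commutator, so
\begin{equation*}
    \mathcal{L}_X=\mathcal{L}_{[\mathsf{E},X]}=[\LE,\mathcal{L}_X]=\LE\mathcal{L}_X-\mathcal{L}_X\LE,
\end{equation*}
which is exactly the first claimed equality. The second identity is then a trivial rearrangement: bringing $\mathcal{L}_X\LE$ to the left and $\mathcal{L}_X$ to the right,
\begin{equation*}
    \mathcal{L}_X\LE=\LE\mathcal{L}_X-\mathcal{L}_X=(\LE-1)\mathcal{L}_X.
\end{equation*}

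The only genuinely delicate step is the first one, namely making the statement that $\mathsf{E}$ acts by ghost degree on local vector fields compatible with the evolutionary condition and with the Lie bracket on $\Xloc(\mathcal{E})$; once this is granted (and it is essentially the defining property of the graded Euler vector field together with the fact that the Lie bracket of evolutionary vector fields is again evolutionary), everything else is bookkeeping. No sign subtleties arise here because $\LE$ is of even (indeed zero) ghost degree, so all graded commutators in play collapse to ordinary ones.
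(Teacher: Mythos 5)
Your proof is correct. Note that the paper itself does not prove this lemma at all: it is stated with a citation to \cite[Lemma 18]{MSW}, so there is no in-paper argument to compare against. Your route is the natural one: the defining property of the graded Euler vector field gives $[\mathsf{E},X]=\mathsf{ghd}(X)\,X=X$ (indeed, the paper's definition already declares that $\LE$ acts on ghost-homogeneous \emph{vector fields} by multiplication by their ghost degree, so your explicit computation on test forms is a consistency check rather than a needed step), and the naturality $\mathcal{L}_{[Y,X]}=[\mathcal{L}_Y,\mathcal{L}_X]$ of the Lie derivative for evolutionary vector fields, with no Koszul sign since $\mathsf{E}$ has degree $0$, yields the first identity; the second is an algebraic rearrangement. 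The only point worth flagging explicitly, which you do, is that the bracket of evolutionary vector fields is again evolutionary and that the graded commutator collapses to the ordinary one because $\LE$ is even.
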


\section{Strong-homotopy Lie algebras for field theory}\label{sec:Linfty}
In many applications of interest, one wants to endow $\calF$ with symplectic data. The usual way to do this, on \emph{closed} manifolds, is to identify $\calF$ with functions of the form $\int_M L$ with $L\in \oloc^{0,\top}$. The integration operation descends to the quotient $\oloc^{0,\top}/d\oloc^{0,\top-1}$, and one can look for a weak-symplectic structure of the form $\int_M \omega$ with $\omega\in\oloc^{2,\top}$. This induces a Lie algebra on the space of local functionals. Having constructed a resolution of $\calF$, we look for symplectic data thereon, and for the natural (possibly higher) Lie-structures associated to it.

This problem was treated in detail also in \cite{BFLS}, in the contractible case and using homological perturbation. Our preliminary constructions, and in particular the control on homotopies, will provide us with explicit formulas for higher brackets, and extend the results to noncontractible bases.

\subsection{Symplectic and Hamiltonian forms}
\begin{definition}
A local form $\omega \in\oloc^{2,\top}(\mathcal{E}\times M)$ of ghost degree $\mathsf{ghd}(\omega)=k$ is said to be a $k$-symplectic local form if it is $d_V$ (and $d_H$) closed, and if the  induced  map 
    \begin{align*}
        \Pi\circ(\omega)^\flat \colon \Xloc(\mathcal{E})\to 
        \Omega_\src^{1,n}
    \end{align*}
is injective.
A $k$-symplectic development of $\omega$ is a development $\bom$ in the sense of Definition \ref{def:developments} such that $d_V\bom=0$. 
\end{definition}

\begin{definition}\label{def:Hamiltonian}
    Let $\bom\in\oloc^{2,\bullet}$ a local $k$-symplectic development of $\omega$. The space of Hamiltonian $(0,\bullet)$ forms for $\bom$ is
    \begin{align*}
        \oham^\bullet=\{\bF\in \ohor^\bullet\mid \exists \bb{\bF}\in \mathfrak {X}(\mathcal{E}): \ \Pi\iota_{\bb{\bF}}\bom =\Pi d_V \bF\}.
\end{align*}
We call $(\bF,\bb{\bF})$ a Hamiltonian pair for $\bom$.
\end{definition}

We will see now that there is a particular type of inhomogeneous $k$-symplectic form, arising from the interaction between $\omega$ and a cohomological vector field $Q\in\Xloc(\mathcal{E})^1$. Before we proceed, however, we will need the following
\begin{lemma}\label{lem:nonhamiltooniandefretract}
    Let $Q\in\Xloc(\mathcal{E})^1$ cohomological. Then, there is a deformation retract
    \begin{equation}
    \xymatrix{
      \wt{h}^\nabla \circlearrowright (\oloc^{\geq 1,\bullet},d_H-\LQ) \ar@<1ex>[r]^-{\wt{\Pi}} & (\Omega_\src^{\geq 1, n},\wt{d}_Q) \ar@<1ex>[l]^-{\wt{I}} 
    }
    \end{equation}
    with 
    \begin{itemize}
        \item $ \wt{\Pi}_\src=\Pi\circ
        (\sum_{k\geq 0}(\LQ h^\nabla)^k)=\Pi$,
        \item $\wt{i}= (\sum_{k\geq 0}(h^\nabla\LQ)^k)\circ \iota$,
        \item $\wt{h}^\nabla=h^\nabla\sum_{k\geq 0}(\LQ h^\nabla)^k)$,
        \item $\wt{d}_Q=-\Pi \circ \LQ \circ I$. 
    \end{itemize}
\end{lemma}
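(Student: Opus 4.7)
The statement is a direct application of the homological perturbation lemma (Theorem \ref{Thm: HomPtLem}) to the Anderson deformation retract of Theorem \ref{thm:sourcedecomposition}, with perturbation $k \doteq -\LQ$ added to $d_H$ on $\oloc^{\geq 1, \bullet}$.

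Before invoking the lemma, I would check its two hypotheses. That $d_H - \LQ$ is a differential follows from expanding $(d_H - \LQ)^2 = d_H^2 - (d_H \LQ + \LQ d_H) + \LQ^2$: the first term vanishes tautologically; the anticommutator vanishes by $[\LQ, d_H] = 0$, which holds because $Q$ is a local (evolutionary) vector field, as recorded in the remark after Definition \ref{def:localforms}; and $\LQ^2 = \tfrac{1}{2}\mathcal{L}_{[Q,Q]} = 0$ since $Q$ is cohomological. Second, the operator $\id - \LQ h^\nabla$ is invertible: $h^\nabla$ strictly lowers horizontal form degree by one, while $\LQ$ preserves it, so $\LQ h^\nabla$ strictly lowers horizontal degree and vanishes after at most $\dim M + 1$ iterations. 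Hence $\LQ h^\nabla$ is pointwise nilpotent and its inverse is given by the (finite) Neumann series $\sum_{k\geq 0}(\LQ h^\nabla)^k$.

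I would then substitute $f = I$, $g = \Pi$, $h = h^\nabla$ and $k = -\LQ$ into the closed-form expressions listed in the remark immediately after Theorem \ref{Thm: HomPtLem}. This reproduces verbatim the four formulas for $\wt I$, $\wt \Pi$, $\wt h^\nabla$ and $\wt d_Q$ claimed in the lemma, and the perturbation lemma guarantees that they assemble into a homotopy equivalence between $(\oloc^{\geq 1,\bullet}, d_H - \LQ)$ and $(\Omega_\src^{\geq 1, n}, \wt d_Q)$.

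To finish, I would collapse the series defining $\wt \Pi$ and $\wt d_Q$ using $\Pi\circ \LQ \circ h^\nabla = 0$, which is immediate from the grading: $\Pi$ is supported on top horizontal degree (Theorem \ref{thm:sourceproj}), while $\LQ h^\nabla$ strictly lowers horizontal degree, so the composition lands where $\Pi$ vanishes. Only the $k=0$ summand survives in each series, giving $\wt \Pi = \Pi$ and $\wt d_Q = -\Pi\circ \LQ \circ I$. A similar degree argument, together with $\Pi\circ h^\nabla = 0$ from Theorem \ref{thm:sourcedecomposition}, yields $\wt \Pi \circ \wt I = \id$, upgrading the output of the perturbation lemma from a homotopy equivalence to an actual deformation retract. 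The argument is essentially mechanical once the perturbation lemma is in hand; the only care required is the grading bookkeeping underlying both the nilpotency check and the series collapse, which poses no genuine obstacle.
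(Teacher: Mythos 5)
Your proposal is correct and follows exactly the paper's route: perturb Anderson's deformation retract of Theorem \ref{thm:sourcedecomposition} by $-\LQ$ via the homological perturbation lemma, then collapse the Neumann series using the fact that $\Pi$ vanishes below top horizontal degree while $\LQ h^\nabla$ strictly lowers it. Your write-up is simply more explicit than the paper's about the hypotheses (nilpotency of $\LQ h^\nabla$, $(d_H-\LQ)^2=0$) and about why $\wt{\Pi}\circ\wt{I}=\id$ holds, all of which are correct.
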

\begin{proof}
    We simply need to perturb the datum
    \begin{equation}
    \xymatrix{
      h^\nabla \circlearrowright (\oloc^{\geq 1,\bullet},d_H) \ar@<1ex>[r]^-{\Pi} & \Omega_\src^{\geq 1, n} \ar@<1ex>[l]^-I 
    }
\end{equation}
by the homological perturbation lemma in the direction of $-\LQ$. Observe that $\wt{\Pi}_\src=\Pi$, since the latter vanishes on $\oloc^{\geq 1, <\top}$, a fact that is also used to show that $\wt{d}\equiv -\Pi\circ \sum(\LQ h^\nabla)^k \LQ I = -\Pi\circ \LQ \circ I$.
\end{proof}

\begin{theorem}\label{thm:extomega}
Let $\omega$ be a local symplectic form of ghost degree $\mathsf{ghd}(\omega)=k$. Moreover, assume $Q\in\Xloc(\mathcal{E})^1$ is an odd, cohomological vector field with $\mathsf{ghd}(Q)=1$ and such that 
    \begin{align*}
        \Pi(\LQ\omega)=0. 
    \end{align*}
Then there exists a $k$-symplectic development $\bom$ of $\omega$, such that 
\begin{equation}\label{e:specialsymplecticdevelopment}
(d_H - \LQ) \bom = 0. 
\end{equation}
Moreover, if $\omega^\bullet$ and $\wt{\omega}^\bullet$ are two $k$-symplectic developments of $\omega$, such that both satisfy \eqref{e:specialsymplecticdevelopment}, then there exists a 
$\eta^\bullet\in\oloc^{2,\leq \top-1}$ such that $\mathsf{ped}(\bul{\eta})=k-1$ and
    \begin{align}\label{e:differentdevelopments}
    \wt{\omega}^\bullet = \omega^\bullet  +(d_H-\LQ)\bul{\eta}.  
    \end{align}
Finally, if $\omega$ is ultralocal (Definition \ref{def:localforms}), there is an isomorphism of complexes
    \[
    \Pi\circ (\omega)^\flat\colon (\Xloc,-\LQ)\to (\Omega_{\src}^{1,n},\wt{d}_Q).
    \]
    and a quasi-isomorphism
    \[
    (\omega)^\flat\colon (\Xloc,-\LQ)\to (\oloc^{1,n},d_H-\LQ).
    \]
\end{theorem}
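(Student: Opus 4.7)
The plan is to establish the four claims—existence, uniqueness, the isomorphism $\Pi\circ(\omega)^\flat$, and the quasi-isomorphism $(\omega)^\flat$—in sequence, relying on the perturbed deformation retract of Lemma \ref{lem:nonhamiltooniandefretract} together with the Cartan identity $\LQ=[d_V,\iota_Q]$, which is available because $[\iota_Q,d_H]=0$ for the local vector field $Q$.

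For \emph{existence}, I would build $\bom=\sum_{j\geq 0}\omega^j$ with $\omega^j\in\oloc^{2,\top-j}$ by inductive descent, setting $\omega^0:=\omega$. At each step, $\LQ\omega^j$ is $d_V$-closed (by graded anticommutation $\{d_V,\LQ\}=0$ together with $d_V\omega^j=0$), $d_H$-closed (for $j=0$ since $\omega\in\oloc^{2,\top}$; for $j\geq 1$ via the previous descent step and $\LQ^2=0$), and satisfies $\Pi\LQ\omega^j=0$ (by hypothesis for $j=0$, automatic for $j\geq 1$ since $\Pi$ vanishes below top horizontal degree). By Theorem \ref{thm:sourcedecomposition} it is therefore $d_H$-exact. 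The essential additional input is the Cartan rewriting $\LQ\omega^j=d_V\iota_Q\omega^j$, which shows that the obstruction is also $d_V$-exact; a bi-descent in $(\oloc^{\bullet,\bullet},d_V,d_H)$, starting from the naive Anderson primitive $h^\nabla\LQ\omega^j$ and iteratively correcting its $d_V$-image using that the correction is again $d_H$-exact below top, yields $\omega^{j+1}$ with both $d_H\omega^{j+1}=\LQ\omega^j$ and $d_V\omega^{j+1}=0$. The iteration terminates because horizontal degree strictly decreases.

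For \emph{uniqueness}, the difference $\xi:=\wt{\omega}^\bullet-\bom\in\oloc^{2,\leq\top-1}$ of two such developments (the top components coincide) is $d_V$- and $(d_H-\LQ)$-closed, with $\Pi\xi=0$. By Lemma \ref{lem:nonhamiltooniandefretract} at vertical degree $2$, this forces $\xi=(d_H-\LQ)\bul{\eta}$ for some $\bul{\eta}\in\oloc^{2,\leq\top-1}$; a degree count (noting that $d_H-\LQ$ raises $\mathsf{ped}$ by one) gives $\mathsf{ped}(\bul{\eta})=k-1$.

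For the \emph{ultralocal statements}, I would first verify that $\Pi\circ(\omega)^\flat$ is a chain map by direct computation using the graded commutator identity $\iota_{[Q,X]}=[\LQ,\iota_X]$ and the locality $[\iota_X,d_H]=0$. Since $\LQ\omega=d_H\mu$ for some $\mu$ (by the hypothesis $\Pi\LQ\omega=0$), we get $\iota_X\LQ\omega=d_H\iota_X\mu$, which is killed by $\Pi$; hence $\Pi\omega^\flat(-\LQ X)=-\Pi\LQ\iota_X\omega=\wt{d}_Q(\Pi\omega^\flat(X))$. Injectivity is exactly the symplectic hypothesis on $\omega$, and surjectivity follows from ultralocality: $\omega$ then descends to a bundle map whose fiber-wise non-degeneracy inverts $\omega^\flat$ on any prescribed source form. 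The quasi-isomorphism assertion for $(\omega)^\flat$ then follows from the fact that $\Pi\colon(\oloc^{1,\bullet},d_H-\LQ)\to(\Omega_\src^{1,n},\wt{d}_Q)$ is itself a quasi-isomorphism by Lemma \ref{lem:nonhamiltooniandefretract}, while $\Pi\omega^\flat$ is an isomorphism of complexes. The \emph{main obstacle} is the existence step: Anderson's homotopy $h^\nabla$ produces a $d_H$-primitive of $\LQ\omega^j$ that is generically not $d_V$-closed, and the bi-descent correcting this—made possible by the simultaneous $d_V$- and $d_H$-exactness of $\LQ\omega^j$—is the most technical part of the argument.
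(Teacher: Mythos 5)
Your uniqueness argument and your treatment of the ultralocal claims coincide with the paper's: the difference of two developments is $(d_H-\LQ)$-closed and killed by $\Pi$, so the homotopy formula of Lemma \ref{lem:nonhamiltooniandefretract} exhibits it as $(d_H-\LQ)\wt{h}^\nabla(\bom-\wt{\omega}^\bullet)$; and the chain-map/injectivity/surjectivity/two-out-of-three reasoning for $\Pi\circ(\omega)^\flat$ and $(\omega)^\flat$ is exactly what the paper does. Your obstruction analysis for existence ($\LQ\omega^j$ is $d_H$-closed, killed by $\Pi$, and $d_V$-exact via $\LQ=[d_V,\iota_Q]$) is also correct.

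However, the existence step has a genuine gap precisely at the point you flag as ``the most technical part.'' The proposed ``bi-descent'' --- start from $\alpha_0=h^\nabla\LQ\omega^j$ and ``iteratively correct its $d_V$-image using that the correction is again $d_H$-exact below top'' --- does not close as stated: knowing that $d_V\alpha_0$ is $d_H$-exact, say $d_V\alpha_0=d_H\beta$ with $\beta\in\oloc^{3,\top-j-2}$, gives you an object in the wrong bidegree to correct $\alpha_0\in\oloc^{2,\top-j-1}$, and producing a correction $\gamma$ with $d_V\gamma=-d_V\alpha_0$ and $d_H\gamma=0$ requires $d_V$-acyclicity in positive vertical degree, which you never invoke. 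The missing ingredient is the vertical homotopy $h_V$ of Theorem \ref{thm:verticalhomotopy}, whose two key properties --- $[d_V,h_V]+p^*0^*=\id$ and $[h_V,d_H]=0$ --- let the paper dispense with any iteration: one simply sets $\omega^{j+1}\doteq d_Vh_Vh^\nabla\LQ\omega^j$, which is $d_V$-exact (hence $d_V$-closed) by construction, and a short computation using $[h_V,d_H]=0$, $\Pi\LQ\omega^{j+1}=0$ and $d_V\LQ\omega^{j+1}=0$ shows it is still a $d_H$-primitive of $\LQ\omega^j$. Your sketch contains the right obstruction theory but not the mechanism that actually produces the $d_V$-closed primitive, so as written the induction does not go through. (A second, minor omission: one should also check that the resulting $\bom$ is homogeneous in partial effective degree, so that it is a development in the sense of Definition \ref{def:developments}; this follows because $d_H-\LQ$ preserves $\mathsf{ped}$.)
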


\begin{proof}
Let us construct a development $\bom$ of $\omega$, by 
    \begin{align*}
        \bom =\sum_{k}(d_Vh_Vh^\nabla\LQ)^k\omega.
    \end{align*}
It is clear that $d_V\bom=0$, since $d_V\omega=0$ and the higher terms are $d_V$-exact.  
Let us write $\bom=\sum_{k\geq 0}\omega^k$ where 
$\omega^k=(d_Vh_Vh^\nabla\LQ)^k\omega$. Counting horizontal coform degree, we get that $(d_H - \LQ) \bom = 0$ is equivalent to 
    \begin{align*}
        \LQ\omega^k=d_H\omega^{k+1}
    \end{align*}
for all $k$. Let us assume that this equation is fulfilled for all $k\leq n$, then 
\begin{align*}
    d_H\omega^{k+2}&=d_H (d_Vh_Vh^\nabla \LQ)\omega^{k+1}
    =
    d_Vh_Vd_Hh^\nabla \LQ \omega^{k+1}\\&
    =d_Vh_V\LQ\omega^{k+1}- 
     d_Vh_Vh^\nabla d_H \LQ \omega^{k+1}
     -d_Vh_V\Pi \LQ\omega^{k+1}\\&
     = d_Vh_V\LQ\omega^{k+1} - 
     d_Vh_Vh^\nabla \LQ d_H \omega^{k+1}\\&
     =d_Vh_V\LQ\omega^{k+1} - 
     d_Vh_Vh^\nabla \LQ \LQ\omega^{k}\\&
     =d_Vh_V\LQ\omega^{k+1} =\LQ\omega^{k+1}
     -h_Vd_V\LQ\omega^{k+1}\\&
     =\LQ\omega^{k+1}
\end{align*}
since for $k=0$ the equation is canonically fulfilled, observing that the partial effective degree is preserved by $(d_H - \LQ)$, we get the existence of such a $k$-symplectic development. 

Now, Lemma \ref{lem:nonhamiltooniandefretract} allows us to write
    \begin{align*}
        \bom -\bul{\wt{\omega}} &
        =
        ([d_H-\LQ,\wt{h}^\nabla]+\wt{i}\circ \Pi)(\bom -\bul{\wt{\omega}})\\&
        =
        (d_H-\LQ)\wt{h}^\nabla(\bom -\bul{\wt{\omega}}),
    \end{align*}
where we used that $\Pi$ of $\bom$ and $\bul{\wt{\omega}}$ coincide, since they both extend $\omega$ and that both are $(d_H-\LQ)$-closed. Note that the homotopy $h^\nabla$ decreases the horizontal degree by one and thus the claim of Equation \ref{e:differentdevelopments} is proven. 

Finally, if $\omega$ is a symplectic form, the map $\Pi\circ(\omega)^\flat\colon \Xloc\to\Omega_{\src}^{1,n}$ is injective. If $\omega$ is ultralocal it means it is of the form
\[
\omega = (j^\infty)^*\left(\sum_{i=1}^{\mathrm{rk}(E)}\omega_{ij} d_Vu^i\wedge d_Vu^j\right)
\]
where $\omega_{ij}\colon J^\infty E \to GL(\mathrm{rk}(E))$ is a smooth map, and $\{u^i\}$ denotes a coordinate chart in $E$. Hence for every $X\in\Xloc$, the one form $\Pi\circ \omega(X)$ reads
\[
\Pi \circ\omega(X)= (j^\infty)^*\left(\sum_{i=1}^{\mathrm{rk}(E)}\omega_{ij}X_{u^j} d_Vu^i\right)
\]
On the other hand, a source form $\alpha$ can be written in normal form as \cite{Anderson}
\[
\alpha = (j^\infty)^*\left(\sum_{i=1}^{\mathrm{rk}(E)} \alpha_id_V u^i\right)
\]
hence, the surjectivity of $\Pi\circ (\omega)^\flat$ follows from $\omega_{ij}$ being invertible on $E$.

Then $\Pi_\src\circ (\omega)^\flat\colon \Xloc \to \Omega_{\src}^{1,n}$ is an isomorphism of vector spaces. We show that it is a chain map w.r.t.\ $-\LQ$ and $\wt{d}_Q$, respectively, by showing that 
\[
(\omega)^\flat\colon (\Xloc, -\LQ) \to (\oloc^{1,n}, d_H-\LQ)
\]
is a chain map:
\begin{align*}
\iota_{-[Q,X]}\omega = -\LQ \iota_X\omega - \iota_X\LQ \omega\\
=-\LQ\iota_X\omega - \iota_Xd_H\omega = (d_H - \LQ)\iota_X\omega
\end{align*}
where we used that $0=[\iota_X,d_H] =\iota_X d_H + d_H\iota_X$ and $[\LQ,\iota_X] = \LQ\iota_X + \iota_X\LQ$ for $X$ is of degree $0$ and $Q$ is of degree $1$. Projecting to source forms we get
\[
\Pi\circ(\omega)^\flat(-[Q,X]) = -\Pi\circ \LQ\left((\omega)^\flat(X)\right) = \wt{d}_Q\iota_X\omega,
\]
which then allows us to conclude that $\Pi\circ (\omega)^\flat$ is an isomorphism of complexes, and $(\omega)^\flat$ is a quasi-isomorphism owing to the deformation retract of Lemma \ref{lem:nonhamiltooniandefretract}.
\end{proof}

\begin{remark}[Trivial extension]
\label{Rem: trivialExt}
We want to stress that the construction of the development of 
$\omega$ works for the special case $Q=0$, where we get $\omega^\bullet=\omega$. 
\end{remark}

\begin{remark}
Looking at the definition of Hamiltonian functions and their respective vector fields, we have 
    \begin{itemize}
        \item $\bb{d_H\bF}=0$ for all $\bF\in \ohor^\bullet$ by Theorem \ref{thm:sourcedecomposition},
        \item $\bb{\bF}=0$ for all $\bF$ of horizontal degree less than $n$ and 
        \item for $\bF\in \oham^\bullet$, also $Q(\bF)\in \oham^\bullet$ with $\bb{Q(\bF)}=[Q,\bb{\bF}]$.  Indeed 
        \begin{multline*}
        d_VQ(\bF) = d_V\iota_Q d_V\bF = d_V \iota_Q\iota_{\bb{\bF}}\bom + d_H(\alpha) \\
        = \iota_{[Q,\bb{\bF}]}\bom \pm \iota_{\bb{\bF}}d_V \iota_Q \bom \pm \iota_Q d_V\iota_{\bb{\bF}}\bom + d_H(\alpha) = \iota_{[Q,\bb{\bF}]}\bom + d_H\alpha'
        \end{multline*}
        for some $\alpha,\alpha'\in\oloc^{1,\bullet-1}$, whence $\Pi(d_VQ(\bF)) = \Pi(\iota_{[Q,\bb{\bF}]}\bom)$.
        
    \end{itemize}
As a consequence, we have that  
    \begin{align*}
        \oham^i=
        \begin{cases}
            \oham^n, & \text {for } i=n\\
            \ohor^i, & \text{else}
        \end{cases}
    \end{align*}
\end{remark}

\begin{definition}[Hamiltonian cone. and local Hamiltonian functionals]\label{def:HamCone}
    The \emph{Hamiltonian cone} relative to a $k$-symplectic development $\bom$ is the subspace of the horizontal cone $C^\bullet(p^*)$ given by
    \begin{equation*}
        C_\ham^\bullet 
        = 
        \Omega(M)[n+1]\oplus\oham^\bullet[n]
        \subseteq \Omega(M)[n+1]\oplus\ohor^\bullet[n]
        =C^\bullet(p^*),
    \end{equation*}
    where $\oham^\bullet$ is the space of Hamiltonian forms w.r.t. $\bom$.
    
    Local Hamiltonian functionals are given by the image of the restriction of $\bbP$ to the Hamiltonian cone: $\mathcal{F}_\ham\doteq\im(\bbP |_{C_\ham})$.
\end{definition}

\begin{lemma}
    If $\omega$ is ultralocal then $\oham^\bullet=\ohor^\bullet$. Then also $C_\ham^\bullet=C^\bullet(p^*)$ and $\calF_\ham=\calF$.
\end{lemma}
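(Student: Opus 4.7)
The plan is to show that every $\bF \in \ohor^\bullet$ admits a Hamiltonian vector field by directly inverting $\Pi\circ(\omega)^\flat$, which is an isomorphism in the ultralocal case by Theorem \ref{thm:extomega}. The two corollaries ($C_\ham^\bullet = C^\bullet(p^*)$ and $\calF_\ham = \calF$) then follow immediately from the definitions.

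First, I would reduce the Hamiltonian defining equation $\Pi\iota_{\bb{\bF}}\bom = \Pi d_V\bF$ to an equation involving only $\omega$ (not its development). Since $\Pi$ annihilates $\oloc^{p,<\top}$ by Theorem \ref{thm:sourceproj}, and since a development satisfies $\omega^0 = \omega$ with all other components $\omega^k \in \oloc^{2,\top-k}$ for $k\geq 1$, we have
\[
\Pi\iota_{\bb{\bF}}\bom = \Pi\iota_{\bb{\bF}}\omega^0 = \Pi\iota_{\bb{\bF}}\omega.
\]
Similarly, writing $\bF = \sum_k F^k$ with $F^k \in \oloc^{0,\top-k}$, only the top component $F^0\in\oloc^{0,\top}$ survives under $\Pi\circ d_V$, so the Hamiltonian condition becomes
\[
\Pi\circ(\omega)^\flat(\bb{\bF}) = \Pi\, d_V F^0 \in \Omega_\src^{1,n}.
\]

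Next, I would invoke the isomorphism proved in Theorem \ref{thm:extomega}: when $\omega$ is ultralocal, $\Pi\circ(\omega)^\flat\colon \Xloc(\mathcal{E})\to \Omega_\src^{1,n}$ is a bijection. Hence there exists a unique local vector field $\bb{\bF} \in \Xloc(\mathcal{E})$ solving the above equation, which exhibits $\bF$ as a Hamiltonian form. Since $\bF$ was arbitrary, $\oham^\bullet = \ohor^\bullet$.

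Finally, I would read off the two consequences directly from Definition \ref{def:HamCone}: since $\oham^\bullet = \ohor^\bullet$,
\[
C_\ham^\bullet = \Omega(M)[n+1]\oplus \oham^\bullet[n] = \Omega(M)[n+1]\oplus\ohor^\bullet[n] = C^\bullet(p^*),
\]
and consequently $\calF_\ham = \im(\bbP|_{C_\ham}) = \im(\bbP|_{C^\bullet(p^*)}) = \calF$. There is no substantive obstacle here; the content of the lemma is really just the observation that the ultralocality hypothesis of Theorem \ref{thm:extomega} already guarantees pointwise surjectivity of the flat map, and that the defining equation for Hamiltonianity only probes $\omega$ in top horizontal coform degree.
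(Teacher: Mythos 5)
Your proof is correct and follows essentially the same route as the paper, which simply observes that the source form $\Pi d_V\bF$ lies in the image of the isomorphism $\Pi\circ(\omega)^\flat$ guaranteed by Theorem \ref{thm:extomega} in the ultralocal case. Your additional reduction showing that $\Pi$ only sees the top horizontal coform components of $\bom$ and $\bF$ is a worthwhile elaboration of a step the paper leaves implicit, but it is the same argument.
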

\begin{proof}
    The source form $d_V\bF$ is in the image of the isomorphism $\Pi\circ(\omega)^\flat$ when $\omega$ is ultralocal due to Theorem 
    \ref{thm:extomega}. 
\end{proof}

\begin{lemma}
    The Hamiltonian cone is a deformation retract of the local Hamiltonian functions:
    \begin{equation}\label{e:Hamiltonianconeretract}
        \xymatrix{
          H \circlearrowright (C^\bullet_\ham, D)\ar@<1ex>[r]^-{\bbP } & \mathcal{F}_\ham[n] \ar@<1ex>[l]^-{i}
        },
    \end{equation}
    where we use the same symbols for the restricted maps. (Note that we may occasionally omit the inclusion map $i(\ell) = (0,\ell)$ for ease of notation).
\end{lemma}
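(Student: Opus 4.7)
The plan is to restrict the deformation retract of Theorem \ref{thm:resolution} to the Hamiltonian subcomplex. The structural identities $\bbP\circ i = \id$ and $i\circ\bbP + [D,H] = \id$ will then be inherited verbatim from the ambient retract. The surjectivity $\mathcal{F}_\ham = \im(\bbP|_{C_\ham})$ holds tautologically by Definition \ref{def:HamCone}, so the only nontrivial tasks are (a) to verify that $D$ and $H$ preserve $C^\bullet_\ham \subseteq C^\bullet(p^*)$, and (b) to deduce $\mathcal{F}_\ham\subseteq\oham^n$ so that $i\colon \mathcal{F}_\ham[n]\hookrightarrow C^\bullet_\ham$ is well-defined.

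For $D(\alpha,\bF) = (d\alpha,-d_H\bF + p^*\alpha)$, I check that the second component is Hamiltonian. By Theorem \ref{thm:sourceproj}, $\Pi\, d_V d_H\bF = -\Pi\, d_H d_V\bF = 0$, so $d_H\bF$ has trivial Hamiltonian vector field (this is exactly the first bullet of the remark preceding Definition \ref{def:HamCone}). Likewise $p^*\alpha$ is ultralocal with $d_V p^*\alpha = 0$, hence Hamiltonian with vanishing Hamiltonian vector field.

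For the homotopy $H = H_{0^*} - P_V\circ\wt{h}^\nabla\circ I_V$, unwinding the definitions gives
\[
H(\alpha,\bF) = \bigl(0^*\bF,\; -h_V\circ p^{1,\bullet}\circ\wt{h}^\nabla\circ d_V\bF\bigr).
\]
The crucial observation is that $\wt{h}^\nabla = h^\nabla\sum_{k}(-d_V h^\nabla)^k$ strictly lowers horizontal form degree, since each $h^\nabla$ does and $d_V$ preserves horizontal degree. Hence the second component of $H(\alpha,\bF)$ lies in $\ohor^{<n}$ regardless of the degree of $\bF$; by the identity $\oham^i = \ohor^i$ for $i<n$ recorded in the same remark, it is automatically Hamiltonian. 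Thus $H$ restricts to an endomorphism of $C^\bullet_\ham$.

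With $D$ and $H$ both preserving $C^\bullet_\ham$, so does $i\circ\bbP = \id - [D,H]$; consequently $\mathcal{F}_\ham = \im(\bbP|_{C_\ham})\subseteq \oham^n$, and the inclusion $i$ is well-defined. The two deformation retract identities then hold on the restricted diagram as direct corollaries of the ambient ones. The only technical content is the degree argument for $H$; once one notes that $\wt{h}^\nabla$ strictly lowers horizontal degree, any potentially problematic output of $H$ lands precisely where the Hamiltonian condition is vacuous, and the rest is formal.
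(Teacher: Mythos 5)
Your proposal is correct and follows essentially the same route as the paper: restrict the ambient retract of Theorem \ref{thm:resolution} by checking that $D$, $H$, and hence $\mathbb{p}=i\circ\bbP$ preserve $C^\bullet_\ham$, with the identical key computation $\Pi(d_V(d_H\bF+p^*\alpha))=-\Pi(d_Hd_V\bF)=0$ for $D$ and a degree argument for $H$. Your unwinding of $H=H_{0^*}-P_V\circ\wt{h}^\nabla\circ I_V$ to see that its second component lands in $\ohor^{<n}$, where $\oham^i=\ohor^i$, just makes explicit what the paper dispatches as ``degree reasons.''
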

\begin{proof}
    The aim is to check that the maps $D$, $H$ and $i\circ\bbP\equiv \mathbb{p}$ from Theorem \ref{thm:resolution} all restrict to the Hamiltonian cone. Note that the inclusion is an equality for all horizontal degrees less than $0$ and thus $H$ preserves $C_\ham^\bullet$ already, because of degree resons. Hence, let $\mathbb{c}=(\alpha, \bF)\in (\Omega^\bullet(M)[n+1]\oplus \ohor^\bullet[n])^{-1}$, then 
    \begin{align*}
        D\mathbb{c}=D(\alpha,\bF)=(0, d_H\bF+p^*\alpha),
    \end{align*}
    which leaves us to check that $d_H\bF + p^*\alpha\in (\oham^\bullet[n])^{0}$. We have 
    \begin{align*}
        \Pi(d_V(d_H\bF+p^*\alpha))=\Pi(d_Vd_H\bF)=
        -\Pi(d_Hd_V\bF)=0
    \end{align*}
    and therefore $d_H\bF + p^*\alpha$ is Hamiltonian with vector field $0$. 
    Let $(0,\bF)\in (C_\ham^\bullet[n])^0$, then we have that
    \begin{align*}
        i \circ \bbP (0,\bF)=(0,\bF)-DH(0,\bF).
    \end{align*}
    By assumption $\bF\in \oham^n$ and the image of $D$ lies in $C_\ham^{0}$ so also $i\circ \bbP $ preserves $C_\ham^\bullet$. The claimed deformation retract follows, canonically.
\end{proof}

Consider now the differential $D-\LQ$ on $C_\ham^\bullet$ (trivially acting on $\Omega^\bullet(M)[n+1]$).

\begin{lemma}\label{lem:pert.ham.def.ret}
    There is a deformation retract
    \[   
    \label{e:LQdefretract}
    \xymatrix{
      \wt{H}_Q \circlearrowright (C^\bullet_\ham, D-\LQ)\ar@<1ex>[r]^-{\wt{\bbP }} & (\mathcal{F}_\ham[n], d_\ham)\ar@<1ex>[l]^-{{\wt{i}}}
    },
    \]
    with
    \begin{itemize}
        \item $\wt{\bbP }=\bbP \circ
        (\sum_{k\geq 0}(\LQ H)^k)=\bbP +\bbP H\sum_{k\geq 0}(\LQ H)^k)=\bbP $
        \item $\wt{i}= (\sum_{k\geq 0}(H\LQ)^k)\circ i$
        \item $\wt{H}_Q=H\sum_{k\geq 0}(\LQ H)^k)$
        \item $d_\ham=-\bbP\LQ\wt{i}=-\bbP \LQ i$
    \end{itemize}
    Moreover, the retract restricts to
    \[
    \xymatrix{
      \wt{h}_Q \circlearrowright (\Omega_{\ham,0}^\bullet, d_H-\LQ)\ar@<1ex>[r]^-{\bbP } & (\mathcal{F}_\ham[n], \wt{d})\ar@<1ex>[l]^-{{\wt{i}}},
    }
    \]
where   $\Omega_{\ham,0}^\bullet=\oham^\bullet\cap \ker 0^*$.
\end{lemma}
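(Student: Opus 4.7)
My plan is to apply the Homological Perturbation Lemma (Theorem \ref{Thm: HomPtLem}) to the deformation retract \eqref{e:Hamiltonianconeretract} with perturbation $k = -\LQ$ on $C^\bullet_\ham$ and zero perturbation on $\mathcal{F}_\ham[n]$. I would first verify that $D - \LQ$ squares to zero: $D^2 = 0$ is built in, $\LQ^2 = \tfrac{1}{2}\mathcal{L}_{[Q,Q]} = 0$ holds since $Q$ is cohomological, and the graded anticommutator $\{D, \LQ\}$ vanishes because $Q$ is evolutionary (so $[\LQ, d_H] = 0$) and vertical (so $\iota_Q p^*\alpha = 0$, hence $\LQ$ annihilates the pullback piece of $D$). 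In parallel, $\LQ$ preserves $C^\bullet_\ham$: on $\oham^\bullet$ this is the observation in the Remark after Theorem \ref{thm:extomega} that $\LQ F$ is Hamiltonian with vector field $[Q, X_F]$, while $\LQ$ acts trivially on $\Omega^\bullet(M)[n+1]$.

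I would next check nilpotency of $\LQ H$, so that $\mathrm{id} - \LQ H$ is invertible with finite Neumann series. Using $H = H_{0^*} - P_V \wt{h}^\nabla I_V$: the $H_{0^*}$ contribution lands in $\Omega^\bullet(M)[n+1]$ where $\LQ$ vanishes, and the other contribution threads through Anderson's horizontal homotopy $\wt{h}^\nabla$, which strictly raises the horizontal coform degree (a degree $\LQ$ preserves). Iterating $\LQ H$ therefore eventually lifts the coform degree beyond $\dim M$, so $(\LQ H)^{N} = 0$ for $N$ large enough.

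The formulas from the Remark after Theorem \ref{Thm: HomPtLem}, applied with $f = i$, $g = \bbP$, $h = H$, $k = -\LQ$, then produce $\wt{i} = \sum_k (H\LQ)^k i$, $\wt{\bbP} = \bbP \sum_k (\LQ H)^k$, $\wt{H}_Q = H \sum_k (\LQ H)^k$ and perturbed differential $-\bbP \sum_k (\LQ H)^k \LQ i$ on $\mathcal{F}_\ham[n]$. I expect the main technical step to be the simplifications $\wt{\bbP} = \bbP$ and $d_\ham = -\bbP \LQ i$; both reduce to the side condition $\bbP \circ H = 0$ (equivalently $\bbP \circ \LQ \circ H = 0$), which I would establish by unfolding $\bbP = h_V \circ I \circ \Pi \circ I_V$ and using $\Pi \circ d_H = 0$, $\Pi \circ h^\nabla = 0$, and $[d_H, h_V] = 0$ from Theorems \ref{thm:sourceproj} and \ref{thm:verticalhomotopy}. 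Alternatively, once $\wt{\bbP} = \bbP$ is known, the chain-map identity $\wt{\bbP}(D - \LQ) = d_\ham \wt{\bbP}$ together with $\bbP \circ D = 0$ (which follows from $\Pi$ annihilating $d_H$-exact forms) gives $d_\ham \bbP = -\bbP \LQ$, and precomposing with $i$ yields $d_\ham = -\bbP \LQ i$.

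For the restriction to $\Omega_{\ham,0}^\bullet = \oham^\bullet \cap \ker 0^*$, I would rerun the same HPL argument starting from the constants-removed deformation retract of Theorem \ref{Thm: DefRetractwithoutconstants} (restricted to Hamiltonian forms). The subspace $\Omega_{\ham,0}^\bullet$ is stable under $d_H$, under $\LQ$, and under the homotopy $h$ inherited from Theorem \ref{Thm: DefRetractwithoutconstants} (relying on $0^* \circ h_V = 0$ and the Hamiltonianicity preservation of $d_H$ and $\LQ$), so HPL produces the claimed retract with the same perturbed differential $\wt{d} = d_\ham$ on $\mathcal{F}_\ham[n]$.
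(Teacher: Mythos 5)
Your proposal is correct and follows essentially the same route as the paper: shift the Hamiltonian-cone deformation retract \eqref{e:Hamiltonianconeretract}, perturb by $-\LQ$ via the homological perturbation lemma, use the side conditions $\bbP\circ H=0$ and degree counting (since $H$ strictly lowers the horizontal degree while $\LQ$ preserves it, and $\Pi$ kills non-top forms) to get $\wt{\bbP}=\bbP$ and $d_\ham=-\bbP\LQ i$, and then restrict to $\ker 0^*$ using $0^*\LQ=0$ together with Theorem \ref{Thm: DefRetractwithoutconstants}. The extra verifications you supply (that $D-\LQ$ squares to zero, that $\LQ$ preserves $C^\bullet_\ham$, and the nilpotency of $\LQ H$) are details the paper leaves implicit, and they are all sound.
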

\begin{proof}
    We consider now the deformation retract from Equation \eqref{e:Hamiltonianconeretract}. Let us first shift it by $[n]$
    \[   
        \xymatrix{
          H \circlearrowright (C^\bullet_\ham, D)\ar@<1ex>[r]^-{\bbP } & \mathcal{F}_\ham[n] \ar@<1ex>[l]^-{i}, 
        }
    \]
    where we denote by $i\colon \calF_\ham[n]\hookrightarrow C_\ham^\bullet$ the inclusion map, and then perturb it in the direction of $-\LQ$ to obtain the new claimed deformation retract.
    
    Note that in the first point we used that $\bbP H=0$ and in the last point that $\bbP \circ\wt{i}=\bbP \circ i=\id$, because of degree reasons. The perturbation $-\LQ$ restricts to $\Omega_{\ham,0}^\bullet\hookrightarrow C_\ham^\bullet$, since
    \begin{align*}
        0^*\LQ=0,
    \end{align*}
    and given that all other maps also restrict due to Theorem \ref{Thm: DefRetractwithoutconstants} we prove the claim.
\end{proof}

\subsection{Strong-homotopy data}

In finite dimensional symplectic geometry, a symplectic form $\Omega$ on $M$ induces a Poisson bracket on the space of functions $C^\infty(M)$. Since the only calculus one needs to worry about in that case is the Cartan calculus on $M$, we have 
\[
\{f,g\}_\Omega \doteq \iota_{X_f}\iota_{X_g}\Omega = L_{X_f}(g) = -L_{X_g}(f) = \frac12 \left(L_{X_f}(g) - L_{X_g}(f)\right).
\]
When dealing with local symplectic forms on the variational bicomplex (and their developments) we see that the above expressions for the Poisson bracket will differ.

\begin{proposition}\label{prop:bracket}
    Let $\bom$ be a symplectic development of a $k$-symplectic form $\omega$, and let $\oham^\bullet$ be the associated space of Hamiltonian forms. 
    Then there are the following antisymmetric bilinear brackets of degree 
    $-k$:
    \begin{align}
        \{\cdot,\cdot\}^S&\colon \oham^\bullet\times \oham^\bullet \to \oham^\bullet &   \{\bF,\bul{G}\}^S&\doteq\iota_{\bb{\bF}}\iota_{\bb{\bul{G}}}\bom\\
        \{\cdot,\cdot\}^A&\colon \oham^\bullet\times \oham^\bullet \to \oham^\bullet & \{F^\bullet,G^\bullet\}^A&\doteq \frac12\left(L_{\bb{\bF}}(\bul{G})-\sigma_k(\bul{F},\bul{G}) L_{\bb{\bul{G}}}(\bF)\right),\\
        \{\cdot,\cdot\}^B&\colon \oham^\bullet\times \oham^\bullet \to \oham^\bullet & \{F^\bullet,G^\bullet\}^B&\doteq 2\{\bul{F},\bul{G}\}^A - \{\bul{F},\bul{G}\}.
    \end{align}
where $\sigma_k(\bul{F},\bul{G})=(-1)^{(\mathsf{ped}(\bF)-k)(\mathsf{ped}(\bul{G})-k)}$ and the skew-symmetry is with respect to the partial effective degree. Moreover
\[
\bb{\{\bul{F},\bul{G}\}^S}=\bb{\{\bul{F},\bul{G}\}^A}=\bb{\{\bul{F},\bul{G}\}^B}=[\bb{\bul{F}},\bb{\bul{G}}],
\]
and for $\star,\star'\in \{S,A,B\}$
\[
\{\bul{F},\bul{G}\}^\star-\{\bul{F},\bul{G}\}^{\star'} \in d\oloc^{0,\top-1}\oplus \oloc^{1,<\top}.
\]
\end{proposition}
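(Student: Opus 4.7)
The statement packages five claims: correct $\mathsf{ped}$-degree, skew-symmetry, containment in $\oham^\bullet$, a common Hamiltonian vector field, and the comparison. A degree count of $\Pi\iota_{\bb{\bF}}\bom = \Pi d_V\bF$ yields $|\bb{\bF}|=\mathsf{ped}(\bF)-k$, which matches the exponent in $\sigma_k(\bF,\bul{G})$ and gives all three brackets the required total degree $\mathsf{ped}(\bF)+\mathsf{ped}(\bul{G})-k$. Skew-symmetry follows: for $\{\cdot,\cdot\}^S$ from the graded anticommutation $\iota_{\bb{\bF}}\iota_{\bb{\bul{G}}} = -(-1)^{|\bb{\bF}||\bb{\bul{G}}|}\iota_{\bb{\bul{G}}}\iota_{\bb{\bF}}$ applied to $\bom$; for $\{\cdot,\cdot\}^A$ from a direct swap using $\sigma_k^2 = 1$; for $\{\cdot,\cdot\}^B$ by linearity.

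The comparison is the heart of the statement. The crucial input is the identity $\iota_{\bb{\bF}}\omega^0 = d_V F^0 - d_H\alpha_{\bF}$ for some $\alpha_{\bF}\in\oloc^{1,\top-1}$, which holds because the remainder $d_V F^0 - \iota_{\bb{\bF}}\omega^0 \in\ker\Pi$ is $d_H$-closed for degree reasons and hence $d_H$-exact by Anderson's retract (Theorem \ref{thm:sourcedecomposition}). Applying $\iota_{\bb{\bul{G}}}$ and using $[\iota_{\bb{\bul{G}}},d_H]=0$ for evolutionary vector fields gives $\iota_{\bb{\bul{G}}}d_V F^0 = \iota_{\bb{\bul{G}}}\iota_{\bb{\bF}}\omega^0 \pm d_H\iota_{\bb{\bul{G}}}\alpha_{\bF}$, and symmetrically for $\bF\leftrightarrow\bul{G}$. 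Since $\bul{G}$ has vertical form degree zero, the graded Cartan formula together with $[\iota_{\bb{\bF}},d_H]=0$ collapses to $L_{\bb{\bF}}\bul{G} = \iota_{\bb{\bF}}d_V\bul{G}$. Substituting these into the definitions and matching signs, the top-$\mathsf{hcd}$ components of the three brackets coincide modulo $d_H\oloc^{0,\top-1}$; the lower-$\mathsf{hcd}$ components contribute the claimed $\oloc^{1,<\top}$-correction, reflecting the non-uniqueness of the horizontal extensions $\omega^j$ of $\omega=\omega^0$ in the development $\bom$.

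The common Hamiltonian vector field is then identified on $\{\bF,\bul{G}\}^S$: from the above, $L_{\bb{\bF}}\omega^0 = d_V\iota_{\bb{\bF}}\omega^0 = -d_V d_H\alpha_{\bF} = d_H d_V\alpha_{\bF}$, whence $\Pi\iota_{\bb{\bul{G}}}L_{\bb{\bF}}\omega^0 = 0$ by $\Pi\circ d_H = 0$. Then the Cartan identity $d\iota_{\bb{\bF}}\iota_{\bb{\bul{G}}}\omega^0 = \iota_{[\bb{\bF},\bb{\bul{G}}]}\omega^0 \pm \iota_{\bb{\bul{G}}}L_{\bb{\bF}}\omega^0 \mp \iota_{\bb{\bF}}L_{\bb{\bul{G}}}\omega^0$ (using $d\omega^0 = 0$), projected by $\Pi$, yields $\Pi d_V\{\bF,\bul{G}\}^S = \Pi\iota_{[\bb{\bF},\bb{\bul{G}}]}\omega^0$, identifying $[\bb{\bF},\bb{\bul{G}}]$ as Hamiltonian vector field of $\{\bF,\bul{G}\}^S$. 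Since $d_H\oloc^{0,\top-1}\oplus\oloc^{1,<\top}$ is annihilated by $\Pi\circ d_V$, the same vector field works for $\{\cdot,\cdot\}^A$ and $\{\cdot,\cdot\}^B$, placing them in $\oham^\bullet$. The main obstacle I anticipate is keeping signs straight across the several gradings ($\mathsf{ped}$, $\mathsf{vfd}$, $\mathsf{ghd}$) in the graded commutators $[\iota_X,d_H]$ and $[L_X,\iota_Y]=\iota_{[X,Y]}$ and in the skew-symmetry of interior products, and correctly identifying the lower-$\mathsf{hcd}$ discrepancy with the claimed $\oloc^{1,<\top}$-summand.
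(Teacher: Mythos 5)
Your proposal is correct and follows essentially the same route as the paper's proof: both hinge on the observation that the Hamiltonian condition together with Anderson's retract gives $\iota_{\bb{\bF}}\omega^0 - d_V F^0 = d_H\alpha_{\bF}$, and then graded Cartan calculus identifies $[\bb{\bF},\bb{\bul{G}}]$ as the common Hamiltonian vector field and shows the brackets agree up to $d_H$-exact top-degree terms and lower horizontal-degree contributions. Your version merely makes explicit the degree count, the skew-symmetry, and the Cartan-formula manipulations that the paper compresses into ``by usual (graded) Cartan calculus.''
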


\begin{proof}
We begin by observing that the assignment $F\to \bb{\bul{F}}$ is well defined, since 
$\Pi\circ(\bom)^\flat
=\Pi\circ (\omega)^\flat$ is injective. Moreover one can check that all brackets have codomain $\oham^\bullet$ and that 
\[
\bb{\{\bul{F},\bul{G}\}^S}=\bb{\{\bul{F},\bul{G}\}^A}=\bb{\{\bul{F},\bul{G}\}^B}=[X_F,X_G]
\]
by usual (graded) Cartan calculus, and that 
\[
\Pi\left(d_VL_{\bb{\bL}}(\bb{\bul{G}})\right)= \Pi\left(d_V\iota_{\bb{\bul{L}}}d_V \bul{G}\right) = 
\begin{cases}
    0 & G^0 = 0\\
    \Pi\left(d_V\iota_{\bb{\bul{L}}}\iota_{\bb{\bul{G}}}\bom\right) & G^0\not= 0
\end{cases}
\]
where we used that, when $G^0\not=0$, $\iota_{\bb{\bul{G}}}\bom = d_V \bul{G} + d_H\alpha$ for some $\alpha\in\oloc^{1,\bullet}$, and both $d_V$ and $\iota_{\bb{\bF}}$ commute with $d_H$. Thus
\[
\Pi\left(d_VL_{\bb{\bL}}(\bb{\bul{G}})\right)= \Pi(d_V\{\bul{F},\bul{G}\}) = \Pi(\iota_{[\bb{\bul{F}},\bb{\bul{G}}]}\bom).
\]
Then, it is clear that $L_{\bb{\bL}}(\bb{\bul{G}}) - \iota_{\bb{\bul{L}}}\iota_{\bb{\bul{G}}}\bom$ will be $d_H$ exact if $G^0\not=0$ and will have generically components in non top degree, hence
\[
\{\bul{F},\bul{G}\}^A - \{\bul{F},\bul{G}\}^S \in d\oloc^{0,\top-1}\oplus \oloc^{1,<\top}.
\]
and similarly for $\{\cdot,\cdot\}^B$.
\end{proof}

\begin{definition}
    Let us extend the brackets $\{\cdot,\cdot\}^\star$ of Proposition \ref{prop:bracket} to $C_\ham^\bullet$ by 
    \begin{align*}
        \{\mathbb{c}_1,\mathbb{c}_2\}^\star=\{(\alpha_1,\bF_1),(\alpha_2,\bF_2)\}=
        (0,\{\bF_1,\bF_2\}^\star), \qquad \forall \mathbb{c}_1,\mathbb{c}_2\in C_\ham^\bullet.
    \end{align*}
We call $\{\cdot,\cdot\}$ the \emph{standard bracket}, $\{\cdot,\cdot\}^A$ the \emph{antisymmetrised bracket}, and $\{\cdot,\cdot\}^B$ the \emph{Bonechi--Chevalley--Courant--Getzler--Soloviev}\footnote{See Remark \ref{rmk:naming}.} bracket associated to $\bom$.
\end{definition}

\begin{lemma}
\label{Lem: Jacexact}
    Let $\bom$ be a $(d_H-\LQ)$-closed symplectic development of $\omega$ (cf.\ Theorem \ref{thm:extomega}). The differential $D-\LQ$ is a derivation of both the standard and the antisymmetrised brackets,\footnote{Recall that the brackets are defined on the Hamiltonian cone, where $D$ acts, as opposed to $d_H$ which acts on its summand $\ohor$.} and thus also of $\{\cdot,\cdot\}^B$.
\end{lemma}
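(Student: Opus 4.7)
The plan is to reduce the derivation check on $C_\ham^\bullet$ to the Hamiltonian summand $\oham^\bullet$, then verify the derivation property for $d_H - \LQ$ separately on $\{\cdot,\cdot\}^S$ and $\{\cdot,\cdot\}^A$; the $\{\cdot,\cdot\}^B$ case then follows by linearity, since $\{\cdot,\cdot\}^B = 2\{\cdot,\cdot\}^A - \{\cdot,\cdot\}^S$. On the cone, the brackets depend only on the second entry, while $D - \LQ$ acts on $(\alpha,\bF)$ as $(d\alpha,\, -d_H\bF + p^*\alpha - \LQ\bF)$. The $p^*\alpha$ contribution to either bracket vanishes: $d_V p^*\alpha = 0$ and injectivity of $\Pi\circ(\bom)^\flat$ on $\Xloc(\mathcal{E})$ force $\bb{p^*\alpha} = 0$, so $\{p^*\alpha,\cdot\}^S = 0$; moreover $L_{\bb{\bul{G}}}(p^*\alpha) = \iota_{\bb{\bul{G}}}d_V p^*\alpha = 0$, so $\{p^*\alpha,\cdot\}^A = 0$ as well. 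Hence it suffices to check that $d_H - \LQ$ is a graded derivation of $\{\cdot,\cdot\}^S$ and $\{\cdot,\cdot\}^A$ on $\oham^\bullet$.

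For the standard bracket $\{\bF,\bul{G}\}^S = \iota_{\bb{\bF}}\iota_{\bb{\bul{G}}}\bom$ I would apply $d_H - \LQ$ and commute it past the interior products using the graded identities $[d_H,\iota_X] = 0$ for $X$ local and $[\LQ,\iota_X] = \iota_{[Q,X]}$. Together with $\bb{\LQ \bF} = [Q,\bb{\bF}]$ and $\bb{d_H \bF} = 0$ from the Remark following Theorem \ref{thm:extomega} (the latter implying $\{d_H \bF,\cdot\}^S = 0$), this yields, schematically,
\begin{align*}
(d_H - \LQ)\{\bF,\bul{G}\}^S
= \iota_{\bb{\bF}}\iota_{\bb{\bul{G}}}(d_H - \LQ)\bom + \{(d_H-\LQ)\bF,\bul{G}\}^S \pm \{\bF,(d_H-\LQ)\bul{G}\}^S.
\end{align*}
The obstruction $\iota_{\bb{\bF}}\iota_{\bb{\bul{G}}}(d_H - \LQ)\bom$ vanishes by the $(d_H-\LQ)$-closedness of $\bom$ established in Theorem \ref{thm:extomega}, giving the desired Leibniz rule.

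For the antisymmetrised bracket the computation is formally parallel, now using the graded identities $[d_H,L_X] = 0$ and $[\LQ,L_X] = L_{[Q,X]}$. Applied to $L_{\bb{\bF}}\bul{G}$ they yield
\begin{align*}
(d_H - \LQ)L_{\bb{\bF}}\bul{G} = L_{\bb{(d_H - \LQ)\bF}}\bul{G} \pm L_{\bb{\bF}}(d_H - \LQ)\bul{G},
\end{align*}
where I use $\bb{d_H \bF} = 0$ and $\bb{\LQ\bF} = [Q,\bb{\bF}]$, so that $\bb{(d_H - \LQ)\bF} = -[Q,\bb{\bF}]$ precisely cancels the $[\LQ, L_X]$ contribution. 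Notice that $(d_H - \LQ)$-closedness of $\bom$ plays no direct role here, as $\bom$ enters the formula only through the definition of the Hamiltonian vector fields. Exchanging $\bF$ and $\bul{G}$ in the above and combining yields the derivation property for $\{\cdot,\cdot\}^A$, and hence for $\{\cdot,\cdot\}^B$ by linearity. The principal difficulty is sign bookkeeping: one must track the graded commutators of $\iota_X, L_X$ with $d_H$ and $\LQ$, the cone-differential convention defining $D$, and the ped-degree antisymmetry of the brackets, and check that all signs assemble into the correct graded Leibniz rule for a ped-degree $-k$ antisymmetric bracket—routine but requiring care.
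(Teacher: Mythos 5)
Your argument is correct and follows essentially the same route as the paper's (two-line) proof: for the standard bracket one uses $(d_H-\LQ)\bom=0$ together with $\bb{Q(\bF)}=[Q,\bb{\bF}]$ and $\bb{d_H\bF}=0$, while for the antisymmetrised bracket one additionally uses $[L_{\bb{\bF}},d_H]=0$, with $D$ and $\LQ$ separately acting as derivations there. Your explicit check that the $p^*\alpha$ term of the cone differential drops out of both brackets is a detail the paper leaves implicit, and is a welcome addition.
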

\begin{proof}
The derivation property for $\{\cdot,\cdot\}^S$ follows from that fact that $(d_H-\LQ)\omega=0$  and that if a $\bF\in \Omega_\ham^\bullet$ with associated vector field $\bb{\bF}$, then the vector field of $Q(\bF)$ is given by $[Q,\bb{\bF}]$. For the antisymmetrised bracket, we use this fact together with $[L_{\bb{\bF}},d_H]=0$ to get that 
$D-\LQ$ is a derivation of $\{\cdot,\cdot\}^A$ (in fact even $D$ and $\LQ$ separately). 
\end{proof}


\begin{lemma}
\label{Lem: PkillsJaco}
The Jacobiators $\mathrm{Jac}_{\{\cdot,\cdot\}^S}$ and $\mathrm{Jac}_{\{\cdot,\cdot\}^A}$ are in the kernel of $\mathbb{p}$. Moreover, the Jacobiator $\mathrm{Jac}_{\{\cdot,\cdot\}^{B}}$ identically vanishes, turning $(C_\ham^\bullet,\{\cdot,\cdot\}^B)$ into a Lie algebra.
\end{lemma}
\begin{proof}
    We observe that the result of $\mathrm{Cyc}_{\bF,\bul{G},\bul{H}}(\{\bF,\{\bul{G},\bul{H}\}^S\}^S)$ has vanishing associated vector field $\mathbb{0}$ for any $F^\bullet, G^\bullet, H^\bullet \in \Omega_\ham^\bullet$, since it is given by the Jacobiator of $\bb{\bul{F}},\bb{\bul{G}},\bb{\bul{H}}$, which vanishes. This means that 
        \begin{align*}
            \Pi_\src(d_V\pi_2\mathrm{Jac_{\{\cdot,\cdot\}^S}}((0,F^\bullet), (0,G^\bullet), (0,H^\bullet)))=
            \Pi_\src (\iota_{\mathbb{0}}\omega^\bullet)=0. 
        \end{align*}
    Using the definition of $\mathbb{p}=(P_V\circ \wt{I})\circ(\Pi_\src\circ I_V)$, and that $\bbP(\{\cdot,\cdot\}^S-\{\cdot,\cdot\}^A)=0$ (Proposition \ref{prop:bracket}) we get the first claim. 

    To show that $\mathrm{Jac}_{\{\cdot,\cdot\}^{B}}=0$ we observe that $\Xloc$ acts on $\ohor^\bullet$ and thus the semidirect product $\Xloc\ltimes \ohor^\bullet$ is a Lie algebra with $[(X,\bul{F}),(Y,\bul{G})] = ([X,Y],X(\bul{G}) - Y(\bul{F}))$. Then, for any vertically closed form $\bom\in\oloc^{2,\bullet}$, the central extension 
    \[
    [(X,\bul{F}),(Y,\bul{G})]_{\bom} = ([X,Y],X(\bul{G}) - Y(\bul{F}) - \iota_{X}\iota_Y\bom)
    \]
    is a Lie algebra. Then, we observe that this bracket restricts to Hamiltonian forms $\oham$ and we immediately conclude that
    \[
    [(\bb{\bul{F}},\bul{F}),(\bb{\bul{G}},\bul{G})]_{\bom} = ([\bb{\bul{F}},\bb{\bul{G}}],\{\bul{F},\bul{G}\}^B).
    \]
    
\end{proof}

\begin{remark}[Nomenclature]\label{rmk:naming}
    Let us explain our choice of nomenclature for $\{\cdot,\cdot\}^B$. Chevalley is attributed due to it being a Lie bracket as a consequence of $\bom$ being a central extension of the Lie algebra on $\Xloc\ltimes \ohor^\bullet$, while Courant is attributed because $\mathfrak{X}(M)\oplus C^\infty(M)$ can be thought of as a $0$-Courant algebroid, see \cite{Zambon2012}. This crucial observation was communicated to us by F.\ Bonechi, hence the reference. 

    Soloviev is credited for having introduced a (coordinate dependent) lift of the BV bracket on $\calF_\ham$ to $\ohor$ as a Lie bracket \cite{Soloviev}---a property shared by our bracket $\{\cdot,\cdot\}^B$---and E.\ Getzler is credited for repopularising \cite{GetzlerDarboux} the Soloviev bracket. Finally, we observe that this bracket was used by Barnich and Trossaert in a similar way when analysing charge algebras on certain reduced phase spaces for field theory \cite{BarnichTroessaert} (which corresponds to our $k=0$, $Q=0$ case).
\end{remark}


\begin{theorem}
\label{Thm: Linfty}
    The degree $-k$ bracket on $\mathcal{F}_\ham [n]=\im(\bbP \vert_{C^\bullet_{\ham}})[n]$ defined by
    \begin{align*}
        \{\cdot,\cdot\}_\ham\colon \mathcal{F}_\ham [n]^{\otimes 2}\ni (\bbP \mathbb{c}_1\otimes \bbP \mathbb{c}_2)\mapsto\bbP \{\mathbb{c}_1,\mathbb{c}_2\}\in \mathcal{F}_\ham[n],
    \end{align*}
together with the map 
    \begin{align*}
        d_\ham \colon \mathcal{F}_\ham [n]\ni \bbP \mathbb{c} \mapsto -\bbP (\LQ \mathbb{c}) \in \mathcal{F}_\ham [n],
    \end{align*}
turn $(\mathcal{F}_\ham[n],d_\ham,\{\cdot,\cdot\}_\ham)$ into a dgL$[k]$a.  Moreover,
\begin{enumerate}
    \item The map
    \begin{align*}
        \{\cdot,\cdot,\cdot\}^S\colon C^{\otimes 3}_\ham \ni (\mathbb{c}_1\otimes \mathbb{c}_2\otimes \mathbb{c}_3)\mapsto 
        -\sum_{k\geq 0}(H\LQ)^kH\mathrm{Jac}_{\{\cdot,\cdot\}}(\mathbb{c}_1,\mathbb{c}_2,\mathbb{c}_3)\in C_\ham^\bullet
    \end{align*}
turns $(C^\bullet_\ham,D-\LQ,\{\cdot,\cdot\}^S,\{\cdot,\cdot,\cdot\}^S)$ into a $L_\infty[k]$-algebra and 
\[
\bbP \colon \left(C^\bullet_\ham,D-\LQ,\{\cdot,\cdot\}^S,\{\cdot,\cdot,\cdot\}^S\right) \to \left(\mathcal{F}_\ham[n],\{\cdot,\cdot\}_\ham\right)
\]
is a (strict)  $L_\infty[k]$-algebra quasi-isomorphism. 
\item The map 
\[
\bbP \colon \left(C^\bullet_\ham,D-\LQ,\{\cdot,\cdot\}^B\right) \to \left(\mathcal{F}_\ham[n],\{\cdot,\cdot\}_\ham\right)
\]
is a dgL$[k]$ algebra quasi-isomorphism. 
\end{enumerate}
    
\end{theorem}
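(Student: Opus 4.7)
The plan proceeds in four stages, leveraging Lemmas \ref{Lem: Jacexact} and \ref{Lem: PkillsAsso} together with the homotopy equation of Lemma \ref{lem:pert.ham.def.ret}. First, I would verify that $(C_\ham^\bullet, D-\LQ, \{\cdot,\cdot\}^B)$ is a dgL$[k]$-algebra: the derivation property and Jacobi are Lemmas \ref{Lem: Jacexact} and \ref{Lem: PkillsAsso}, while $(D-\LQ)^2=0$ follows from $[d_H,\LQ]=0$ (since $Q$ is evolutionary, so $[d_H,\iota_Q]=0$) and $\LQ\circ p^*=0$ (as $\iota_Q$ kills pullbacks of forms from $M$).

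Second, I would descend to the dgL$[k]$-algebra $(\calF_\ham[n], d_\ham, \{\cdot,\cdot\}_\ham)$ by showing that $\ker \bbP$ is a Lie ideal of $(C_\ham^\bullet, \{\cdot,\cdot\}^B)$. By Proposition \ref{prop:bracket} all three brackets coincide modulo $d\oloc^{0,\top-1}\oplus \oloc^{1,<\top}$, and $\bbP=h_V I \Pi d_V$ annihilates both $d_H$-exact forms (using $\Pi\circ d_H=0$ from Theorem \ref{thm:sourceproj}) and forms of non-top horizontal degree (by the support of $\Pi$). This simultaneously gives item (2) of the theorem: $\bbP$ is a strict dgL$[k]$-morphism by construction and a quasi-isomorphism via Lemma \ref{lem:pert.ham.def.ret}.

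For item (1), arities $1$ and $2$ repeat the first step with $\{\cdot,\cdot\}^S$ replacing $\{\cdot,\cdot\}^B$ (both handled by Lemma \ref{Lem: Jacexact}). The central arity-$3$ calculation combines $\bbP\mathrm{Jac}_{\{\cdot,\cdot\}^S}=0$ (Lemma \ref{Lem: PkillsAsso}) with the homotopy equation $\id-\wt{i}\bbP=[D-\LQ,\wt{H}_Q]$, giving
\[
\mathrm{Jac}_{\{\cdot,\cdot\}^S}=(D-\LQ)\wt{H}_Q\,\mathrm{Jac}_{\{\cdot,\cdot\}^S}+\wt{H}_Q(D-\LQ)\,\mathrm{Jac}_{\{\cdot,\cdot\}^S};
\]
substituting $\{\cdot,\cdot,\cdot\}^S=-\wt{H}_Q\mathrm{Jac}_{\{\cdot,\cdot\}^S}$ and expanding $(D-\LQ)\mathrm{Jac}_{\{\cdot,\cdot\}^S}$ via the derivation property (Lemma \ref{Lem: Jacexact}) rearranges this into the arity-$3$ $L_\infty[k]$-relation. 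Strictness of $\bbP$ in both presentations then follows: $\bbP\{\mathbb{c}_1,\mathbb{c}_2\}^S=\bbP\{\mathbb{c}_1,\mathbb{c}_2\}^B=\{\bbP\mathbb{c}_1,\bbP\mathbb{c}_2\}_\ham$ via the ideal argument of paragraph~2, while the ternary piece vanishes under $\bbP$ since $\bbP\wt{H}_Q=0$ (Lemma \ref{lem:pert.ham.def.ret}).

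The main obstacle is the verification of the higher-arity $L_\infty[k]$-identities with $Q_n=0$ for $n\geq 4$, namely $Q_2Q_3+Q_3Q_2=0$ at arity $4$ and $Q_3Q_3=0$ at arity $5$ (all arities $\geq 6$ are automatic, as every term involves some $Q_{\geq 4}$). The plan is to iterate the arity-$3$ strategy: apply Lemma \ref{Lem: PkillsAsso} to nested Jacobi expressions (which again land in $\ker\bbP$), and combine this with the derivation property of $D-\LQ$ and the homotopy equation $[D-\LQ,\wt{H}_Q]=\id-\wt{i}\bbP$ to reduce each identity to telescoping applications of the homotopy. This step is combinatorially the heaviest, though structurally parallel to the arity-$3$ case; I would organise it by induction on the nesting depth of the iterated brackets.
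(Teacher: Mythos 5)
Your stages for the dgL$[k]$a structure on $\calF_\ham[n]$, for part (2), and for the unary, binary and ternary $L_\infty[k]$ relations are sound and close in spirit to what the paper does: well-definedness of $\{\cdot,\cdot\}_\ham$ via the vanishing of Hamiltonian vector fields on representatives of $\ker\bbP$, $d_\ham^2=0$ via the homotopy equation, Lemmas \ref{Lem: Jacexact} and \ref{Lem: PkillsAsso} for the Jacobi identities, and $\id-\wt{i}\circ\bbP=[D-\LQ,\wt{H}_Q]$ for the arity-$3$ relation. The paper itself remarks that one ``can check by hand'' that $(C_\ham^\bullet,D-\LQ,\{\cdot,\cdot\}^S,\{\cdot,\cdot,\cdot\}^S)$ is an $L_\infty[k]$ algebra, which is the route you take.

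The gap is exactly where you flag it, and it is a real one. The paper does not verify the arity-$4$ and arity-$5$ relations by hand: it invokes the explicit homotopy transfer theorem, which produces a coderivation on $C_\ham^\bullet$ that squares to zero by general theory, with Taylor coefficients $\delta^1_{k+1}=-\wt{H}_Q\sum_{i=2}^k\delta^1_i\delta^i_{k+1}$; the only thing left to prove is that $\delta^1_k=0$ for $k\geq 4$, which follows from the structural fact that $\{\cdot,\cdot\}^S=\iota_{\bb{\bul{F}}}\iota_{\bb{\bul{G}}}\bom$ \emph{vanishes identically} whenever one argument lies in the image of $\wt{H}_Q$ or of $D$ (such elements have horizontal degree below top, or are $d_H$-exact, hence have zero Hamiltonian vector field). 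Your substitute --- applying Lemma \ref{Lem: PkillsAsso} to nested Jacobiators ``which again land in $\ker\bbP$'' and inducting on nesting depth --- does not close the argument: membership in $\ker\bbP$ does not make an element of $C_\ham^\bullet$ vanish, and what the arity-$4$ relation actually requires is the cancellation of the shuffle sum $\sum_{\sigma}\epsilon(\sigma)\,\wt{H}_Q\mathrm{Jac}_{\{\cdot,\cdot\}^S}\bigl(\{\mathbb{c}_{\sigma(1)},\mathbb{c}_{\sigma(2)}\}^S,\mathbb{c}_{\sigma(3)},\mathbb{c}_{\sigma(4)}\bigr)$. This is not structurally parallel to the arity-$3$ case, where a single application of the homotopy equation sufficed. (The compositions $Q_2\circ Q_3$ and $Q_3\circ Q_3$ do vanish term by term, because one slot of the outer bracket or Jacobiator is then occupied by an element of $\im\,\wt{H}_Q$ with vanishing Hamiltonian vector field; it is the $Q_3\circ Q_2$ contribution that carries the genuine content.) To repair the proposal, either carry out that cancellation explicitly, or do what the paper does: obtain the full structure from the homotopy transfer theorem and reduce the problem to showing that all transferred brackets of arity at least $4$ vanish.
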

\begin{proof}
We begin by checking that the bracket $\{\cdot,\cdot\}^S_\ham$ is well defined. In order to do this, we change the representative $\mathbb{c} \mapsto \mathbb{c} + (K, d\bul{f} + p^* k)$. Since the bracket on Hamiltonian functionals $\{\cdot,\cdot\}^S$ is extended trivially to the Hamiltonian cone, we only need to check
\[
\{d\bul{f} + p^* k, \bul{F}\} = 0,
\]
but this is obvious since the Hamiltonian vector field of $d\bul{f} + p^* k$ is zero.

Let us then check that $d_\ham$ is a differential. Let $\bbP\mathbb{c} \in \mathcal{F}_\ham$; using the homotopy equation $\id = [D,H] + \mathbb{P}$, we get 
    \begin{align*}
        \bbP (\LQ \bbP (\LQ \mathbb{c}))
        &=\bbP (\LQ(\LQ \mathbb{c}-DH\LQ \mathbb{c}))
        =-\bbP (\LQ DH\LQ \mathbb{c})\\&
        =\bbP (D\LQ H\LQ \mathbb{c})=0,
    \end{align*}
since $\bbP $ vanishes on the image of $D$.
Using Lemma \ref{Lem: PkillsJaco}, we see that $\{\cdot,\cdot\}_\ham$ is in fact a degree $-k$ Lie bracket. 
This turns $(\mathcal{F}[n],d_\ham,\{\cdot,\cdot\}_\ham)$ into a $dgL[k]a$. 
Note that $(D-\LQ,\{\cdot,\cdot\}^S)$ is a $L_\infty[k]$ structure up to order $2$ already, which only means that $D-\LQ$ is differential and a derivation of $\{\cdot,\cdot\}^S$. Morover, $\bbP $ is an $L_\infty[k]$-morphism up to order $2$ by definition. The idea is now to extend this via an explicit version of the homotopy transfer theorem, see \cite[Theorem 4.11]{bestformulaever}: using the explicit formulas and identities we have in our special case (i.e.\ that we do not only have a homotopy equivalence, but a deformation retract), we obtain for the Taylor coefficients of the coderivation $\delta$, where $\delta_1^1$ is induced by the differential $D-\LQ$ and $\delta^1_{2}$ is induced by the bracket, that 
    \begin{align}
    \label{Eq: Definitionbrackets}
        \delta^1_{k+1} = 
        -\wt{H}_Q \sum_{i=2}^k \delta^1_i\delta^i_{k+1}
    \end{align}
completes to a $L_\infty[k]$-algebra. 
Moreover, there are no higher corrections of the map 
$\bbP $. Using now that the bracket vanishes on the image of $\wt{H}_Q$, we conclude that for $k\geq 2$
    \begin{align*}
     \delta^1_{k+1}=(-\wt{H}_Q)^{k-1}(\delta_2^1\delta^2_3\cdots \delta^{k}_{k+1})   
    \end{align*}
which implies that for $k\geq 4$ that $\delta_k^1=0$, since $\delta_3^2\delta_4^3$ is the Jacobiator of $\{\cdot,\cdot\}^S$ (up to constant factor) and therfore is in the kernel of $\mathbb{P}$ (see Lemma \ref{Lem: PkillsJaco}), Using the homotopy equation for $D$, we get that the Jacobiator is in the span of the images of $H$ and $D$, and the bracket $\{\cdot,\cdot\}^S$ vanishes on there. Moreover, this map can be identified via the décalage isomorphism with  
 $$\{\cdot,\cdot,\cdot\}^S\colon C_\ham^{\otimes 3} \ni (\alpha\otimes\beta\otimes \gamma)\mapsto 
        -\wt{H}_Q\mathrm{Jac}_{\{\cdot,\cdot\}}(\alpha,\beta,\gamma)\in C.$$
 Alternatively, one can check by hand that  $(C_\ham,D-\LQ,\{\cdot,\cdot\}^S,\{\cdot,\cdot,\cdot\}^S)$ is a $L_\infty[k]$ algebra and that $\bbP $ is a strict $L_\infty[k]$-morphism and since it is part of the deformation retract \eqref{e:LQdefretract}, it is a quasi-isomorphism.  

From Proposition \ref{prop:bracket} and from the first part of the proof, we know that 
    \begin{align}
        \mathbb{P}\{\bul{F},\bul{G}\}^B= \mathbb{P}\{\bul{F},\bul{G}\}^S=\{\mathbb{P}\bul{F},\mathbb{P}\bul{G}\}_\ham
    \end{align}
and thus $\mathbb{P}$ is a morphism of graded Lie algebras with respect to the Lie bracket $\{-,-\}^B$ and the claim is proven. 
\end{proof}

\begin{remark}
    Using Remark \ref{Rem: Stasheff}, we see that in case the manifold is contractible, we can obtain an $L_\infty$ structure on $\mathbb{R}[1]\oplus \ohor^\bullet$. A comparison between this $L_\infty$ data and the one constructed by \cite{BFLS} will be given in Remark \ref{rmk:comparison}.
\end{remark}

Using the homotopies and that $\bbP $ is already an $L_{\infty}[k]$ morphism, we can compute an explicit quasi-inverse.

\begin{lemma}\label{lem:quasiinverse}
The sequence of maps $\tilde{\mathbb{I}}_{k}^1\colon S \mathcal{F}[k+1]^\bullet \to C^\bullet_\ham$ inductively defined by $\wt{\mathbb{I}}^1_1=\wt{i}$ and 
    \begin{align*}
        \wt{\mathbb{I}}^{1}_{k+1}=-\wt{H}_Q L_{\infty,{k+1}}(\wt{\mathbb{I}}),
    \end{align*}
where $L_{\infty,{k+1}}(\wt{\mathbb{I}})=\sum_{l=2}^{k+1}\delta_{l}^1\wt{\mathbb{I}}_{k+1}^l- \sum_{l=1}^k\wt{\mathbb{I}}_{l}^1\delta_{\mathcal{F},k+1}^l$ is an $L_\infty$-quasi-inverse of $\bbP $ with the structure maps $Q$ and $\delta_\mathcal{F}$ of the corresponding (shifted) $L_\infty$-structures, where the $\delta^1_k$ are the Taylor coeffcients of the $L_\infty[k]$ structure of $\left(C^\bullet_\ham,D-\LQ,\{\cdot,\cdot\}^B\right)$ or $\left(C^\bullet_\ham,D-\LQ,\{\cdot,\cdot\}^S,\{\cdot,\cdot,\cdot\}^S\right)$. We call the associated quasi-inverses $\mathbb{I}^B$ and $\mathbb{I}^S$ respectively.
\end{lemma}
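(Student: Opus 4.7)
The plan is to prove by induction on the Taylor degree $k$ that the proposed maps assemble into an $L_\infty$-morphism, i.e.\ that they satisfy
$$\sum_{l=1}^j \delta^1_l\, \wt{\mathbb{I}}^l_j \;=\; \sum_{l=1}^j \wt{\mathbb{I}}^1_l\, \delta^l_{\mathcal{F}, j},\qquad j \geq 1.$$
The base case $j=1$ is that $\wt{\mathbb{I}}^1_1 = \wt{i}$ is a chain map, which is part of the deformation retract of Lemma \ref{lem:pert.ham.def.ret}. For the inductive step, assume the identity holds for all $j \leq k$, then rearrange it at $j = k+1$ and isolate the term containing $\wt{\mathbb{I}}^1_{k+1}$. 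Since $\mathcal{F}_\ham[n]$ is a dgL$[k]$a, the coderivation component $\delta^l_{\mathcal{F}, k+1}$ vanishes unless $l \in \{k, k+1\}$, which justifies truncating the second sum in the definition of $L_{\infty, k+1}$ at $l = k$, yielding
$$(D-\LQ)\, \wt{\mathbb{I}}^1_{k+1} \;-\; \wt{\mathbb{I}}^1_{k+1}\, \delta^{k+1}_{\mathcal{F}, k+1} \;=\; -L_{\infty, k+1}(\wt{\mathbb{I}}).$$

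I would then verify that setting $\wt{\mathbb{I}}^1_{k+1} = -\wt{H}_Q L_{\infty, k+1}(\wt{\mathbb{I}})$ solves this equation. Applying $(D-\LQ)$ and using the homotopy identity $(D-\LQ)\wt{H}_Q + \wt{H}_Q(D-\LQ) = \id - \wt{i}\,\bbP$ from Lemma \ref{lem:pert.ham.def.ret}, together with the observation that post-composition by $\wt{H}_Q$ on $C_\ham$ commutes with pre-composition by $\delta^{k+1}_{\mathcal{F}, k+1}$, reduces the desired identity to two sufficient conditions: first, $\bbP \circ L_{\infty, k+1}(\wt{\mathbb{I}}) = 0$, and second, the ``Bianchi-type'' closedness statement $(D-\LQ)\, L_{\infty, k+1}(\wt{\mathbb{I}}) + L_{\infty, k+1}(\wt{\mathbb{I}})\, \delta^{k+1}_{\mathcal{F}, k+1} = 0$.

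The first condition follows from the fact that $\bbP$ is a strict $L_\infty$-morphism (Theorem \ref{Thm: Linfty}), combined with the vanishings $\bbP \circ \wt{H}_Q = 0$ (inherited from $\bbP \circ H = 0$ in Theorem \ref{thm:resolution} via the explicit perturbation formulas) and $\bbP\circ\wt{i} = \id$; these ensure that $\bbP\circ \wt{\mathbb{I}}^1_l = 0$ for all $l \geq 2$, so at most one term survives in each sum of $L_{\infty, k+1}$ after applying $\bbP$, and the surviving contributions cancel. The second condition is a standard consequence of the $L_\infty$-algebra axioms $\delta^2 = 0$ and $\delta_{\mathcal{F}}^2 = 0$ expanded in Taylor coefficients, together with the inductive hypothesis at every $j \leq k$. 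The main obstacle will be the combinatorial bookkeeping in verifying this second condition, since it requires carefully expanding the coderivations via shuffle sums and Koszul signs and invoking the lower-order morphism identity to see all cross terms cancel. Once the $L_\infty$-morphism property is established, $\wt{\mathbb{I}}$ is automatically a quasi-inverse of $\bbP$ because its linear part $\wt{i}$ is already a quasi-isomorphism by Lemma \ref{lem:pert.ham.def.ret}.
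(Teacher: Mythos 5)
Your proposal is correct and follows essentially the same route as the paper's proof: an inductive obstruction argument in which the homotopy identity $[\,D-\LQ,\wt{H}_Q\,]=\id-\wt{i}\,\bbP$ reduces the order-$(k+1)$ morphism identity to the closedness of the obstruction $L_{\infty,k+1}(\wt{\mathbb{I}})$ (which the paper imports from \cite[Lemma 4.3]{kraft:2024} rather than re-deriving) together with $\bbP\, L_{\infty,k+1}(\wt{\mathbb{I}})=0$, the latter established exactly as you describe via $\bbP\circ\wt{H}_Q=0$, $\bbP\circ\wt{i}=\id$ and strictness of $\bbP$. The quasi-inverse conclusion from $\bbP\circ\wt{\mathbb{I}}=\id$ on linear parts also matches the paper.
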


\begin{proof}
We are going to prove the claim inductively. Assume that $\wt{\mathbb{I}}$ is an $L_\infty$-
morphism up to order $k$. Using \cite[Lemma 4.3]{kraft:2024}, we know that 
    \begin{align*}
        \delta_1^1L_{\infty,k+1}(\wt{\mathbb{I}})
        =L_{\infty,k+1}(\wt{\mathbb{I}})\delta^{k+1}_{\mathcal{F},k+1}.
    \end{align*}
Then we know that 
    \begin{align*}
        \wt{\mathbb{I}}^1_{k+1}\delta_{\mathcal{F},k+1}^{k+1} - \delta_{1}^1\wt{\mathbb{I}}^1_{k+1}&= 
		\delta_{1}^1\circ \wt{H}\circ L_{\infty,{k+1}}(\wt{\mathbb{I}})
		-\wt{H}\circ L_{\infty,{k+1}}(\wt{\mathbb{I}})\circ \delta_{\mathcal{F},k+1}^{k+1}\\&
		=\delta_{1}^1\circ \wt{H}\circ L_{\infty,{k+1}}(\wt{\mathbb{I}})+\wt{H}\circ \delta_{1}^{1}\circ L_{\infty,{k+1}}(\wt{\mathbb{I}})\\&
		=(\mathrm{id}-\iota\circ \bbP )L_{\infty,{k+1}}(\wt{I}).
    \end{align*}
Using that $\bbP \circ \wt{H}=0$, we see already that $\bbP \circ \wt{\mathbb{I}}^1_k=0$ for $k\geq 2$, using this and that $\bbP $ is an $L_\infty$-morphism implies that $\bbP L_{\infty,{k+1}}(\wt{I})=0$. Moreover, 
we know that $\bbP \circ\wt{\mathbb{I}}=\mathrm{id}$, which is enough to show that $\wt{\mathbb{I}}$ is the quasi-inverse. 
\end{proof}

\begin{remark}
Note that there are only finitely many non-vanishing $\mathbb{I}^1_k$, since each of the $\mathbb{I}^1_{k}$ contain at least $k-1$ horizontal homotopies and only operations which do not increase the horizontal degree.     
\end{remark}

Let us finish this subsection with some observations and let us restrict ourselves to the case of the standard bracket first. If we apply part one of Theorem \ref{Thm: Linfty} for the case $Q=0$ and $\omega^\bullet=\omega$ (see Remark \ref{Rem: trivialExt}), then we get a diagram of $L_\infty$-algebras 
\begin{equation}\label{e:Stasheffsres}
        \xymatrix{
         (C^\bullet_{\ham,0}, D,\{\cdot,\cdot\}^S_0,\{-,-,-\}^S_0)\ar@<1ex>[r]^-{\bbP } & (\mathcal{F}_\ham[n], \{\cdot,\cdot\}_\ham)
         \ar@<1ex>[l]^-{\widetilde{\mathbb{I}}_0^S} \ar@<1ex>[r]^-{\widetilde{\mathbb{I}}_0^B} & \ar@<1ex>[l]^-{\bbP }(C^\bullet_\ham, D, \{\cdot,\cdot\}^B_0)
        },
    \end{equation}
where $\bbP$ and $\widetilde{\mathbb{I}}^{S/B}_0$ are the quasi-isomorphisms (quasi-inverse to each other) constructed in Lemma \ref{lem:quasiinverse} and $\{\cdot,\cdot\}^S_0$ and $\{\cdot,\cdot\}^B_0$ are the brackets specialised for the data $Q=0$ and $\omega^\bullet=\omega$. Note that $(\calF_\ham,\{\cdot,\cdot\}_\ham)$ has now vanishing differential. 

Let us now assume that there is a functional $L^\bullet$ with\footnote{In other words a Hamiltonian pair $(\bL,Q)$ w.r.t.\ $\omega$.} $\Pi(\iota_Q\omega-d_V\bL)=0$, where $Q$ is now allowed to be non-zero, but $Q^2=0$ and such that $\Pi(\LQ\omega)=0$. 

We claim now that 
$\ell \doteq \bbP (0,\bL)$ is a Maurer-Cartan element, since\footnote{Note that this calculation could have been done using $\{\cdot,\cdot\}^B$, but this version is manifestly simpler.} 
    \begin{align*}
     \{\ell,\ell\}_\ham &=\bbP (0,\{L^\bullet,L^\bullet\}^S_0)\\&
     =\Pi d_V\iota_Q\iota_Q\omega
     =\Pi d_V \iota_Q (d_V \bL +d_H X)
     =\Pi(d_VQ(\bL))
    \end{align*}
where we used that $\Pi(\iota_Q\bom-d_V\bL)=0$ and therefore we have that, using the horizontal homotopy equation, that $\iota_Q\bom-d_V\bL=d_HX+Y$, where $Y$ has horizontal degree smaller than $\top$ and $\Pi(d_V\iota_Q Y)=0$. 
With the same argument, one checks that 
    \begin{equation*}
        \Pi(d_VQ(\bL))=\Pi(\LQ\bom)=0
    \end{equation*}
and thus $\{\ell,\ell\}_\ham=0$. 

Since $\ell$ is a Maurer--Cartan element in $(\mathcal{F}_\ham[n], \{\cdot,\cdot\}_\ham)$, we can twist the Lie algebra structure via $\ell$ to obtain the new dgL$[k]$a structure
    \begin{align*}
    (\mathcal{F}_\ham[n], \{\ell,-\}_\ham, \{\cdot,\cdot\}_\ham)    
    \end{align*}
Let us first notice that, for $\mathbb{c} = (\alpha,\bul{F})\in C_\ham^\bullet$ 
    \begin{align*}
        \{\ell, \bbP \mathbb{c}\}_\ham &=
        \bbP (0,\{\bL, \bF\}^S_0)\\
        &=h_V\circ I\circ \Pi(d_V \iota_Q d_V(\bF))
        =\bbP (0,\LQ(\bF)) \\
        &= \bbP \circ \LQ \circ \wt{i} (\bul{F}) = d_\ham \bF
    \end{align*}
and therefore $d_\ham=\{\ell,-\}_\ham$.

This implies that 
\begin{proposition}
\label{Prop: twisteq}
The diagram
    {\small\[   
        \xymatrix{
         (C^\bullet_{\ham,0})^{\mathbb{I}^S_{0,MC}(\ell)}\ar@<1ex>[r]^-{\bbP } & \ar[d]^{=}\left(\mathcal{F}_\ham[n],d_\ham, \{\cdot,\cdot\}_\ham\right)
         \ar@<1ex>[l]^-{\mathbb{I}_0^{S,\ell}} \ar@<1ex>[r]^-{\tilde{\mathbb{I}}^S}&
         \left(C^\bullet_{\ham}, D_H - \LQ, \{\cdot,\cdot\}^S, \{\cdot,\cdot,\cdot\}^S \right) \ar@<1ex>[l]^-{\bbP }\\
         (C^\bullet_{\ham,0})^{\mathbb{I}^B_{0,MC}(\ell)}\ar@<1ex>[r]^-{\bbP } & \left(\mathcal{F}_\ham[n],d_\ham, \{\cdot,\cdot\}_\ham\right)
         \ar@<1ex>[l]^-{\mathbb{I}_0^{B,\ell}} \ar@<1ex>[r]^-{\tilde{\mathbb{I}}^B}&
         \left(C^\bullet_{\ham}, D_H - \LQ, \{\cdot,\cdot\}^B \right) \ar@<1ex>[l]^-{\bbP }
         ,
        }
    \]}
    where $\mathbb{I}_0^{S/B,\ell}$ are the $L_\infty[k]$-morphism $\mathbb{I}_0^{S/B}$ twisted by $\ell$ (Equation \eqref{e:twistedmorphism}), is a diagram of quasi-isomorphisms of $L_\infty$-algebras, where $(C^\bullet_{\ham,0})^{\mathbb{I}^S_{0,MC}(\ell)}$ is
    {\small\[
    \left(C^\bullet_{\ham,0}, D + \{\mathbb{I}^S_{0,MC}(\ell), \cdot\}^S_0 + \{\mathbb{I}^S_{0,MC}(\ell),\mathbb{I}^S_{0,MC}(\ell), \cdot\}^S_0, \{\cdot,\cdot\}^S_0 + \{\mathbb{I}^S_{0,MC}(\ell), \cdot, \cdot\}^S_0, \{\cdot,\cdot,\cdot\}^S_0\right)
    \]}
    and $(C^\bullet_{\ham,0})^{\mathbb{I}_{0,MC}(\ell)}$ is
    {\small\[
    \left(C^\bullet_{\ham,0}, D + \{\mathbb{I}^B_{0,MC}(\ell), \cdot\}^B_0, \{\cdot,\cdot\}^B_0\right).
    \]}
\end{proposition}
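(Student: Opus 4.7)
The proof is an exercise in applying the general formalism of twisting $L_\infty$-morphisms by Maurer--Cartan elements to the two quasi-isomorphism diagrams of Equation \eqref{e:Stasheffsres} (the $S$-version and the $B$-version for $Q=0$, $\omega^\bullet=\omega$).

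\textbf{Step 1 (MC element and pushforwards).} The discussion preceding the proposition has already established that $\ell = \bbP(0,\bL)$ is a Maurer--Cartan element of $(\mathcal{F}_\ham[n],\{\cdot,\cdot\}_\ham)$ and that $d_\ham = \{\ell,-\}_\ham$. Since $\widetilde{\mathbb{I}}_0^{S}$ and $\widetilde{\mathbb{I}}_0^{B}$ are $L_\infty$-morphisms, the formal series $\mathbb{I}^{S}_{0,MC}(\ell)$ and $\mathbb{I}^{B}_{0,MC}(\ell)$ define MC elements in $(C^\bullet_{\ham,0},D,\{\cdot,\cdot\}^S_0,\{\cdot,\cdot,\cdot\}^S_0)$ and $(C^\bullet_{\ham,0},D,\{\cdot,\cdot\}^B_0)$ respectively; finiteness of these sums is guaranteed by the fact that $\widetilde{\mathbb{I}}^1_k$ vanishes for $k$ large (Remark after Lemma \ref{lem:quasiinverse}).

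\textbf{Step 2 (twisted structures are the stated ones).} Apply the twisting formula for $L_\infty$-algebras. In the $S$-case, where the only non-vanishing brackets are $\delta_1^1=D$, $\delta_2^1=\{\cdot,\cdot\}^S_0$, $\delta_3^1=\{\cdot,\cdot,\cdot\}^S_0$, the twist by $\pi=\mathbb{I}^{S}_{0,MC}(\ell)$ produces
\begin{align*}
\delta_1^{1,\pi}(x)&= Dx+\{\pi,x\}^S_0+\tfrac12\{\pi,\pi,x\}^S_0,\\
\delta_2^{1,\pi}(x\vee y)&=\{x,y\}^S_0+\{\pi,x,y\}^S_0,\\
\delta_3^{1,\pi}(x\vee y\vee z)&=\{x,y,z\}^S_0,
\end{align*}
which is exactly $(C^\bullet_{\ham,0})^{\mathbb{I}^S_{0,MC}(\ell)}$ as in the statement. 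In the $B$-case, only $\delta_1^1=D$ and $\delta_2^1=\{\cdot,\cdot\}^B_0$ are non-zero, so the twist reduces to $\delta_1^{1,\pi}=D+\{\pi,\cdot\}^B_0$, $\delta_2^{1,\pi}=\{\cdot,\cdot\}^B_0$, matching the $B$-row. Dually, twisting the trivial-differential dgL[k]a $(\mathcal{F}_\ham[n],0,\{\cdot,\cdot\}_\ham)$ by $\ell$ gives $(\mathcal{F}_\ham[n],\{\ell,-\}_\ham,\{\cdot,\cdot\}_\ham)=(\mathcal{F}_\ham[n],d_\ham,\{\cdot,\cdot\}_\ham)$, which is the middle column.

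\textbf{Step 3 (the morphisms between the twisted algebras).} Since $\bbP$ is a strict $L_\infty$-morphism (only $\bbP^1_1$ is non-zero), its twist by $\mathbb{I}^{S/B}_{0,MC}(\ell)$ coincides with $\bbP$ itself. The identity $\bbP\circ\widetilde{\mathbb{I}}_0^{S/B}=\id$ (Lemma \ref{lem:quasiinverse}) implies $\bbP_{MC}(\mathbb{I}^{S/B}_{0,MC}(\ell))=\ell$, so $\bbP$ is indeed a morphism of the twisted $L_\infty$-algebras with the correct target. Conversely, $\mathbb{I}^{S/B,\ell}_0$ is the twist in the domain of $\widetilde{\mathbb{I}}_0^{S/B}$ by the MC element $\ell\in\mathcal{F}_\ham[n]$, which by the general formula \eqref{e:twistedmorphism} is automatically an $L_\infty$-morphism between the twisted algebras. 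Both remain quasi-isomorphisms because twisting preserves the underlying linear part.

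\textbf{Step 4 (right-hand column).} The morphisms $\widetilde{\mathbb{I}}^{S/B}$ on the right-hand side of the diagram are those of Theorem \ref{Thm: Linfty} applied to the \emph{full} data with non-zero $Q$, so no additional twisting argument is needed; they are quasi-isomorphisms by that theorem.

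The only place where care is required is Step 2: one must check that the twisted maps truly stabilise at the stated order, and this is where the absence of higher brackets in the $S$ (degree $\leq 3$) and $B$ (degree $\leq 2$) structures plays its role. Everything else is a bookkeeping exercise using the décalage convention and the standard twisting formulas.
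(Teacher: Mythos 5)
Your proposal is correct and follows essentially the same route as the paper, which offers no separate proof and presents the proposition as an immediate consequence of the preceding discussion (that $\ell=\bbP(0,\bL)$ is Maurer--Cartan with $d_\ham=\{\ell,\cdot\}_\ham$) combined with the standard twisting formulas applied to the diagram \eqref{e:Stasheffsres}. Note only that your coefficient $\tfrac12\{\pi,\pi,x\}^S_0$ in the twisted differential is the literal output of the twisting formula of Equation \eqref{e:twistedmorphism}, whereas the statement writes that term without the factor $\tfrac12$; your version is the one consistent with the paper's own definition of the twist.
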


\begin{remark}[Different, but related constructions]
${}$
    As already discussed in Proposition \ref{prop:bracket}, we could have chosen the alternative degree $-k$ bracket 
    $\{\cdot,\cdot\}^A$, which project to $\{\cdot,\cdot\}_\ham$ via $\bbP$ and therefore can be extended to $L_\infty[k]$-algebras 
    on $C_\ham^\bullet$ with exactly the same proof as for Theorem \ref{Thm: Linfty}. 
    
    A possibly relevant difference consists in the existence of higher brackets than three, since  $\{\cdot,\cdot\}^A$ does not vanish on the image of $\wt{H}_Q$. In fact, every application of $\wt{H}_Q$ lowers the horizontal degree at least by one see Formula \eqref{Eq: Definitionbrackets}, which means that $\delta_k^1$ lowers the horizontal degree at least by $k-1$. Since the horizontal degree is bounded from above and below, there are at most $\dim(M)+1$ non-vanishing brackets, so in particular finitely many. 
    This $L_\infty[k]$-algebra is also quasi-isomorphic to 
    $\left(\mathcal{F}_\ham[n],d_\ham, \{\cdot,\cdot\}^S_\ham\right)$ and therefore also to the one we constructed via the bracket $\{\cdot,\cdot\}^S$. 

In view of Theorem \ref{Thm: DefRetractwithoutconstants} and Lemma \ref{lem:pert.ham.def.ret}, one can also induce an $L_\infty$-algebra structure on $\Omega_{\ham,0}^\bullet$. As explained there, all the maps from the deformation retracts restrict to this subcomplex, but neither $\{\cdot,\cdot\}^S$ nor $\{\cdot,\cdot\}^A$ or $\{\cdot,\cdot\}$ do, because two nonconstant functions can have a constant bracket. Nevertheless, one can use the general homotopy transfer theorem to induce an $L_\infty$-algebra structure on $\Omega_{\ham,0}^\bullet$, which will be $L_\infty$-quasi-isomorphic to all the structures we constructed. However, we will not 
pursue this approach in this work. 

\end{remark}

\begin{remark}[Interpretation]\label{rmk:comparison}
    The Hamiltonian cone for $Q=0$ is nothing but a resolution of the space of local Hamiltonian functionals $\calF_\ham$, which is equipped with the degree $-k$ bracket $\{\cdot,\cdot\}_\ham$. The cone is then equipped with an $L_\infty$ structure as clarified by Equation \eqref{e:Stasheffsres}. This generalises (and coincides up to equivalences with) the strong homotopy Lie algebra constructed by \cite{BFLS} for the case $k=0$ and when $M$ is contractible. When $k=-1$ this is an $L_\infty$ resolution of the BV bracket on local functionals, see Section \ref{sec:BVBFV}.
    
    It is important to note that this data does not come equipped with a ``choice of theory'' in the form of a Lagrangian, which instead is encoded in the Maurer--Cartan element $\ell = \bbP(0,\bL)$. Such a choice results in the data $(C^\bullet_{\ham,0})^{\mathbb{I}^S_{0,MC}(\ell)}$, which is a twist, and can be easily computed to give
    \[
    (C^\bullet_{\ham,0})^{\mathbb{I}^S_{0,MC}(\ell)} = \left((C^\bullet_{\ham,0}, \{\bL, \cdot\}^S_0 + \{\bL,\bL, \cdot\}^S_0, \{\cdot,\cdot\}^S_0 + \{\bL, \cdot, \cdot\}^S_0, \{\cdot,\cdot,\cdot\}^S_0\right)
    \]
    since there are no higher brackets, and where we used that to compute the brackets we can take any element $\bL$ in the preimage of $\ell$ along $\bbP$ (as opposed to $I(\ell)$), since $\{\cdot,\cdot\}^S$ vanishes on the kernel of $\bbP$. Note that this data uses only $\omega$ as input symplectic data, but has more complicated bracket structure. The equivalence stated above tells us that another equivalent model is given by the Hamiltonian cone $C_{\ham}^\bullet$ constructed for a symplectic development $\bom$ of $\omega$, with input $Q$---the unique Hamiltonian vector field of \emph{any} representative of the MC element $\ell$.
\end{remark}

\subsection{Hamiltonian triples}
Recall Definition \ref{def:Hamiltonian} of a Hamiltonian function and a Hamiltonian pair. Given a $k$-symplectic local form $\omega$ and a Hamiltonian form $F$, it is clear that every development $\bF$ of $F$ (Definition \ref{def:developments}) satisfies the same Hamiltonian condition w.r.t.\ any symplectic development $\bom$ of $\omega$, since being Hamiltonian is truly a property that only depends on the $\top$-component $F^0=F$ and the (source part of the) top component $\iota_{\bb{F}}\omega$. 

We have seen in Theorem \ref{thm:extomega} that some $K$-symplectic developments $\bom$ can be built out of the data $\omega$ and $Q$, and we will investigate now how certain forms satisfy stronger Hamiltonian properties with respect to those developments.\footnote{Note that the following definition can be generalised to generic developments of a local symplectic form, i.e.\ not necessarily $d_V$-closed ones.}

\begin{definition}[Hamiltonian triples]\label{def:hamtriples}
    Let $\bom\in \oloc^{2,\bullet}$ be a $k$-symplectic development of $\omega$.
    We say that a Hamiltonian pair $(\bF,\bb{\bF})$ completes to a Hamiltonian triple $(\bF,\bb{\bF},\bth_{\bF})$ w.r.t.\ $\bom$ iff there exist $\bth_{\bF}\in\oloc^{1,\bullet}(\mathcal{E}\times M)$ such that
    \[
    \iota_{\bb{\bF}}\bom =  d_V  \bF + d_H\bth_{\bF}.
    \]
    We denote by $\mathbb{HT}_\omega\subset \oloc^{0,\bullet}\oplus \Xloc \oplus \oloc^{1,\bullet}$ the subspace of Hamiltonian triples. Moreover, the space of Hamiltionian functionals that complete to a local Hamiltonian triple will be denoted with $\oht^\bullet\subset \oham^\bullet$. The associated cone of Hamiltonian triples will be denoted by $C_{\mathrm{ht}}^\bullet\doteq\Omega^\bullet(M)[1]\oplus \oht^\bullet$.
\end{definition}

\begin{remark}[Starting datum for $\bth$]
    Observe that the Hamiltonian triple equation involves all terms of the inhomogenous form $\bom$ except the $\top$ term $\theta^0$, which is not determined by the Hamiltonian triple condition. We find it useful to adopt a convention by which we fix $\theta^0=h_V\omega$.
\end{remark}

Note that, obviously $\oht^\bullet\subseteq \oham^\bullet$ by definition. On the other hand if $\bF\in \oham^\bullet$, then we get 
    \begin{align}
        \iota_{\bb{\bF}}\bom -d_V \bF&=
        (d_Hh^\nabla +h^\nabla d_H+ \Pi)
        (\iota_{\bb{\bF}}\bom -d_V \bF)\\&
        =
        (d_Hh^\nabla +h^\nabla d_H)
        (\iota_{\bb{\bF}}\bom -d_V \bF)
    \end{align}
This means that $\bF$ can be extended to a Hamiltonian triple if and only if 
\[
d_H(\iota_{\bb{\bF}}\bom -d_V \bF)=0, 
\]
in which case we can define 
\[
\bth\doteq h^\nabla(\iota_{\bb{\bF}}\bom -d_V \bF).
\]

\begin{remark}
    For Hamiltonian triples we can give a more precise relation between the two brackets on Hamiltonian functionals defined above. Suppose $\bth_{\bul{F}},\bth_{\bul{G}}\in \oloc^{1,\top-1}$ are such that
    \[
    \iota_{\bb{\bul{F}}}\bom = d_V \bul{F} +d_H \bth_{\bul{F}}, \qquad \iota_{\bb{\bul{G}}}\bom = d_V \bul{G} +d_H \bth_{\bul{G}},
    \]
    then
    \[
    \{\bul{F},\bul{G}\}^A = \{\bul{F},\bul{G}\} - \frac12d_H\left(\iota_{\bb{\bul{F}}}\bth_{\bul{F}} - \sigma_k(\bul{F},\bul{G})\iota_{\bb{\bul{G}}}\bth_{\bul{F}}\right),
    \]
    and similarly
    \[
    \{\bul{F},\bul{G}\}^B = \{\bul{F},\bul{G}\} - d_H\left(\iota_{\bb{\bul{F}}}\bth_{\bul{F}} - \sigma_k(\bul{F},\bul{G})\iota_{\bb{\bul{G}}}\bth_{\bul{F}}\right),
    \]
\end{remark}

Let us now consider the homological vector field $Q$. We can consider the relationship between a $k$-symplectic development $\omega^\bullet$ that is $(d_H - \LQ)$-closed, such as the one constructed in Theorem \ref{thm:extomega}, and the existence of 
Hamiltonian triples $(\bL,Q,\bth)$.

\begin{theorem}[Classification of canonical triples]\label{thm:Hamtriples}
    Let $\bom$ be a $k$-symplectic development of a local $k$-symplectic form $\omega$, and let $(\bL,Q)$ be a Hamiltonian pair w.r.t.\ $\bom$, such that $Q\in\Xloc(\mathcal{E})^1$ is homological. Then, the following are equivalent
    \begin{enumerate}
        \item $\bom$ is $(d_H-\LQ)$-closed,
        \item $(\bL ,Q,\bth)$ is a Hamiltonian triple w.r.t.\ $\bom$, with 
        \begin{equation}\label{e:canonicaltriple}
        \bth \doteq h_V \bom, \qquad \bL \doteq h_V(\iota_Q\bom-d_H\bth).
        \end{equation}
    \end{enumerate}
    
    Assume now that any of the equivalent conditions above is satisfied. Then
    \begin{enumerate}[label=(\roman*)]
        \item $\bL $ satisfies 
         \[
         \frac12\{\bL ,\bL \}^S = d_H \bL,
         \]    
         and it is a Maurer--Cartan element in $(C_\ham^\bullet,D-\LQ,\{\cdot,\cdot\}^B)$.
         \item If $(\bul{\wt{L}},Q,\bul{\wt{\theta}})$ is another Hamiltonian triple w.r.t.\ a $(\LQ-d_H)$-closed development $\bom$, such that $\wt{\theta}^0 = h_V\omega = \theta^0$, there exist $\bF\in\oloc^{0,<\top}$, $\bul{\gamma}\in\oloc^{1,<\top-1}$ and $K\in \Omega^\bullet(M)$ such that
        \[
        \bul{\wt{L}} - \bL = d_H\bF + p^*K, \quad \bul{\wt{\theta}} - \bth=d_V\bF + d_H\bul{\gamma}.
        \]
        In particular, $\bbP (\bul{\wt{L}} - \bL )=0$, and $\bom= -d_V\bul{\wt{\theta}} + d_H\bul{\beta}$ for $\bul{\beta}\in\oloc^{1,<\top}$.
        \item 
        Consider $\bul{\wt{\omega}} = \bom + (d_H - \LQ)\bul{\beta}$ for $\beta\in\oloc^{1,\bullet}$ of partial effective degree $\mathsf{ped}(\bul{\beta}) = k-1$ such that $\Pi(d_V\bul{\beta}_0)=0$. Then $(\bul{\wt{L}},Q,\bul{\wt{\theta}})$ is a Hamiltonian triple w.r.t. $\bul{\wt{\omega}}$ with
        \[
        \bul{\wt{\theta}} = \bth + h_V(d_H-\LQ)\bul{\beta}, \qquad \bul{\wt{L}} = L + P_V((d_V-d_H) e^{-\iota_Q}\bbbeta^\bullet),
        \]
        where
        \[
        \bbbeta^\bullet = \sum_{k=0}(\wt{H}_Q d_V)^k\bul{\beta}.
        \]
    \end{enumerate}
    
\end{theorem}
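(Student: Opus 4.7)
The plan is to establish the main equivalence first and then leverage it to prove each of the three secondary claims (i)--(iii) in turn.

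For the main equivalence $(1)\Leftrightarrow(2)$, everything turns on the identity $\bom=d_V\bth$. Since $\bth=h_V\bom$, the vertical homotopy relation $[d_V,h_V]+p^*0^*=\id$ combined with $d_V\bom=0$ and $0^*\bom=0$ (the latter by vertical-degree count, as $\bom$ has $\mathsf{vfd}=2$) gives exactly $\bom=d_V\bth$. Direction $(2)\Rightarrow(1)$ is then immediate: applying $d_V$ to $\iota_Q\bom = d_V\bL+d_H\bth$, the LHS collapses to $\LQ\bom$ via Cartan's formula and $d_V\bom=0$, while the RHS reduces to $d_Vd_H\bth=\pm d_Hd_V\bth=\pm d_H\bom$, producing $(d_H-\LQ)\bom=0$ up to sign bookkeeping. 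For $(1)\Rightarrow(2)$, I would take $\bL$ defined by \eqref{e:canonicaltriple} and apply $[d_V,h_V]+p^*0^*=\id$ to $\iota_Q\bom-d_H\bth$: the $p^*0^*$-contribution vanishes by vertical degree, and $h_Vd_V(\iota_Q\bom-d_H\bth)=\pm h_V(\LQ\bom-d_H\bom)=0$ by (1), leaving $\iota_Q\bom-d_H\bth=d_V\bL$.

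For part (i), I would contract the triple equation with $\iota_Q$: exploiting $\iota_Q\bL=0$ (vertical degree zero) and $\iota_Qd_H+d_H\iota_Q=0$ (locality of $Q$), this yields $\iota_Q^2\bom=Q(\bL)\mp d_H(\iota_Q\bth)$. Independently, compute $d_H\bL=h_Vd_H(\iota_Q\bom-d_H\bth)=\mp h_V\iota_Qd_H\bom=\mp h_V\iota_Q\LQ\bom$ by (1); using the anticommutation $\iota_Q\LQ+\LQ\iota_Q=\iota_{[Q,Q]}=0$ for homological $Q$ together with $h_V\iota_Q^2\bom=0$ (vertical degree), this simplifies to $d_H\bL=\tfrac12\iota_Q^2\bom$, which is exactly $\tfrac12\{\bL,\bL\}^S=d_H\bL$. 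For the Maurer--Cartan claim in $\{\cdot,\cdot\}^B$, lift $\bL$ to $(0,\bL)\in C_\ham^\bullet$ and note $(D-\LQ)(0,\bL)=(0,d_H\bL-Q(\bL))$; the MC equation then reads $d_H\bL-Q(\bL)+\tfrac12\{\bL,\bL\}^B=0$, and substituting $\{\bL,\bL\}^B=2\{\bL,\bL\}^A-\{\bL,\bL\}^S$ with $\{\bL,\bL\}^A=Q(\bL)$ (a direct evaluation from Proposition \ref{prop:bracket} using the sign $\sigma_k(\bL,\bL)$ in the relevant degree) collapses to the identity already proved.

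For part (ii), set $\delta\bL=\bul{\wt L}-\bL$ and $\delta\bth=\bul{\wt\theta}-\bth$; subtracting the triple equations gives $d_V\delta\bL+d_H\delta\bth=0$. Applying $[d_V,h_V]+p^*0^*=\id$ to $\delta\bL$ with $h_V\delta\bL=0$ yields $\delta\bL=-d_H(h_V\delta\bth)+p^*0^*\delta\bL$, so take $\bF=-h_V\delta\bth\in\oloc^{0,<\top}$ and $K=0^*\delta\bL$. The resulting $d_H(\delta\bth-d_V\bF)=0$, together with the horizontal homotopy $[d_H,h^\nabla]+I\Pi=\id$ and the hypothesis $\wt\theta^0=\theta^0$ (which kills the source-form contribution in top horizontal degree), produces $\delta\bth-d_V\bF=d_H\bul\gamma$ with $\bul\gamma=h^\nabla(\delta\bth-d_V\bF)\in\oloc^{1,<\top-1}$. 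The identity $\bbP\delta\bL=0$ follows from $\bbP=h_V\circ I\circ\Pi\circ d_V$ annihilating both $d_H$-exact and $p^*$-pullback forms; finally $\bom=\pm d_V\bul{\wt\theta}+d_H\bul\beta$ is obtained by differentiating $\bul{\wt\theta}=\bth+d_V\bF+d_H\bul\gamma$ and setting $\bul\beta=\mp d_V\bul\gamma$.

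For part (iii), I would first verify that the hypothesis $\Pi(d_V\bul\beta_0)=0$ ensures $\bul{\wt\omega}=\bom+(d_H-\LQ)\bul\beta$ remains a $k$-symplectic development, then note that $\bul{\wt\theta}=h_V\bul{\wt\omega}=\bth+h_V(d_H-\LQ)\bul\beta$ by linearity. For $\bul{\wt L}=h_V(\iota_Q\bul{\wt\omega}-d_H\bul{\wt\theta})$, expand the perturbation $\iota_Q(d_H-\LQ)\bul\beta=-d_H\iota_Q\bul\beta-\iota_Q\LQ\bul\beta$ and reorganize iteratively via the perturbed homotopy $\wt H_Q$ of Lemma \ref{lem:pert.ham.def.ret}: the series $\sum_{k\geq0}(\wt H_Qd_V)^k\bul\beta$ encodes the recursive corrections needed to compensate for the $\LQ$-deformation, while $e^{-\iota_Q}$ (a finite sum, by nilpotency of $\iota_Q$ on bounded vertical-degree forms) propagates $\iota_Q$ through the iteration, with $P_V$ selecting the $(0,n)$-component; the top-coform piece remains $L$ precisely because $\Pi d_V\bul\beta_0=0$ forbids any source-form correction. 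The principal obstacle throughout is the graded sign bookkeeping---for odd $Q$ and its Cartan-type interactions with $d_V$, $\iota_Q$, and the homotopies---and in (iii) specifically the combinatorial reduction of the many a priori terms generated by iterated perturbations to the compact formula stated.
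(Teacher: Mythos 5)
Your treatment of the equivalence $(1)\Leftrightarrow(2)$ and of parts (i)--(ii) is correct and essentially the paper's argument: both directions of the equivalence hinge on applying $[d_V,h_V]+p^*0^*=\id$ to $\iota_Q\bom-d_H\bth$ after checking $d_V$-closedness, and (i) reduces to $d_H\bL=\tfrac12\iota_Q\iota_Q\bom$ via $[Q,Q]=0$, after which the Maurer--Cartan claim follows from $\tfrac12\{\cdot,\cdot\}^B=\{\cdot,\cdot\}^A-\tfrac12\{\cdot,\cdot\}^S$ and $\{\bL,\bL\}^A=\LQ(\bL)$, exactly as in the paper. In (ii) you are in fact slightly more constructive: you exhibit $\bF=-h_V(\bul{\wt{\theta}}-\bth)$ and $K=0^*(\bul{\wt{L}}-\bL)$ directly from the vertical homotopy applied to $\bul{\wt{L}}-\bL$, whereas the paper first deduces $\Pi d_V(\bul{\wt{L}}-\bL)=0$ and invokes the characterisation of $\ker(\Pi\circ d_V)$ on $(0,\top)$-forms; both then finish identically with $[d_H,h^\nabla]+I\Pi=\id$. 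Two small slips worth fixing: the relevant Cartan relation is the commutator $\iota_{[Q,Q]}=\iota_Q\LQ-\LQ\iota_Q$ (for $Q$ of ghost degree $1$ the operator $\iota_Q$ has total degree $0$), and when you trade $h_Vd_V\iota_Q\iota_Q\bom$ for $\iota_Q\iota_Q\bom$ you must still dispose of the $p^*0^*$-term, which the paper does by noting $0^*\iota_Q\iota_Q\bom=0^*d_H\bL=0$. Neither affects the conclusion.

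Part (iii), by contrast, is a description of the answer rather than a derivation. The formula $\bul{\wt{L}}=\bL+P_V\left((d_V-d_H)e^{-\iota_Q}\bbbeta^\bullet\right)$ rests on two concrete steps that you name but do not carry out. First, requiring $d_V\bul{\wt{\omega}}=0$ is not automatic: it forces, via the perturbed homotopy equation of Lemma \ref{lem:pert.ham.def.ret} and the hypothesis $\Pi d_V\bul{\beta}_0=0$, the recursion $d_V\bul{\beta}_{k-1}=(d_H-\LQ)\bul{\beta}_k$ with $\bul{\beta}_k=\wt{H}_Qd_V\bul{\beta}_{k-1}$, which is where the series defining $\bbbeta^\bullet$ actually comes from. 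Second, the identity $\iota_Q^k\LQ X=\tfrac{1}{k+1}\left(\iota_Q^{k+1}d_V-d_V\iota_Q^{k+1}\right)X$ (a consequence of $\iota_Q\LQ=\LQ\iota_Q$) is what makes the iterated substitution of $\iota_Q(d_H-\LQ)\bul{\beta}_0$ terminate in $-d_H\left[e^{-\iota_Q}\bbbeta^\bullet\right]^{1,\bullet}+d_V\left[e^{-\iota_Q}\bbbeta^\bullet\right]^{0,\bullet}$, after which $h_V$ applied to this vertical-degree-one expression gives $P_V\left((d_V-d_H)e^{-\iota_Q}\bbbeta^\bullet\right)$. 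The ``combinatorial reduction'' you defer is precisely the content of the claim, so as written this part of the proof is missing.
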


\begin{proof}
   Let us begin proving $(1)\implies (2)$. Since $\bom$ is $d_V$-closed, it is also $d_V$-exact:
    \begin{align*}
        \bom=d_V h_V \bom \leadsto \bth\doteq h_V\bom.
    \end{align*}
    Moreover, since $\bom$ is $(d_H-\LQ)$-closed, we get
    \begin{align*}
        d_V(\iota_Q \bom -d_H\bth)=
        \LQ\bom+d_Hd_V\bth= - \LQ\bom + d_H\bom=0.
    \end{align*}
    Using the homotopy equation for the vertical differential, $\id=[h_V,d_V] + p^*0^*$, we get 
    \begin{align*}
        \iota_Q\bom-d_H\bth=d_V h_V(\iota_Q\bom-d_H\bth).
    \end{align*}
    This means that, for $\bL =h_V(\iota_Q\bom-d_H\bth)$ and $\bth=h_V\bom$, we have that $(\bL , Q, \bth)$ is a Hamiltonian triple.

    Let us now prove that $(2)\implies (1)$. The  requirement $\bth=h_V \bom$ and the Hamiltonian triple condition, together, imply
    \[
    \LQ\bom = - d_V \iota_Q \bom = -d_Vd_H\bth = d_H d_V \bth = d_H \bom.
    \]

    Assume for the rest of the proof that $(2)$ (and thus $(1)$) holds. 
    
    (i) From $[Q,Q]=0$ we compute
    \[
    0= \iota_{[Q,Q]}\bom = 2 \iota_Q d_V \iota_Q\bom - d_V \iota_Q\iota_Q \bom = 2\iota_Q d_V d_H \bth - d_V \iota_Q\iota_Q \bom 
    \]
    whence, anticommuting various graded operators,
    \[
    \frac12 d_V \iota_Q\iota_Q \bom = d_H \iota_Q \bom = d_H d_V \bL .
    \]
    Now, applying the vertical homotopy equation we conclude
    \[
    \frac12 \iota_Q\iota_Q \bom = d_H\bL  - p^*0^*\left(\frac12 \iota_Q\iota_Q \bom-d_H\bL \right)
    \]
    whence we get $d_H\bL = \{\bL,\bL\}^S$ by observing that $0^*\frac12 \iota_Q\iota_Q \bom= 0^*d_H\bL =0$. Now we can simply evaluate the MC condition on $\bL$ to get
    \[
    \frac12\{\bL,\bL\}^B + (d_H-\LQ)\bL = \LQ(\bL) - \frac12\iota_Q\iota_Q\bom +d_H\bL - \LQ(\bL) = 0
    \]
    owing to the previously proven identity.
    
    (ii) Suppose, furthermore, that $(\bul{\wt{L}},Q,\bul{\wt{\theta}})$ is another Hamiltonian triple w.r.t.\ $\bom$, with $d_V\wt{\theta}^0 = \omega = d_V\theta^0$. Then we have that
    \[
    d_H\bom = \LQ\bom = \begin{cases}
        -d_V d_H\bth = d_H d_V\bth\\
        -d_V d_H\bul{\wt{\theta}} = d_H d_V\bul{\wt{\theta}}
    \end{cases}
    \]
    whence, since $d_V\bth = \bom$
    \[
    d_H(\bom - d_V\bul{\wt{\theta}})=0,
    \]
    applying $\id=[d_H,h^\nabla] + I\Pi$ to $\bom - d_V\bul{\wt{\theta}}$ we get
    \begin{equation}\label{e:wtthetacondition1}
    d_V\bul{\wt{\theta}}  = \bom - d_Hh^\nabla(\bom - d_V\bul{\wt{\theta}}) - I\Pi(\bom-d_V\bul{\wt{\theta}})=\bom - d_Hh^\nabla(\bom - d_V\bul{\wt{\theta}})
    \end{equation}
    since $\Pi(\bom-d_V\bul{\wt{\theta}})=0$ owing to $d_V\wt{\theta}^0 = \omega$ and $\Pi\vert_{\oloc^{\geq 1,<\top}}\equiv 0$.
    
    Now, observe that the Hamiltonian triple condition yields:
    \[
    d_V(\bul{\wt{L}} - \bL ) + d_H(\bul{\wt{\theta}} - \bth) = 0 \implies \Pi d_V(\bul{\wt{L}} - \bL ) = 0,
    \]
    whence $\bul{\wt{L}} - \bL  = p^*K + d_H\bF$ for some $K\in \Omega^\bullet(M)$ and $\bF\in\oloc^{0,<\top}$. In particular, $\bbP (\bul{\wt{L}} - \bL )=0$. Then, reinserting in the previous equation
    \[
    d_H(\bul{\wt{\theta}} - \bth - d_V \bF) = 0, \quad \forall \bullet<\top
    \]
    and from the homotopy equation
    \begin{align}\notag
    \bul{\wt{\theta}} - \bth 
    &= d_V \bF + d_Hh^\nabla(\bul{\wt{\theta}} - \bth - d_V \bF) + I\Pi (\bul{\wt{\theta}} - \bth - d_V \bF)\\\label{e:wtthetacondition2}
    &= d_V \bF + d_Hh^\nabla(\bul{\wt{\theta}} - \bth - d_V \bF) , \quad \forall \bullet<\top\\\notag
    &=d_V \bF + d_H\bul{\gamma}
    \end{align}
    where we used that $\Pi (\bul{\wt{\theta}} - \bth - d_V \bF)=0$ since $\bullet<\top$.

    (iii)  Let $\bul{\beta}_0\equiv\bul{\beta}\in\oloc^{2,\bullet}$ and consider $\bul{\wt{\omega}} = \bom + (d_H - \LQ)\bul{\beta}_0$. To enforce $d_V\bul{\wt{\omega}}=0$, from the homotopy associated to the retract in \eqref{e:LQdefretract} we have
    \[
    d_V\bul{\beta}_0 = (d_H - \LQ)\wt{H}_Q d_V\bul{\beta}_0 + \wt{H}_Q(d_H - \LQ)d_V\bul{\beta}_0 + \wt{I}\Pi d_V\bul{\beta}_0 = (d_H - \LQ)\wt{H}_Q d_V\bul{\beta}_0
    \]
    and we conclude $\bul{\beta}_0 = (d_H-\LQ)\bul{\beta}_1$, which also implies $d_V\bul{\beta}_2 = (d_H - \LQ)\bul{\beta}_3$ and so on, requiring that $\bul{\beta}_i = \wt{H}_Q d_V \bul\beta_{i-1}$. Since $(d_H-\LQ)\bul{\wt{\omega}} = 0$, by $(2)$ there exists a Hamiltonian triple $(\bul{\wt{L}},Q,\bul{\wt{\theta}})$ with
    \begin{align*}
    \bul{\wt{\theta}} = h_V\bul{\wt{\omega}} 
        &= \bth + h_V(d_H - \LQ)\bul{\beta}_0 
    \end{align*}
    and
    \begin{align*}
    \bul{\wt{L}} &= h_V\left(\iota_Q\bul{\wt{\omega}} - d_H \bul{\wt{\theta}}\right) \\
    &= h_V\left(\iota_Q\bom - d_H\bth\right) + h_V\iota_Q(d_H - \LQ)\bul{\beta}_0 + d_Hh_Vh_V(d_H - \LQ)\bul{\beta}_0 \\
    &= \bL+ h_V\iota_Q(d_H - \LQ)\bul{\beta}_0
    \end{align*}
    where we used that $h_V^2=0$. 

    Using the graded Cartan Calculus we can easily see that, since $0\equiv \iota_{[Q,Q]} = \iota_Q \LQ - \LQ \iota_Q$, for any local form $X\in\oloc^{\bullet,\bullet}$
    \[
    \iota_Q^k\LQ X = \frac{1}{k+1}\left(\iota_Q^{k+1} d_V - d_V\iota_Q^{k+1}\right)X.
    \]
    Hence, let's look at the term $\iota_Q(d_H - \LQ)\bul{\beta}_0$. We compute (recall  $[\iota_Q,d_H] = \iota_Q d_H - d_H \iota_Q$ for $Q$ evolutionary and odd)
    \begin{align*}
        \iota_Q(d_H - \LQ)\bul{\beta}_0 &= d_H \iota_Q \bul{\beta}_0 - \frac12 \iota_Q^2 d_V\bul{\beta}_0 + \frac12 d_V \iota_Q^2 \bul{\beta}_0\\
        &=d_H \iota_Q \bul{\beta}_0+ \frac12 d_V \iota_Q^2 \bul{\beta}_0 - \frac12 \iota_Q^2 (d_H-\LQ)\bul{\beta}_1\\
        &=d_H \left(\iota_Q \bul{\beta}_0  -\frac12 \iota_Q^2\bul{\beta}_1\right) + \frac12 d_V \iota_Q^2 \bul{\beta}_0 + \frac12 \iota_Q^2\LQ\bul{\beta}_1\\
        &=d_H \left(\iota_Q \bul{\beta}_0  -\frac12 \iota_Q^2\bul{\beta}_1\right) + \frac12 d_V \iota_Q^2 \bul{\beta}_0 + \frac16\iota_Q^3 d_V\bul{\beta}_1 - \frac16d_V\iota_Q^3 \bul{\beta}_1\\
        &=d_H \left(\iota_Q \bul{\beta}_0  -\frac12 \iota_Q^2\bul{\beta}_1\right) + d_V\left(\frac12 \iota_Q^2 \bul{\beta}_0 - \frac16 \iota_Q^3\bul{\beta}_1\right) + \frac16\iota_Q^3(d_H - \LQ)\bul{\beta}_2
    \end{align*}

    More generally
    \begin{align*}
    \frac{1}{k!}\iota_Q^k(d_H - \LQ)\bul{\beta}_{k-1} 
    &= \frac{1}{k!}d_H\iota_Q^k\bul{\beta}_{k-1} - \frac{1}{k!}\iota_Q^k\LQ\bul{\beta}_{k-1} \\
    &= \frac{1}{k!}d_H\iota_Q^k\bul{\beta}_{k-1} - \frac{1}{k+1!}\iota_Q^{k+1} d_V\bul{\beta}_{k-1} + \frac{1}{k+1!}d_V\iota_Q^{k+1}\bul{\beta}_{k-1}\\
    &=- \frac{1}{k+1!}\iota_Q^{k+1} (d_H-\LQ)\bul{\beta}_{k} + \frac{1}{k!}d_H\iota_Q^k\bul{\beta}_{k-1} + \frac{1}{k+1!}d_V\iota_Q^{k+1}\bul{\beta}_{k-1}
    \end{align*}
    We thus conclude that
    \[
    \iota_Q(d_H - \LQ)\bul{\beta}_0 = - d_H\left[ e^{-\iota_Q}\bul{\bbbeta}\right]^{1,\bullet} + d_V \left[ e^{-\iota_Q}\bul{\bbbeta}\right]^{0,\bullet} 
    \]
    where $\bul{\bbbeta} = \sum_k \bul{\beta}_k = \sum_k(\wt{H}_Qd_V)^h \bul{\beta}_0$. Hence, recalling that $P_V = h_V p^{1,\bullet}$
    \[
    \bul{\wt{L}} = \bL + h_V\left(- d_H\left[ e^{-\iota_Q}\bul{\bbbeta}\right]^{1,\bullet} + d_V \left[ e^{-\iota_Q}\bul{\bbbeta}\right]^{0,\bullet} \right) = \bL + P_V\left( (d_V-d_H)\left[ e^{-\iota_Q}\bul{\bbbeta}\right]\right)
    \]

\end{proof}

\begin{definition}\label{def:tripleredef}
    Let $\bom$ be a $(d_H-\LQ)$-closed symplectic development of a local $k$-symplectic form $\omega$, with $Q\in\Xloc^1$ cohomological, and a choice of homotopies $h^\nabla$ and $h_V$. The \emph{canonical Hamiltonian triple} associated to $(\bom,h^\nabla,h_V)$ is the data $(\bL , Q, \bth)\in \mathbb{HT}_{\bom}$ defined in Equation \eqref{e:canonicaltriple} of Theorem \ref{thm:Hamtriples}.
    
    The \emph{Noether Lagrangian} and the \emph{total Lagrangian} associated associated to a Hamiltonian triple $(\bL , Q, \bth)$ are, respectively
    \[\begin{cases}
        \bD\colon \mathbb{HT}_{\bom}\to \ohor^\bullet, & \bD(\bL , Q, \bth)\equiv\bul{\bD}\doteq \bL -\iota_Q\bth,\\
        \bbL\colon \mathbb{HT}_{\bom}\to \ohor^\bullet, & \bbL(\bL , Q, \bth)\equiv\bul{\bbL}\doteq \bL  + \LE\bul{\bD}
    \end{cases}
    \]
    Furthermore, consider the following:
    \begin{enumerate}
        \item Let $\bul{f}\in\oloc^{0,<\top}$. We call \emph{triple redefinition} the map
            \[
            \mathsf{T}_{\bul{f}} \colon \mathbb{HT}_{\bom}\to \mathbb{HT}_{\bom}, 
            \quad (\bF,\bb{\bF},\bth_{\bF}) \mapsto (\bF + d_H\bul{f},\bb{\bF},\bth_{\bF} + d_V\bul{f}).
            \]
        \item Let $\bul{\beta}\in\oloc^{2,\bullet}$ and $\bul{\wt{\omega}}=\bom + (d_H - \LQ)\bul{\beta}$. We call \emph{global redefinition} the map
            \[
            \mathsf{G}_{\bul{\beta}}\colon \mathbb{HT}_{\bom}\to \mathbb{HT}_{\bul{\wt{\omega}}}, \qquad (\bF + P_V\left( (d_V-d_H)\left[ e^{-\iota_Q}\bul{\bbbeta}\right]\right), \bb{\bF}, \bth_{\bF} + h_V(d_H - \LQ)\bul{\beta}_0)
            \]
    \end{enumerate}
\end{definition}

\begin{remark}
    The triple redefinition map tells us how one can change the canonical Hamiltonian triple, keeping the symplectic development $\bom$ fixed. Indeed $\mathsf{T}_{\bul{f}}(\bL,\bb{\bL},\bth_{\bL})$ explores all possible \emph{other} Hamiltonian triples, given the canonical $(\bL,\bb{\bL},\bth_{\bL})$, due to Theorem \ref{thm:Hamtriples}, part (ii). The global redefinition, instead, tells us how to change the data $(\bom,\bL,\bth)$ keeping just $(Q,\omega)$ fixed, owing to Theorem \ref{thm:Hamtriples}, part (iii).
\end{remark}

\begin{theorem}\label{thm:MCelement}
    The Noether and total Lagrangians $\bul{\bD},\ \bul{\mathbb{L}}$ associated to the Hamiltonian triple $(\bL , Q, \bth)$ are $(d_H-\LQ)$-cocycles in $\ohor^{0,\top}$ and $\mathbb{d}\doteq (0,\bul{\bD}),\ \mathbb{l}=(0,\bul{\mathbb{L}})$ are $(D-\LQ)$-cocycles in $C^\bullet_{\ham}$, and their cohomology classes are invariant under triple redefinition:
    \begin{align*}
    \bD(\mathsf{T}_{\bul{f}}(\bL,\bb{\bL},\bth_{\bL})) = \bD(\bL,\bb{\bL},\bth_{\bL}) + (d_H - \LQ)\bul{f}\\
    \bbL(\mathsf{T}_{\bul{f}}(\bL,\bb{\bL},\bth_{\bL})) = \bbL(\bL,\bb{\bL},\bth_{\bL}) + (d_H - \LQ)(1+\LE)\bul{f}
    \end{align*}
    
    Moreover, there is a Maurer--Cartan element in $\mathsf{MC}(C^\bullet_{\ham},D-\LQ,\{\cdot,\cdot\}^S,\{\cdot,\cdot,\cdot\}^S)$
    \[
    \mathbb{s}\doteq \mathbb{l}-\bbalpha
    \]
    where 
    \[
    \bbalpha\doteq \frac12\wt{H}_Q\left(\{\mathbb{l},\mathbb{l}\}\right),
    \]
    such that $\bbP \mathbb{s} = \bbP (0,\bL)$ is a Maurer--Cartan element in $\left(\calF ,d_\ham,\{\cdot,\cdot\}_\ham\right)$. We call $\mathbb{s}$ the \emph{standard} Maurer--Cartan element of the Hamiltonian triple $(\bL , Q, \bth)$.
\end{theorem}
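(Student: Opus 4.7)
The statement breaks into (i) the cocycle property of $\bul{\bD}$ and $\bul{\mathbb{L}}$ at the horizontal and cone levels, (ii) invariance of their cohomology classes under triple redefinition, and (iii) the Maurer--Cartan element and its projection. The cone version of (i) follows at once from the horizontal one, since $(D-\LQ)(0,\bF)$ vanishes in the first slot and reduces, up to sign, to $(d_H-\LQ)\bF$ in the second.

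For (i), the plan is a direct graded Cartan-calculus computation on $\bul{\bD}=\bL-\iota_Q\bth$ using three inputs: the Hamiltonian triple equation $\iota_Q\bom=d_V\bL+d_H\bth$, the consequence $\tfrac12\iota_Q\iota_Q\bom=d_H\bL$ from Theorem \ref{thm:Hamtriples}(i), and $(d_H-\LQ)\bom=0$. The first determines $\LQ\bth$ via Cartan, the second controls $\LQ\bL$, and the identity $[\LQ,\iota_Q]=\iota_{[Q,Q]}=0$ lets one move $\LQ$ past $\iota_Q$ in $\LQ\iota_Q\bth$; collecting terms gives $(d_H-\LQ)\bul{\bD}=0$. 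For $\bul{\mathbb{L}}=\bL+\LE\bul{\bD}$ one additionally uses Lemma \ref{QEuler}: since $[\LE,\LQ]=\LQ$ and $[\LE,d_H]=0$, the operator $\LE$ commutes with $(d_H-\LQ)$ up to a term proportional to $\LQ\bul{\bD}$, which is already controlled by the preceding step.

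For (ii), substitution of $\bL\mapsto\bL+d_H\bul{f}$ and $\bth\mapsto\bth+d_V\bul{f}$ gives $\bul{\bD}\mapsto\bul{\bD}+d_H\bul{f}-\iota_Qd_V\bul{f}=\bul{\bD}+(d_H-\LQ)\bul{f}$, using $\LQ\bul{f}=\iota_Qd_V\bul{f}$ from the Cartan formula on the horizontal zero-form $\bul{f}$. For $\bul{\mathbb{L}}$, an additional $\LE(d_H-\LQ)\bul{f}$ appears, which, by commuting $\LE$ across $d_H-\LQ$ using Lemma \ref{QEuler}, rewrites as $(d_H-\LQ)(1+\LE)\bul{f}$, giving the claimed identity and hence cohomology-class invariance.

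The Maurer--Cartan step (iii) is the most delicate. The key observation is that $\wt{H}_Q=H\sum_{k\geq 0}(\LQ H)^k$ strictly lowers horizontal form degree in the second cone-slot, since $H$ is built from the global horizontal homotopy $h^\nabla$; therefore any element in the image of $\wt{H}_Q$ has second-slot horizontal degree $<n$, whose Hamiltonian vector field vanishes by the Remark after Definition \ref{def:Hamiltonian}. In particular $\{\bbalpha,\cdot\}^S\equiv 0$. Applying the homotopy equation $[D-\LQ,\wt{H}_Q]=\id-\wt{i}\bbP$ to $\{\mathbb{l},\mathbb{l}\}^S$, together with $\bbP\{\mathbb{l},\mathbb{l}\}^S=\{\bbP\mathbb{l},\bbP\mathbb{l}\}_\ham=\{\ell,\ell\}_\ham=0$ (established before Proposition \ref{Prop: twisteq}), produces $(D-\LQ)\bbalpha=\tfrac12\{\mathbb{l},\mathbb{l}\}^S$; hence the quadratic MC terms cancel, $(D-\LQ)\mathbb{s}+\tfrac12\{\mathbb{s},\mathbb{s}\}^S=0$. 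The cubic contribution $\tfrac{1}{3!}\{\mathbb{s},\mathbb{s},\mathbb{s}\}^S=-\tfrac12\wt{H}_Q\{\{\mathbb{l},\mathbb{l}\}^S,\mathbb{l}\}^S$ also vanishes, because the top horizontal component of $\{\mathbb{l},\mathbb{l}\}^S$ is $d_H$-exact by Theorem \ref{thm:Hamtriples}(i), so its Hamiltonian vector field is zero and the inner bracket vanishes. Finally, $\bbP\wt{H}_Q=0$ (inherited from $\bbP\circ H=0$) yields $\bbP\mathbb{s}=\bbP\mathbb{l}$, which equals $\bbP(0,\bL)=\ell$ by top-component inspection; and $\ell$ is already MC in $(\calF_\ham,d_\ham,\{\cdot,\cdot\}_\ham)$. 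The main obstacle is the graded-sign bookkeeping in (i)--(ii); once that is in place, the structural MC argument in (iii) proceeds cleanly, hinging almost entirely on the horizontal-degree behaviour of $\wt{H}_Q$.
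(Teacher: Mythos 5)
Your proposal is correct and follows essentially the same route as the paper: the cocycle and redefinition identities come from the same Cartan-calculus manipulations (the paper outsources these to \cite{MSW}, reporting for $\bul{\bD}$ the slightly slicker variant $d_V\bul{\bD}=(\LQ-d_H)\bth$ followed by $d_V$-acyclicity in positive $\mathsf{ped}$-degree), and your Maurer--Cartan step is the paper's argument verbatim --- the homotopy equation applied to $\{\mathbb{l},\mathbb{l}\}^S$, vanishing of the brackets on the image of $\wt{H}_Q$ for horizontal-degree reasons, and $\bbP\wt{H}_Q=0$. The one thin spot is your ``top-component inspection'' for $\bbP\mathbb{l}=\bbP(0,\bL)$: the top components differ by $\LE\bD^{0}=(k+1)\bD^{0}$, which vanishes on the nose only in the BV case $k=-1$, though the paper's own proof is no more explicit on this point.
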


\begin{proof}
    The fact that $\bul{\bD}$ and $\bul{\mathbb{L}}$ are $d_H-\LQ$ cocycles in $\ohor^{0,\bullet}$ was proven in \cite[Theorem 23]{MSW}. We report the argument:
    From the Hamiltonian triple equation we have
    \[
    d_V (\bL  -\iota_Q\bth) = (\LQ - d_H)\bth \iff d_V \bul{\bD} = (\LQ - d_H)\bth
    \]
    whence
    \[
    d_V(\LQ - d_H)\bul{\bD} = 0
    \]
    but $\mathsf{ped}((\LQ - d_H)\bul{\bD}) = 1$ and since there are no nontrivial nonzero-degree constants, it must be $(\LQ - d_H)\bul{\bD}=0$.
    Then $(D-\LQ)\mathbb{d}=(D-\LQ)\mathbb{l}=0$ follows directly from $(D-\LQ)(0,X) = (0,(d_H-\LQ)X)$ for any $X$.
    
    The proof that the $(d_H-\LQ)$-classes of $\bul{\bD}$ and $\bul{\bbL}$ are invariant under the triple redefinition was also given in \cite[Proposition 27]{MSW}, using different terminology.
    
    Then we observe that
    \[
    (D-\LQ)\{\mathbb{l},\mathbb{l}\}^S = \left(0,(d_H-\LQ)\{\bul{\mathbb{L}},\bul{\mathbb{L}}\}^S\right) =\left(0, 2\{(d_H-\LQ)\bul{\mathbb{L}},\bul{\mathbb{L}}\}\right) = (0,0) 
    \]
    and $\bbP \{\mathbb{l},\mathbb{l}\}^S =0$ so that, from the homotopy equation $\id = [\wt{H}_Q,D-\LQ] + \wt{I}\bbP $ given by the $\LQ$-perturbed deformation retract \eqref{e:LQdefretract},
    \[
    \frac12\{\mathbb{l},\mathbb{l}\}^S = (D-\LQ)\bbalpha, \qquad \bbalpha\doteq \frac12\wt{H}_Q\{(0,\bul{\mathbb{L}}),(0,\bul{\mathbb{L}})\}^S,
    \]
    and thus, recalling that the three bracket vanishes on functionals such that $\{\bL,\bL\}^S = d_H(\dots)$, the standard object $\mathbb{s}\doteq \mathbb{l} - \bbalpha$ is a Maurer--Cartan element:
    \[
    (D-\LQ) \mathbb{s} + \frac12 \{\mathbb{s},\mathbb{s}\}^S + \frac16\{\mathbb{s},\mathbb{s},\mathbb{s}\}^S = 0.
    \]
\end{proof}

\begin{corollary}
    Let $\mathbb{l}=(0,\bul{\bbL})$ as in Theorem \ref{thm:MCelement} and consider a Maurer--Cartan element of the form $\ell=\bbP \mathbb{l}\in\mathsf{MC}(\calF,d_{\ham},\{\cdot,\cdot\}_{\ham})$. Its image under the quasi inverse $\wt{\mathbb{I}}^S\colon (\calF_\ham, d_\ham, \{\cdot,\cdot\}_\ham) \to (C^\bullet_\ham, D-\LQ,\{\cdot,\cdot\}^S,\{\cdot,\cdot,\cdot\}^S)$ is 
    \[
    \wt{\mathbb{I}}_\mathrm{MC}(\ell) = \wt{i}(\ell) - \frac12\wt{H}_Q\{\wt{i}(\ell),\wt{i}(\ell)\}^S 
    = \mathbb{s} + \mathbb{r}
    \]
    where
    \[
    \wt{i}(\ell) = (0,\ell) + \sum_{k\geq 1}(H\LQ)^k (0,\ell), 
    \]
    $\mathbb{s}\in\mathsf{MC}(C_\ham^\bullet,D-\LQ, \{\cdot,\cdot\}^S,\{\cdot,\cdot,\cdot\}^S)$ is as in Theorem \ref{thm:MCelement}, and
    \[
    \mathbb{r} = \wt{i}\bbP\left(0,\bul{\bbL}\right) - (0,\bul{\bbL}) = -[\wt{H}_Q,D-\LQ]\mathbb{l}, \qquad \bbP\mathbb{r} = 0.
    \]
\end{corollary}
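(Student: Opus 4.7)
The plan is to prove the corollary in three steps: the auxiliary formulas for $\wt{i}(\ell)$ and $\mathbb{r}$, the right-hand equality with $\mathbb{s}+\mathbb{r}$, and the main identity $\wt{\mathbb{I}}_{\mathrm{MC}}(\ell) = \wt{i}(\ell) - \tfrac12\wt{H}_Q\{\wt{i}(\ell),\wt{i}(\ell)\}^S$. The expansion of $\wt{i}$ is read off directly from Lemma \ref{lem:pert.ham.def.ret}. For $\mathbb{r}$, I would write $\mathbb{r} = \wt{i}\bbP\mathbb{l} - \mathbb{l}$ and use the homotopy equation $\id - \wt{i}\bbP = [D-\LQ,\wt{H}_Q]$ together with $(D-\LQ)\mathbb{l}=0$ from Theorem \ref{thm:MCelement} to obtain $\mathbb{r} = -[\wt{H}_Q,D-\LQ]\mathbb{l} = -(D-\LQ)\wt{H}_Q\mathbb{l}$; then $\bbP\mathbb{r} = \ell - \ell = 0$.

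For the right-hand equality $\wt{i}(\ell) - \tfrac12\wt{H}_Q\{\wt{i}(\ell),\wt{i}(\ell)\}^S = \mathbb{s}+\mathbb{r}$, substituting $\wt{i}(\ell) = \mathbb{l} + \mathbb{r}$ and expanding the bracket reduces the claim to $\wt{H}_Q\bigl(2\{\mathbb{l},\mathbb{r}\}^S + \{\mathbb{r},\mathbb{r}\}^S\bigr) = 0$. Since $D-\LQ$ is a graded derivation of $\{\cdot,\cdot\}^S$ (Lemma \ref{Lem: Jacexact}) and $(D-\LQ)\mathbb{l}=0$, both cross terms can be rewritten as $(D-\LQ)$ applied to brackets involving $\wt{H}_Q\mathbb{l} \in \operatorname{im}(\wt{H}_Q)$; such brackets vanish identically because $\{\cdot,\cdot\}^S\vert_{\operatorname{im}(\wt{H}_Q)} = 0$, which is the key vanishing used in the proof of Theorem \ref{Thm: Linfty} to truncate higher brackets.

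The main identity is the computation of $\wt{\mathbb{I}}_{\mathrm{MC}}(\ell) = \sum_{k\geq 1}\tfrac{1}{k!}\wt{\mathbb{I}}^{S,1}_k(\ell^{\vee k})$, which I would carry out inductively via the recursion $\wt{\mathbb{I}}^{S,1}_{k+1} = -\wt{H}_Q L_{\infty,k+1}(\wt{\mathbb{I}})$ from Lemma \ref{lem:quasiinverse}. Three structural inputs drive the computation: (a) $\wt{\mathbb{I}}^{S,1}_k$ takes values in $\operatorname{im}(\wt{H}_Q)$ for $k\geq 2$, by inspection of the recursion; (b) $\{\cdot,\cdot\}^S$ vanishes on $\operatorname{im}(\wt{H}_Q)$ and $\{\cdot,\cdot,\cdot\}^S$ factors through $\wt{H}_Q$, so in the expansion of $L_{\infty,k+1}$ only terms where every internal factor equals $\wt{\mathbb{I}}^{S,1}_1 = \wt{i}$ can survive in the first sum; (c) the MC equation in $\calF_\ham$ gives $\{\ell,\ell\}_\ham = -2d_\ham\ell$, which combined with $\wt{i}$ being a chain map turns each appearance of $\wt{i}\{\ell,\ell\}_\ham$ into $-2(D-\LQ)\wt{i}(\ell)$, so that the homotopy identity $\wt{H}_Q(D-\LQ) = \id - \wt{i}\bbP - (D-\LQ)\wt{H}_Q$ produces $(D-\LQ)$-exact remainders. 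Order one yields $\wt{i}(\ell)$; order two yields $-\tfrac12\wt{H}_Q\{\wt{i}(\ell),\wt{i}(\ell)\}^S + (D-\LQ)\wt{H}_Q\wt{i}(\ell)$; at order three, by (a)--(b) the only surviving piece of $L_{\infty,3}(\wt{\mathbb{I}})(\ell^{\vee 3})$ comes from $\wt{\mathbb{I}}^{S,1}_2\circ\delta^2_{\calF,3}$, which unpacks via (c) to produce $-(D-\LQ)\wt{H}_Q\wt{i}(\ell)$, exactly cancelling the order-two remainder. For $k\geq 4$, (a)--(b) reduce $L_{\infty,k}(\wt{\mathbb{I}})$ to contributions from $\wt{\mathbb{I}}^{S,1}_{k-1}\circ\delta^{k-1}_{\calF,k}$; by the inductive hypothesis these are $(D-\LQ)$-exact with coefficients that telescope against the order-$(k+1)$ contribution.

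The main obstacle is precisely this inductive telescoping at orders $k\geq 4$: one must carefully track the combinatorial coefficients generated by the coderivation formula applied to $\ell^{\vee k}$ through the single binary bracket $\{\cdot,\cdot\}_\ham$, and verify that at each order exactly one $(D-\LQ)$-exact remainder survives, precisely matched by the next. A cleaner alternative I would pursue if the direct bookkeeping becomes unwieldy is to prove the right-hand equality first, verify independently (using $(D-\LQ)\mathbb{r}=0$ and the vanishing of brackets on $\operatorname{im}(\wt{H}_Q)$) that $\mathbb{s}+\mathbb{r}$ is a Maurer--Cartan element of $(C^\bullet_\ham, D-\LQ, \{\cdot,\cdot\}^S, \{\cdot,\cdot,\cdot\}^S)$, and then invoke uniqueness of the MC transport along the quasi-inverse determined by the fixed retract.
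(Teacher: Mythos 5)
Your preliminary steps are sound and close to the paper's: the expansion of $\wt{i}$, the identity $\mathbb{r}=-[\wt{H}_Q,D-\LQ]\mathbb{l}$ with $\bbP\mathbb{r}=0$, and the reduction of the right-hand equality to the vanishing of $\wt{H}_Q\bigl(2\{\mathbb{l},\mathbb{r}\}^S+\{\mathbb{r},\mathbb{r}\}^S\bigr)$ via the derivation property and $\{\cdot,\cdot\}^S\vert_{\im(\wt{H}_Q)}=0$ all check out. The main computation, however, has a genuine gap, and it originates in replacing the exact input $\{\ell,\ell\}_\ham=0$ by the weaker relation $\{\ell,\ell\}_\ham=-2d_\ham\ell$. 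For $\ell=\bbP\mathbb{l}$ arising from a Hamiltonian triple one has $\{\ell,\ell\}_\ham=\bbP(0,\iota_Q\iota_Q\bom)=\bbP(0,2d_H\bL)=0$ by Theorem \ref{thm:Hamtriples}(i), and likewise $d_\ham\ell=\{\ell,\ell\}_\ham=0$. This kills every term $\wt{\mathbb{I}}^1_{l}\delta^l_{\calF,k}(\ell^{\vee k})$ outright, so the recursion terminates after order three with no telescoping at all; the inductive bookkeeping for $k\geq 4$ that you flag as ``the main obstacle'' --- and do not actually carry out --- is a detour around a vanishing you have not noticed.

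Your order-three accounting is also incorrect as stated. By your own criterion (b), the term $\delta^1_3(\wt{i}(\ell)^{\vee 3})=\{\wt{i}(\ell),\wt{i}(\ell),\wt{i}(\ell)\}^S$ has all internal factors equal to $\wt{i}$ and is therefore \emph{not} excluded; you never explain why it vanishes, yet the corollary asserts there is no cubic contribution. The paper disposes of it by observing that the Jacobiator of $\{\cdot,\cdot\}^S$ on these arguments is controlled by the Hamiltonian vector field of $\{\bL,\bL\}^S$, which is $[Q,Q]=0$. Meanwhile the piece you claim does survive at order three, namely $\wt{\mathbb{I}}^1_2\circ\delta^2_{\calF,3}(\ell^{\vee3})$, is proportional to $\{\ell,\ell\}_\ham$ and hence zero, so it cannot cancel a nonzero order-two remainder (which is itself zero for the same reason). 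Finally, your fallback --- verifying that $\mathbb{s}+\mathbb{r}$ is Maurer--Cartan and invoking ``uniqueness of the MC transport'' --- does not close the argument: two Maurer--Cartan elements of $(C_\ham^\bullet,D-\LQ,\{\cdot,\cdot\}^S,\{\cdot,\cdot,\cdot\}^S)$ with the same image under $\bbP$ are only $L_\infty$-gauge equivalent (Proposition \ref{propr:gauge-eqMC}), not equal, so this cannot identify $\wt{\mathbb{I}}_{\mathrm{MC}}(\ell)$ with $\mathbb{s}+\mathbb{r}$ on the nose.
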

\begin{proof}
    Recall that the map induced by an $L_\infty$-morphism is defined on Maurer--Cartan elements
    by 
        \begin{align*}
            \wt{\mathbb{I}}_\mathrm{MC}(\ell)=
            \sum_{k=1}^\infty \frac{1}{k!}\mathbb{I}_k^1(\ell^{\vee k})
        \end{align*}
        
    Using the explicit formula for $\mathbb{I}_k^1$ from Lemma \ref{lem:quasiinverse}, we see that for $k\geq 2$
        \begin{align}
           \label{Eq: BigBeautifulformula}
            \mathbb{I}_k^1 
            = - \wt{H}_Q(
            \delta^1_2 \mathbb{I}_{k}^2 + \delta^1_3 \mathbb{I}_{k}^3 - \mathbb{I}_{k-1}^2\delta^{k-1}_{\mathcal{F}k}),
        \end{align}
        since for all the other terms the coderivations corresponidng to the $L_\infty$-algebras vanish. 
        The term $\delta^{k-1}_{\mathcal{F},k}(\ell^{\vee {k}})=0$, since $\delta^{k-1}_{\mathcal{F},k}$ is induced by the bracket on $\mathcal{F}_\ham$ and $\{\ell,\ell\}_\ham=0$. Moreover, 
        if $k\geq 4$ also the first two summands vanish, since 
        $I^1_n$ lays in the image of $\wt{H}_Q$ for $n\geq 2$ on which the coderivations $\delta^1_2$ and $\delta_3^1$ vanish. 
        For $k=3$ the first argument vanishes by the same argument. We conclude 
        \begin{align*}
            \wt{\mathbb{I}}(\ell)
            = 
            \wt{i}(\ell) + \wt{\mathbb{I}}^1_2(\ell,\ell) + \wt{\mathbb{I}}^1_3(\ell,\ell,\ell)
        \end{align*}
    which gives by formula \eqref{Eq: BigBeautifulformula}
    \[
    \wt{\mathbb{I}}(\ell) = \wt{i}(\ell) - \wt{H}_Q\{\wt{i}(\ell),\wt{i}(\ell)\}^S \pm \wt{H}_Q\left( \mathrm{Jac}_{\{\cdot,\cdot\}^S}(\wt{i}(\ell)^{\otimes 3})\right).
    \]
    We observe that $\wt{i}(\ell)=\mathbb{p} \mathbb{l} + H(\dots)$ by the explicit formula of $\wt{i}$ (Lemma \ref{lem:pert.ham.def.ret}) and therefore, recalling that $\ell=(0,\bL)$  
    \[
    \wt{\mathbb{I}}(\ell) = \wt{i}(\ell) - \wt{H}_Q\left(0,\{\bL,\bL\}^S\right) \pm \wt{H}_Q\left(0, \mathrm{Jac}_{\{\cdot,\cdot\}^S}(\bL)^{\otimes 3}\right).
    \]
    since the bracket $\{\cdot,\cdot\}^S$ vanishes on the image of $H$. Finally, the Hamiltonian vector field of $\bL$ is $Q$, and the Hamiltonian vector field of $\{\ell, \ell\}^S$ is $[Q,Q]=0$. Thus, also the Jacobiator vanishes. 

    We conclude by observing that $\wt{H}_Q\left(0,\{\ell,\ell\}\right) = \bbalpha$ (Theorem \ref{thm:MCelement}, and $\wt{i}(\ell) = \wt{i} \bbP \mathbb{l} = \mathbb{l} - [\wt{H}_Q,D-\LQ]\mathbb{l}$, whence
    \[
    \wt{\mathbb{I}}_{\mathsf{MC}}(\ell) = (\wt{i}\bbP - \id)  \mathbb{l} + \mathbb{l} - \bbalpha = \mathbb{s} - [\wt{H}_Q,D-\LQ] \mathbb{l} \doteq \mathbb{s} - \mathbb{r},
    \]
    where, clearly, $\bbP \mathbb{r} = 0$.

\end{proof}

\begin{proposition}\label{propr:gauge-eqMC}
    Let $\bL$ and $\bul{\wt{L}}$ be two developments of $L$, such that $(\alpha, \bL)$ and 
    $(\beta,\bul{\wt{L}})$ are Maurer-Cartan elements in $(C^\bullet_{\ham},D-\LQ,\{\cdot,\cdot\}^S,\{\cdot,\cdot,\cdot\}^S)$, then they are $L_\infty$-gauge equivalent as Maurer-Cartan elements. 
\end{proposition}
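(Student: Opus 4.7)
The plan is to reduce the statement to the general principle that an $L_\infty$-quasi-isomorphism between sufficiently nilpotent $L_\infty$-algebras induces a bijection on gauge-equivalence classes of Maurer--Cartan elements, combined with the observation that both MC elements in question project to the same element of $\calF_\ham$ under $\bbP$.

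First I would verify that $\bbP(\alpha,\bL)=\bbP(\beta,\bul{\wt L})$. By Definition \ref{def:localfunctionals} one has $\bbP=h_V\circ I\circ \Pi\circ I_V$, and $\Pi$ annihilates all horizontal degrees strictly below $\dim M$. Hence $\bbP(\gamma,\bul K)$ depends on $\bul K$ only through its top horizontal component $K^0$. Since $\bL$ and $\bul{\wt L}$ are both developments of the same $L$, one has $\wt{L}^0=L^0=L$, and therefore
\begin{equation*}
\bbP(\alpha,\bL)\;=\;\bbP(\beta,\bul{\wt L})\;=\;\bbP(0,L)\;\doteq\;\ell\;\in\;\mathsf{MC}(\calF_\ham,d_\ham,\{\cdot,\cdot\}_\ham),
\end{equation*}
the Maurer--Cartan property of $\ell$ following from $\bbP$ being a strict $L_\infty[k]$-morphism (Theorem \ref{Thm: Linfty}).

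Next I would apply the standard result (see e.g.\ \cite{kraft:2024}) that any $L_\infty$-quasi-isomorphism between (suitably) nilpotent $L_\infty$-algebras induces a bijection on gauge-equivalence classes of Maurer--Cartan elements. By Theorem \ref{Thm: Linfty} and Lemma \ref{lem:quasiinverse}, $\bbP$ is such a quasi-isomorphism with explicit quasi-inverse $\wt{\mathbb{I}}^S$. The required finiteness is automatic in our setup: the homotopy $\wt{H}_Q$ strictly lowers the horizontal form degree, which is bounded between $0$ and $\dim M$, so the Taylor coefficients $\wt{\mathbb{I}}^{S,1}_k$ vanish for $k$ large (cf.\ the remark following Lemma \ref{lem:quasiinverse}) and every iterated composition entering the construction and integration of gauge parameters truncates to a finite sum. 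Since $(\alpha,\bL)$ and $(\beta,\bul{\wt L})$ have the same image $\ell$, they lie in the same class under the induced bijection on $\mathsf{MC}/\!\sim$, and are therefore gauge-equivalent.

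An explicit gauge parameter can then be produced by chaining the intermediate equivalences $(\alpha,\bL)\sim\wt{\mathbb{I}}^S_{\mathsf{MC}}(\ell)\sim(\beta,\bul{\wt L})$, each of which is extracted from the homotopy identity $\id=[\wt{H}_Q,D-\LQ]+\wt{i}\,\bbP$ applied to the difference $\mathfrak{m}-\wt{\mathbb{I}}^S_{\mathsf{MC}}(\bbP\mathfrak{m})\in\ker\bbP$. The main technical subtlety I expect is careful bookkeeping of the higher bracket $\{\cdot,\cdot,\cdot\}^S$ inside the $L_\infty$ gauge-action formula: unlike in a plain dgLa, integrating the infinitesimal equivalence into a finite gauge transformation requires the $3$-bracket as well. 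This is nevertheless tractable precisely because the bounded-degree argument just invoked forces all such compositions to reduce to finite sums, so no convergence question arises.
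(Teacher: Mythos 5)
Your proof is correct and follows essentially the same route as the paper: both arguments first observe that $\bbP(\alpha,\bL)=\bbP(\beta,\bul{\wt{L}})=\bbP(0,L)$ because $\bbP$ only sees the common top horizontal component, and then conclude gauge equivalence from the homotopy identity $\id-\wt{i}\circ\bbP=[D-\LQ,\wt{H}_Q]$, the paper phrasing this as ``homotopic $L_\infty$-morphisms send a Maurer--Cartan element to gauge-equivalent ones'' applied to $\wt{i}\circ\bbP\sim\id$ (citing \cite[Prop 6.6]{kraft:2024}), while you package it as the bijection on $\mathsf{MC}/\!\sim$ induced by the quasi-isomorphism $\bbP$ before chaining through $\wt{\mathbb{I}}^S_{\mathsf{MC}}(\ell)$ exactly as the paper does. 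Your extra remark on finiteness (via $\wt{H}_Q$ lowering the bounded horizontal degree) is a welcome addition that the paper leaves implicit.
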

\begin{proof}
 If $\bL$ and $\bul{\wt{L}}$ are two developments of $L$, then $\bbP (\alpha, \bL)=\bbP (\beta, \bul{\wt{L}})=\bbP(0,L)$ and therefore the Maurer Cartan elements induced by the $L_\infty$ morphism $\wt{I}\circ \bbP $ are the same, i.e.\ $\tilde{I}\circ\bbP _\mathsf{MC}((\alpha, \mathbb{L}^\bullet))=\tilde{I}\circ\bbP _\mathsf{MC}((\beta, \tilde{\mathbb{L}}^\bullet))$ and therefore, since $\bbP\circ I\sim \id$, we get that $(\alpha, \mathbb{L}^\bullet)$ and 
    $(\beta,\wt{\mathbb{L}}^\bullet)$ are $L_\infty$-gauge equvialent, see \cite[Prop 6.6]{kraft:2024}.
\end{proof}

\begin{corollary}
    Let $(\bul{L},Q,\bth)$ and $(\bul{\wt{L}},Q,\wt{\bth})$ be two hamiltonian triples. Then, their associated standard Maurer--Cartan elements $\mathbb{s},\wt{\mathbb{s}}$ are $L_\infty$-gauge equivalent, and both are $L_\infty$-gauge equivalent to $\wt{\mathbb{I}}_{\mathsf{MC}}(\bbP(0,L))$.
\end{corollary}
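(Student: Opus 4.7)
The strategy is to apply the previous corollary to each of the two triples and reduce everything to a single application of Proposition \ref{propr:gauge-eqMC}. Set $\mathbb{l} = (0,\bul{\bbL})$, $\wt{\mathbb{l}} = (0,\bul{\wt{\bbL}})$, $\ell \doteq \bbP\mathbb{l}$ and $\wt{\ell} \doteq \bbP\wt{\mathbb{l}}$. The first task is to verify the three identities $\bbP\mathbb{s} = \ell$, $\bbP\wt{\mathbb{s}} = \wt{\ell}$ and $\bbP\wt{\mathbb{I}}_{\mathsf{MC}}(\ell) = \ell$. All three rest on the uniform observation that $\bbP H = 0$ by Lemma \ref{lem:pert.ham.def.ret}, whence the perturbed homotopy $\wt H_Q = H\sum_{k\geq 0}(\LQ H)^k$ also satisfies $\bbP\wt H_Q = 0$; this annihilates the correction $\bbalpha = \tfrac12\wt H_Q\{\mathbb{l},\mathbb{l}\}^S$, and by Lemma \ref{lem:quasiinverse} each Taylor coefficient $\wt{\mathbb{I}}^1_k$ for $k\geq 2$ factors through $\wt H_Q$, while $\bbP\wt i = \id$.

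The second task is to establish $\ell = \wt{\ell}$. Since both triples are attached to the same $Q$, $\omega$ and $k$-symplectic development $\bom$ (the general case is reduced to this by composing with a global redefinition $\mathsf{G}_{\bul{\beta}}$ from Theorem \ref{thm:Hamtriples}(iii)), Theorem \ref{thm:Hamtriples}(ii) gives $\bul{\wt L} - \bul L = d_H\bF + p^*K$ and $\bul{\wt\theta} - \bth = d_V\bF + d_H\bul\gamma$ with $\bF\in\oloc^{0,<\top}$ and $\bul\gamma\in\oloc^{1,<\top-1}$. Unwinding $\bul{\wt{\bbL}} - \bul{\bbL}$ using $\bul{\bbL} = \bul L + \LE(\bul L - \iota_Q\bth)$ and Lemma \ref{QEuler}, together with $[\iota_Q,d_H] = 0$ and $\iota_Q d_V = \LQ$ on $0$-forms, the top-coform-degree component of the difference becomes a sum of $d_H$-exact and $p^*$ terms, since $\bF$ carries no top component. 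Both types of terms are annihilated by $\bbP$: $d_V\circ p^* = 0$ and $\Pi\circ d_H = 0$. Hence $\bbP\bul{\wt{\bbL}} = \bbP\bul{\bbL}$, i.e. $\wt{\ell} = \ell$.

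Finally, Proposition \ref{propr:gauge-eqMC}---whose proof actually establishes the slightly stronger statement that two Maurer--Cartan elements of $(C_\ham^\bullet, D-\LQ, \{\cdot,\cdot\}^S, \{\cdot,\cdot,\cdot\}^S)$ sharing the same image under $\bbP$ are $L_\infty$-gauge equivalent (the ``developments of the same $L$'' hypothesis is used only to secure this equality of projections)---applies to the pairs $(\mathbb{s},\wt{\mathbb{I}}_{\mathsf{MC}}(\ell))$ and $(\wt{\mathbb{s}},\wt{\mathbb{I}}_{\mathsf{MC}}(\ell))$, giving $\mathbb{s}\sim\wt{\mathbb{I}}_{\mathsf{MC}}(\ell)$ and $\wt{\mathbb{s}}\sim\wt{\mathbb{I}}_{\mathsf{MC}}(\ell)$ at once, whence $\mathbb{s}\sim\wt{\mathbb{s}}$. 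The label $\wt{\mathbb{I}}_{\mathsf{MC}}(\bbP(0,L))$ is realized by taking $L = \bul{\bbL}^0$ (or any representative of the common class $\ell$). The main obstacle is the bookkeeping of the second step: the degree shifts produced by $\LE$ acting on $\bul{\bD}$ must be tracked explicitly, and the cancellation of the $d_H$-exact and $p^*$-type contributions under $\bbP$ relies on the subtle interplay between $\Pi$, $d_V$ and the horizontal homotopy $h^\nabla$; the extension to triples living over different developments further requires combining this calculation with the explicit formulas of Theorem \ref{thm:Hamtriples}(iii).
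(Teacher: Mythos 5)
Your proof is correct and follows essentially the same route as the paper: all three Maurer--Cartan elements share the same image under $\bbP$, and Proposition \ref{propr:gauge-eqMC} (whose proof, as you rightly observe, really establishes the criterion ``equal $\bbP$-images $\Rightarrow$ gauge equivalent'') then gives all the stated equivalences at once; the paper simply compresses your first two steps into the single remark that the horizontal parts of $\mathbb{s}$ and $\wt{\mathbb{s}}$ are developments of the same top form, which makes your explicit $d_H$-exact/$p^*$ bookkeeping in Step 2 unnecessary once one fixes the convention $\theta^0=\wt{\theta}^0=h_V\omega$ and $L^0=\wt{L}^0=L$. The one over-reach is the parenthetical reduction of triples over different symplectic developments via $\mathsf{G}_{\bul{\beta}}$: the brackets $\{\cdot,\cdot\}^S$ and $\{\cdot,\cdot,\cdot\}^S$ themselves depend on $\bom$, so that comparison would involve Maurer--Cartan elements of \emph{different} $L_\infty$ algebras, whereas the corollary (and the paper's proof) tacitly fixes a single $\bom$.
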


\begin{proof}
    We simply apply Proposition \ref{propr:gauge-eqMC} to the standard Maurer--Cartan elements $\mathbb{s}$ and $\wt{\mathbb{s}}$, which are both developments of $L$ in their Horizontal part of the Hamiltonian cone.
\end{proof}

\section{BV formalism revisited}\label{sec:BVBFV}

The BV formalism is a tool to study field theories with local symmetries, motivated by the problem of quantisation. It provides a description of the derived critical locus of a local functional on $\mathcal{E}$ in terms of auxiliary (graded) field theoretic data.

Originally, this was devised to discuss field theory on closed manifolds (or with vanishing falloff conditions for fields at infinity), and it was later extended to interact with higher codimension data like boundaries and corners. For our purposes this means working with inhomogeneous local forms in the horizontal direction, so with objects in $\oloc^{p,\bullet}(\mathcal{E}\times M)$. 

Our main references throughout will be \cite{CMR2014,CMR18,MSW}, as well as \cite{BarnichBrandtHenneaux,Grigoriev} and the earlier works \cite{Sharapov1,Sharapov2}. The idea of computing descent (i.e. special developments of local forms) traces back to \cite{ZUMINO1985477,ZuminoDescent}. Some earlier investigations along these lines appear in \cite{KotovStrobl}.

\subsection{Generalities}

In this section we will give a review of the main building blocks of the BV construction. 

Let, as above $\mathcal{E}=\Gamma(E\to M)$ denote the space of sections of a graded vector bundle.\footnote{This can be easily extended to bundles that are nonlinear in degree $0$.} We consider once again a local $k$-symplectic form on it.

\begin{definition}\label{laxBVBFV}
A \emph{BV theory} is the assignment, to a manifold $M$ of the space of sections $\mathcal{E}$ of a vector bundle $E\to M$, a $(-1)$-symplectic local form $\omega$ on $\mathcal{E}$, and a cohomological vector field $Q\in \Xloc(\mathcal{E})$ such that $\Pi(\LQ\omega)=0$, collectively denoted by $\fracF=(\mathcal{E},\omega,Q)$. A \emph{lax presentation}\footnote{This presentation is due to Cattaneo, Mnev and Reshetikhin \cite{CMR2014,CMR18}. See also \cite{MSW}.} of the BV theory is given by a $(d_H-\LQ)$-closed $k$-symplectic development $\bom$ of $\omega=\omega^0$, and a Hamiltonian triple $(\bL,Q,\bth)\in\mathbb{HT}_{\bom}$ with $\bom = d_V\bth$.

Given a BV theory, its associated BV local complex is the cochain complex 
    \[
    \left(\bul{\mathbb{\Omega}}_{\loc}(\mathcal{E}),d_H - \LQ\right), \qquad \bul{\mathbb{\Omega}}_{\loc}(\mathcal{E}) \doteq \bigoplus_{k=0}^{\mathrm{dim}(M)}\oloc^{\bullet,k}(\mathcal{E}\times M).
    \]
    The local cohomology will be denoted $\bul{\mathbb{H}}_{\loc}(\mathcal{E})$. A class $[\bul{\mathcal{O}}]_{\LQ-d_H}\in\bul{\mathbb{H}}_{\loc}(\mathcal{E})$ is called a local observable.

    Given a BV theory and a $(d_H-\LQ)$-closed development of $\omega$, its associated standard $L_\infty$ algebra is given by Theorem \ref{Thm: Linfty}
    \[
    \mathfrak{L}^S(\bom,Q)\doteq \left(C_\ham^\bullet, D-\LQ, \{\cdot,\cdot\}^S, \{\cdot,\cdot,\cdot\}^S\right),
    \]
    while the associated BV dgL$[k]$ algebra is 
    \[
    \mathfrak{L}^B(\bom,Q)\doteq \left(C_\ham^\bullet, D-\LQ, \{\cdot,\cdot\}^B\right),
    \]
    where in both cases $(C_\ham^\bullet,D)$ is the Hamiltonian cone of $\bom$ (Definition \ref{def:HamCone}).
\end{definition}

\begin{theorem}
    Let $(\bL,Q,\bth)\in\mathbb{HT}_{\bom}$ be a lax presentation of BV theory. Then:
    \begin{enumerate}
        \item $\bL$ satisfies the modified classical master equation
        \[
        d_H \bL = \frac12\{\bL,\bL\},
        \]
        and $(0,\bL)\in\mathsf{MC}\left(\mathfrak{L}^B(\bom,Q)\right)$.
        \item The Noether and total Lagrangians, respectively given by 
        \[
        \bul{\bD} = \bL - \iota_Q\bth; \quad \bul{\bbL} = \bL + \LE \bul{\bD}
        \]
        are cocycles of the local complex.
        \item The cohomology class of the Noether and total Lagrangians are invariant under the Hamiltonian triple redefinition map (Definition \ref{def:tripleredef}).
        \item The total Lagrangian determines a Maurer--Cartan element
        \[
        \mathbb{s}\in \mathsf{MC}(\mathfrak{L}(\bom,Q)), \qquad \mathbb{s} = (0,\bul{\bbL}) - \wt{H}_Q\left(\frac12\{(0,\bul{\bbL}),(0,\bul{\bbL})\}^S\right).
        \]
    \end{enumerate}
\end{theorem}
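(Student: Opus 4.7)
The plan is to recognize that this theorem is essentially a reorganization, in BV language, of Theorems \ref{thm:Hamtriples} and \ref{thm:MCelement}, which were proved in the general $k$-symplectic setting. So each of the four parts can be obtained by invoking those results and identifying the relevant pieces; no new calculation should be needed.

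For part (1), I would invoke Theorem \ref{thm:Hamtriples}(i), which under the assumptions of a lax presentation (a $(d_H-\LQ)$-closed development $\bom$ and a Hamiltonian triple $(\bL,Q,\bth)$) directly gives both $\tfrac12\{\bL,\bL\}^S = d_H\bL$ and the fact that $\bL$ is a Maurer--Cartan element for $\{\cdot,\cdot\}^B$. To match the statement, note that on the top horizontal component the two brackets $\{\cdot,\cdot\}^S$ and $\{\cdot,\cdot\}$ agree up to $d_H$-exact terms (Proposition \ref{prop:bracket}), so the mCME as written is a consequence of the corresponding $\{\cdot,\cdot\}^S$-identity; the MC statement for $(0,\bL)\in\mathsf{MC}(\mathfrak{L}^B(\bom,Q))$ follows by evaluating the MC equation $(D-\LQ)(0,\bL) + \tfrac12\{(0,\bL),(0,\bL)\}^B = 0$, which reduces to what Theorem \ref{thm:Hamtriples}(i) already provides.

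For parts (2) and (3), the statements about $\bul{\bD}$ and $\bul{\bbL}$ being $(d_H-\LQ)$-cocycles and their classes being invariant under triple redefinition are exactly the content of the first half of Theorem \ref{thm:MCelement}; I would simply cite it, remarking that the specialization to $k=-1$ does not change any of the arguments. The invariance formulas
\begin{align*}
\bD(\mathsf{T}_{\bul f}(\bL,\bb{\bL},\bth_{\bL})) &= \bD(\bL,\bb{\bL},\bth_{\bL}) + (d_H-\LQ)\bul f,\\
\bbL(\mathsf{T}_{\bul f}(\bL,\bb{\bL},\bth_{\bL})) &= \bbL(\bL,\bb{\bL},\bth_{\bL}) + (d_H-\LQ)(1+\LE)\bul f
\end{align*}
are precisely what was proved there.

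Part (4) is the second half of Theorem \ref{thm:MCelement}: setting $\mathbb{l}=(0,\bul{\bbL})$, one checks that $\{\mathbb{l},\mathbb{l}\}^S$ is $(D-\LQ)$-closed and $\bbP$-exact, so by the perturbed homotopy equation $\id=[\wt{H}_Q,D-\LQ]+\wt I\bbP$ of Lemma \ref{lem:pert.ham.def.ret} we get $\tfrac12\{\mathbb{l},\mathbb{l}\}^S = (D-\LQ)\bbalpha$ with $\bbalpha = \tfrac12 \wt{H}_Q\{\mathbb{l},\mathbb{l}\}^S$, and then $\mathbb{s}=\mathbb{l}-\bbalpha$ solves the MC equation of $\mathfrak{L}^S(\bom,Q)$, using that the three-bracket evaluated on $\mathbb{s}$ vanishes because the Jacobiator sits in the image of $\wt{H}_Q$ where $\{\cdot,\cdot\}^S$ dies. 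The only mild obstacle here is bookkeeping: making sure the signs and shifts in the statement match the general $k$ formulas, which is purely a matter of specializing $k=-1$ and the relevant partial effective degrees. Since all of this has been done, the proof reduces to a one-line appeal to the two theorems above.
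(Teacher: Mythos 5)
Your proposal is correct and matches the paper's own proof, which is literally a two-line citation of Theorems \ref{thm:Hamtriples} and \ref{thm:MCelement} with the same attribution of parts ((1) to the former, (2)--(4) to the latter). The extra detail you supply about how each cited result specializes is harmless but not needed; note only that in the paper's notation the bracket $\{\cdot,\cdot\}$ in the modified classical master equation \emph{is} the standard bracket $\{\cdot,\cdot\}^S$, so no comparison of brackets up to $d_H$-exact terms is required there.
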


\begin{proof}
    This is an application of Theorem \ref{thm:Hamtriples}, where $(1)$ is proven, and Theorem \ref{thm:MCelement}, where (2)-(4) are proven.
\end{proof}

The cocycle condition in the local complex should be interpreted as the descent equations of \cite{ZuminoDescent} (see also \cite{ZUMINO1985477}). The fact that the classes of $\bul{\bD}$ and $\bul{\bbL}$ are invariant under the Hamiltonian triples redefinition map of Definition \ref{def:tripleredef} states that, for a chosen development $\bom$, a change of presentation does not affect certain observables.

In general, one could change the presentation of the theory via the (more general) global redefinition map of Definition \ref{def:tripleredef}, which also changes $\bom$. In general it is not clear that this will preserve the theory in a simple way, as changing $\bom$ results in changing the $L_\infty$ algebra as well. We can however introduce the following \emph{special} type of full redefinition, corresponding to $\bul{\beta}=d_V \bul{\eta}$. (A similar construction was investigated in \cite{Fossati}.)

\begin{proposition}
    Let $\bul{\wt{\omega}} = \bom + (d_H-\LQ)d_V\bul{\eta}$. There is a map of Hamiltonian triples $\Liou\colon \mathbb{HT}_{\bom}\to \mathbb{HT}_{\bul{\wt{\omega}}}$ defined by
    \[
    \Liou\colon (\bL, Q ,\bth) \longmapsto (\bul{\wt{L}},Q,\bul{\wt{\theta}})=(\bL  + \frac12\iota_Q\iota_Q d_V  \bul{\eta}, Q,\bth - d_H\bul{\eta} + \iota_Q d_V \bul{\eta}),
    \]
    which preserves the cohomology classes of $\bul{\bD}$ and $\bul{\bbL}$, and $d_V\bul{\wt{\theta}} = \bul{\wt{\omega}}$. We call the map $\mathsf{L}$ the Liouville redefinition of Hamiltonian triples.
\end{proposition}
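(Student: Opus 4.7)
The proposition bundles three claims: (i) $d_V\bul{\wt{\theta}} = \bul{\wt{\omega}}$, (ii) the Hamiltonian triple condition $\iota_Q\bul{\wt{\omega}} = d_V\bul{\wt{L}} + d_H\bul{\wt{\theta}}$, and (iii) invariance of the $(d_H-\LQ)$-cohomology classes of $\bul{\bD}$ and $\bul{\bbL}$. My plan is to verify each by direct graded Cartan calculus, using throughout the identity $\LQ = \iota_Q d_V - d_V\iota_Q$ (the sign convention consistent with the computation of $\iota_{[Q,Q]}\bom$ in the proof of Theorem \ref{thm:Hamtriples}), together with $[\iota_Q, d_H]=0$ (since $Q$ is evolutionary) and $[\iota_Q,\LQ]=0$ (since $Q^2=0$).

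For (i), applying $d_V$ to $\bul{\wt{\theta}}$ and using $d_V\bth = \bom$, $d_V d_H = -d_H d_V$, and $d_V\iota_Q d_V\bul{\eta} = -\LQ d_V\bul{\eta}$ (by Cartan's formula and $d_V^2=0$), I land exactly on $\bom + (d_H-\LQ)d_V\bul{\eta}$. For (ii), after subtracting the triple identity $d_V\bL + d_H\bth = \iota_Q\bom$ and noting that $[\iota_Q,d_H]=0$ cancels the $d_H$-parts, the task reduces to the purely vertical identity $\tfrac12 d_V\iota_Q^2\alpha + \iota_Q\LQ\alpha = 0$ for $\alpha\doteq d_V\bul{\eta}$, which is $d_V$-closed. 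Since $d_V\alpha=0$ gives $\LQ\alpha = -d_V\iota_Q\alpha$, combining with $[\iota_Q,\LQ]=0$ and one more Cartan rewriting of $\iota_Q d_V\iota_Q$ closes this to $\iota_Q\LQ\alpha = -\tfrac12 d_V\iota_Q^2\alpha$.

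For (iii), a direct expansion yields $\bul{\wt{\bD}} - \bul{\bD} = d_H\iota_Q\bul{\eta} - \tfrac12 \iota_Q^2 d_V\bul{\eta}$. The key observation is that $\bul{\eta}$ has vertical form degree $1$, forced by $d_V\bul{\eta}\in\oloc^{2,\bullet}$, so $\iota_Q^2\bul{\eta}=0$. Re-running the same Cartan chain with $\bul{\eta}$ in place of $\alpha$ (now without the $d_V\alpha=0$ luxury, but with $\iota_Q^2\bul{\eta}=0$ killing the parallel term) delivers $\LQ\iota_Q\bul{\eta} = \tfrac12 \iota_Q^2 d_V\bul{\eta}$, hence $\bul{\wt{\bD}} = \bul{\bD} + (d_H-\LQ)\iota_Q\bul{\eta}$. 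For $\bul{\wt{\bbL}} - \bul{\bbL} = (\bul{\wt{L}}-\bL) + \LE(\bul{\wt{\bD}}-\bul{\bD})$, I invoke Lemma \ref{QEuler}'s identity $\LE(d_H-\LQ) = (d_H-\LQ)\LE - \LQ$: the extra $-\LQ\iota_Q\bul{\eta} = -\tfrac12 \iota_Q^2 d_V\bul{\eta}$ it produces cancels $\bul{\wt{L}}-\bL$ exactly, leaving $\bul{\wt{\bbL}}-\bul{\bbL} = (d_H-\LQ)\LE\iota_Q\bul{\eta}$. The main obstacle is purely bookkeeping of signs in the graded Cartan identities; the one conceptual subtlety is recognising that the vertical-degree constraint $\mathsf{vfd}(\bul{\eta})=1$ makes $\iota_Q^2\bul{\eta}=0$, which is what forces the cancellations in (iii) to occur on the nose rather than only up to cohomology.
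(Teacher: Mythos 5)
Your proof is correct, and for the two claims the paper actually verifies---$d_V\bul{\wt{\theta}}=\bul{\wt{\omega}}$ and the Hamiltonian triple condition for $(\bul{\wt{L}},Q,\bul{\wt{\theta}})$---it follows essentially the same route: a direct Cartan computation that reduces to the single identity $\iota_Q\LQ d_V\bul{\eta}=-\tfrac12 d_V\iota_Q\iota_Q d_V\bul{\eta}$, which is exactly the identity the paper invokes. Your derivation of that identity from $[\iota_Q,\LQ]=0$ and $d_V^2=0$ is sound.

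Where you go beyond the paper is the third claim. The paper's proof stops after the triple condition and never addresses the invariance of the $(d_H-\LQ)$-classes of $\bul{\bD}$ and $\bul{\bbL}$ (and note that Theorem \ref{thm:MCelement} only covers the triple redefinition $\mathsf{T}_{\bul{f}}$, not the Liouville one, so it does not apply directly). Your computation closes this gap: the observation that $\mathsf{vfd}(\bul{\eta})=1$ forces $\iota_Q\iota_Q\bul{\eta}=0$, whence $\LQ\iota_Q\bul{\eta}=\tfrac12\iota_Q\iota_Q d_V\bul{\eta}$ and $\bul{\wt{\bD}}-\bul{\bD}=(d_H-\LQ)\iota_Q\bul{\eta}$, is exactly what is needed, and your use of Lemma \ref{QEuler} to get $\bul{\wt{\bbL}}-\bul{\bbL}=(d_H-\LQ)\LE\iota_Q\bul{\eta}$ mirrors the mechanism behind the formula $(d_H-\LQ)(1+\LE)\bul{f}$ in Theorem \ref{thm:MCelement}. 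In short: same method as the paper where they overlap, plus a correct treatment of a part the paper leaves implicit.
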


\begin{proof}
    We first observe that
    
    \[
    d_V\bul{\wt{\theta}} = - d_Vd_H \bul{\eta} + d_V\iota_Qd_V\bul{\eta} = (d_H - \LQ)d_V\bul{\eta}.
    \]
    Then, compute
    \begin{align*}
        d_V\bul{\wt{L}} + d_H\bul{\wt{\theta}}&= d_V  (\bL  + \frac12\iota_Q\iota_Q d_V \bul[]{\eta}) + d_H(\bth - d_H\bul{\eta} + \iota_Q  d_V \bul[]{\eta}) \\
        &= d_V\bL + d_H \bth - \iota_Q\LQ d_V\bul{\eta} + d_H\iota_Q d_V\bul{\eta}\\
        &=\iota_Q\bom + \iota_Q\left(d_H-\LQ\right)d_V\bul{\eta}
    \end{align*}
    where we used that $\iota_Q\LQ d_V \bul[]{\eta} = - \frac12 d_V \iota_Q\iota_Q d_V \bul[]{\eta}$, the fact that $Q$ is evolutionary, i.e.\ $\iota_Q d = d\iota_Q$, and $d^2\bul[]{\eta} = 0$.
    \end{proof}

    \begin{remark}
        Observe that being the image of the Liouville redefinition $(\bul{\wt{L}},Q,\bul{\wt{\theta}})$ a Hamiltonian triple w.r.t.\ $\bul{\wt{\omega}}$, we have, in particular, that the modified classical master equation is also preserved 
        \[
        \frac12\iota_Q\iota_Q\bul{\wt{\omega}} = d_H \bul{\wt{L}},
        \]
        meaning that $(0,\wt{L}^\bullet)$ is a Maurer Cartan element of $(C_\ham^\bullet,D-\LQ,\wt{\{\cdot,\cdot\}}{}^B)$, where $\wt{\{\cdot,\cdot\}}{}^B$ is defined using $\wt{\omega}^\bullet$.
        The Liouville redefinition is a particular case of the general redefinition of Definition \ref{def:tripleredef}, for $\bul{\beta}=d_V\bul{\eta}$, which is the generic situation for $\bul{\beta}$-closed, owing to the acyclicity of the vertical complex.
    \end{remark}

\subsection{Multisymplectic data}\label{sec:multisymp}
We are going to repackage a lax presentation of a BV theory in a compact way, highlighting its relation with multisymplectic geometry. We thank Christian Blohmann for suggesting us to look in this direction, and for providing some helpful insight that helped find Theorem \ref{thm:multisympBV}. 

Note that similar formulas emerged from \cite{GetzlerCSformal}, although we are not yet aware of a direct link between the cited work and ours. 

Observe, finally, that it is known that on multisymplectic manifolds one can build an $L_\infty$ algebra, due to \cite{rogers}. That structure, applied in our scenario, would exploit the horizontal form degrees of the symplectic form $\omega$, while our $L_\infty$ algebra uses the vertical part of the symplectic form. We believe the two constructions should combine, eventually, and defer this investigation to future work. (See also \cite{BlohmannGR} on this.)

Recall that the total differential on local forms $\mathrm{Tot}(\oloc^{\bullet,\bullet}(\mathcal{E}\times M))$ is $d= d_V  + d_H$. 
\begin{definition}
    Let $(\bL ,\bth, Q)\in\mathbb{HT}_{\bom}$ be a triple presenting the lax BV-BFV theory $\fracF=(\omega,Q)$. The \emph{BV multisymplectic momentum map} is the local form of total effective degree $0$:
    \[
    \lambda^\bullet \doteq L^\bullet + \theta^\bullet.
    \]
\end{definition}

\begin{theorem}\label{thm:multisympBV}
    Let $\mathfrak{F}=(\mathcal{E},\omega,Q)$ be a BV theory. Then $(\bL ,\bth,Q)\in\mathbb{HT}_{\bom}$ is a lax presentation of $\mathfrak{F}$ if and only if there exists $\lambda^{\bullet,\bullet}\in\oloc^{\bullet,\bullet}$ such that
    \begin{equation}\label{e:BVBFVrepackage}
        e^{\iota_Q}\bom = d \lambda^{\bullet,\bullet},
    \end{equation}
    with $\lambda^{0,\bullet} = \bL$, $\lambda^{1,\bullet} =\bth$, and under a triple redefinition we have
    \[
    \mathsf{T}_{\bul{F}}(\lambda^{\bullet,\bullet}) = \lambda^{\bullet,\bullet} + d \bul{f}.
    \]
    Moreover
    \[
    (\LQ- d_H) \lambda^{\bullet,\bullet} = d\bul{\bD}.
    \]
\end{theorem}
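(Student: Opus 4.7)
The plan is to expand both sides of Equation \eqref{e:BVBFVrepackage} by vertical form degree. Since $\bom$ has vertical form degree $2$, the exponential truncates to
\[
e^{\iota_Q}\bom = \bom + \iota_Q\bom + \tfrac12\iota_Q^2\bom,
\]
with summands of vertical degrees $2$, $1$, $0$, respectively. Writing $\lambda^{\bullet,\bullet} = \bL + \bth$, so that $\lambda^{0,\bullet} = \bL$ and $\lambda^{1,\bullet} = \bth$, and decomposing $d\lambda^{\bullet,\bullet} = d_V\bth + (d_V\bL + d_H\bth) + d_H\bL$ in the same way, the equation \eqref{e:BVBFVrepackage} is equivalent to the three identities (i) $\bom = d_V\bth$, (ii) $\iota_Q\bom = d_V\bL + d_H\bth$, and (iii) $\tfrac12\iota_Q^2\bom = d_H\bL$.

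For the forward implication, (i) and (ii) are explicit parts of the lax presentation data, while (iii) is the modified classical master equation established in Theorem \ref{thm:Hamtriples}(i) (equivalently, the statement that $(0,\bL)\in\mathsf{MC}(\mathfrak{L}^B(\bom,Q))$). For the converse, (i)--(iii) say precisely that $(\bL, Q, \bth)$ is a Hamiltonian triple w.r.t.\ $\bom$ with $\bom = d_V\bth$. The remaining lax requirement that $\bom$ be $(d_H - \LQ)$-closed follows by applying $d_V$ to (ii) and using $d_V^2\bth = 0$ together with Cartan's formula for $\LQ$ and the evolutionary property of $Q$; that $\bom$ is a development of $\omega$ is built into the BV datum via $\omega = \omega^0$.

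The triple-redefinition identity is immediate from the definition of $\mathsf{T}_{\bul{f}}$:
\[
\mathsf{T}_{\bul{f}}(\lambda^{\bullet,\bullet}) = (\bL + d_H\bul{f}) + (\bth + d_V\bul{f}) = \lambda^{\bullet,\bullet} + d\bul{f}.
\]

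The final identity $(\LQ - d_H)\lambda^{\bullet,\bullet} = d\bul{\bD}$ I would likewise prove by vertical form degree decomposition. The vertical-$1$ component is exactly $(\LQ - d_H)\bth = d_V\bul{\bD}$, which is already computed in the proof of Theorem \ref{thm:MCelement} using Cartan on $\bth$ together with (i) and (ii). For the vertical-$0$ component, one has $(\LQ - d_H)\bL = \iota_Q d_V\bL - d_H\bL$ (since $\iota_Q\bL = 0$); substituting $d_V\bL = \iota_Q\bom - d_H\bth$ from (ii), commuting $\iota_Q$ past $d_H$ via the evolutionary relation $[\iota_Q,d_H]=0$, and rewriting $\iota_Q^2\bom$ via (iii), this reduces to $d_H(\bL - \iota_Q\bth) = d_H\bul{\bD}$. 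The only real delicacy is tracking signs in the graded Cartan calculus for the odd cohomological $Q$; no structural difficulty is involved.
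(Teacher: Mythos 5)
Your proposal is correct and follows essentially the same route as the paper: expand $e^{\iota_Q}\bom$ and $d\lambda^{\bullet,\bullet}$ by vertical form degree, match the three components with the axioms $\bom=d_V\bth$, the Hamiltonian triple condition, and the modified classical master equation (the latter supplied by Theorem \ref{thm:Hamtriples}(i)), and verify the redefinition and $(\LQ-d_H)\lambda^{\bullet,\bullet}=d\bul{\bD}$ identities componentwise. If anything, you are slightly more explicit than the paper in deriving the $(d_H-\LQ)$-closedness of $\bom$ in the converse direction and the identity $(\LQ-d_H)\bL=d_H\bul{\bD}$, both of which the paper leaves implicit or imports from Theorem \ref{thm:MCelement}.
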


\begin{proof}
    Assume that $(\bL,Q ,\bth)\in\mathbb{HT}_{\bom}$ is a lax presentation of a BV theory. We compute the left-hand side:
    \[
    e^{\iota_Q}\bom= \bom + \iota_Q\bom + \frac12 \iota_Q\iota_Q\bom.
    \]
    Using the hamiltonian triple condition and the fact that $\bom= d_V \bth$ we get
    \[
    e^{\iota_Q}\bom =  d_V  \bth +  d_V  \bL  + d_H\bth + d_H\bL  = d_H\lambda^{\bullet,\bullet}.
    \]
    On the other hand, $e^{\iota_Q}\bom= \bom + \iota_Q\bom + \frac12 \iota_Q\iota_Q\bom$ is inhomogeneous in the vertical form degree, so that, from $e^{\iota_Q}\bom=d_H\lambda^{\bullet,\bullet}$ we obtain the axioms of a lax presentation of a BV theory 
    \begin{equation}
        \begin{cases}
            \bom =  d_V  \bth,\\
            \iota_Q\bom =  d_V \bL  + d_H\bth,\\
            \frac12\iota_Q\iota_Q\bom = d_H\bL 
        \end{cases}
    \end{equation}
    Finally, the last statement follows from the straightforward calculation: 
    \begin{align*}
    (\LQ-d_H) \lambda^{\bullet,\bullet} &= (\LQ-d_H) (\bL  + \bth) \\
    &= (\LQ-d_H) \bL  + (\LQ-d_H) \bth = d_H\bul{\bD} +  d_V \bul{\bD} = d\bul{\bD},
    \end{align*}
\end{proof}

\begin{corollary}\label{cor:twistedmultisymplectic}
The following relation holds for any lax presentation of a BV theory:
\begin{equation}
    \bom = (d - \LQ) e^{-\iota_Q}\lambda^{\bullet,\bullet} = (d-\LQ)(\mathbb{\Delta}^\bullet + \bul{\theta}).
\end{equation}
Hence $e^{-\iota_Q}\lambda^{\bullet,\bullet}=\mathbb{\Delta}^\bullet + \bul{\theta}$ is a primitive of the $(d-\LQ)$-closed symplectic form $\bom$ in $\ohor$.
\end{corollary}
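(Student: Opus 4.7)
The plan is to deduce the identity directly from Theorem \ref{thm:multisympBV} by applying $e^{-\iota_Q}$ to both sides of \eqref{e:BVBFVrepackage}. First I would evaluate $e^{-\iota_Q}\lambda^{\bullet,\bullet}$ explicitly: since $L^\bullet$ has vertical form degree $0$ and $\theta^\bullet$ has vertical form degree $1$, every term $\iota_Q^k\lambda^{\bullet,\bullet}$ with $k\geq 2$ vanishes, so the series truncates to
\[
e^{-\iota_Q}\lambda^{\bullet,\bullet} = L^\bullet + \theta^\bullet - \iota_Q\theta^\bullet = (L^\bullet - \iota_Q\theta^\bullet) + \theta^\bullet = \mathbb{\Delta}^\bullet + \theta^\bullet,
\]
recovering the Noether Lagrangian of Definition \ref{def:tripleredef} and justifying the second equality of the statement.

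It then remains to verify $(d-\LQ)(\mathbb{\Delta}^\bullet + \theta^\bullet) = \omega^\bullet$ by a direct expansion after splitting $d = d_V + d_H$. The contributions from $\theta^\bullet$ give $d_V\theta^\bullet + d_H\theta^\bullet - \LQ\theta^\bullet = \omega^\bullet + (d_H - \LQ)\theta^\bullet$, using $\omega^\bullet = d_V\theta^\bullet$. The contributions from $\mathbb{\Delta}^\bullet$ split as $d_V\mathbb{\Delta}^\bullet + (d_H - \LQ)\mathbb{\Delta}^\bullet$, and the second summand vanishes by Theorem \ref{thm:MCelement}, which asserts that $\mathbb{\Delta}^\bullet$ is a $(d_H - \LQ)$-cocycle. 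The required identity thus reduces to the vertical identity $d_V\mathbb{\Delta}^\bullet = \LQ\theta^\bullet - d_H\theta^\bullet$.

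To close the calculation I would use $\mathbb{\Delta}^\bullet = L^\bullet - \iota_Q\theta^\bullet$ together with the graded Cartan formula $d_V\iota_Q = \iota_Q d_V - \LQ$ (which is the convention extracted from the proof of Theorem \ref{thm:Hamtriples}(iii)), obtaining
\[
d_V\mathbb{\Delta}^\bullet = d_V L^\bullet - \iota_Q d_V\theta^\bullet + \LQ\theta^\bullet = d_V L^\bullet - \iota_Q\omega^\bullet + \LQ\theta^\bullet,
\]
and then invoke the Hamiltonian triple equation $\iota_Q\omega^\bullet = d_V L^\bullet + d_H\theta^\bullet$ to finish. No serious obstacle is anticipated; the only delicate point is the sign convention for the graded Cartan formula on the odd evolutionary vector field $Q$, which is already fixed in the paper. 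The takeaway is that $\mathbb{\Delta}^\bullet + \theta^\bullet$ serves as a primitive of $\omega^\bullet$ for the twisted differential $d - \LQ$, which is the multisymplectic content of the statement.
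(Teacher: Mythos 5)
Your proof is correct. The paper's own proof of the first equality is a one\nobreakdash-line conjugation argument: from Theorem \ref{thm:multisympBV} it writes $\bom = e^{-\iota_Q}d\lambda^{\bullet,\bullet} = e^{-\iota_Q}\,d\,e^{\iota_Q}\bigl(e^{-\iota_Q}\lambda^{\bullet,\bullet}\bigr) = (d-\LQ)e^{-\iota_Q}\lambda^{\bullet,\bullet}$, using that conjugating $d$ by $e^{\iota_Q}$ produces exactly the twist by $-\LQ$ (the higher commutators vanish since $[\iota_Q,\LQ]=\tfrac12\iota_{[Q,Q]}=0$ and $[\iota_Q,d_H]=0$). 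You instead carry out the ``direct inspection'' that the paper explicitly flags as an alternative: you truncate $e^{-\iota_Q}\lambda^{\bullet,\bullet}$ by vertical degree to get $\bul{\bD}+\bth$ (which both approaches need for the second equality), and then verify $(d-\LQ)(\bul{\bD}+\bth)=\bom$ term by term, reducing everything to the identity $d_V\bul{\bD}=(\LQ-d_H)\bth$, which you correctly rederive from the Hamiltonian triple equation with the paper's sign convention $\LQ=\iota_Q d_V - d_V\iota_Q$ (this is the same identity the paper establishes at the start of the proof of Theorem \ref{thm:MCelement}). The conjugation argument is shorter and makes transparent why the twisted differential $d-\LQ$ appears; your expansion buys an independent check of the sign conventions and makes explicit which pieces of the lax-presentation axioms ($\bom=d_V\bth$, the triple equation, and the $(d_H-\LQ)$-cocycle property of $\bul{\bD}$) are actually used. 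Both are complete proofs.
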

\begin{proof}
    This is a direct consequence of Theorem \ref{thm:multisympBV}, since we can write
    \[
    \bom = e^{-\iota_Q} d \lambda^{\bullet,\bullet} = e^{-\iota_Q} d e^{\iota_Q} e^{-\iota_Q} \lambda^{\bullet,\bullet} = (d - \LQ) e^{-\iota_Q} \lambda^{\bullet,\bullet}=(d-\LQ)(\mathbb{\Delta}^\bullet + \bul{\theta}).
    \]
    This formula can also be checked easily by direct inspection.
\end{proof}

\begin{remark}Corollary \ref{cor:twistedmultisymplectic} can be interpreted in terms of shifted symplectic forms on the $k$-BV cohomology in $(\calF_\ham,d_\ham)$, i.e. as building a primitive for the lift to $C_\ham^\bullet$ of the shifted symplectic form on cohomology.
\end{remark}

    Note that when collapsing the BV-BFV data to $\mathsf{ghd}=0$ and restricting to the body of the graded manifold, Corollary \ref{cor:twistedmultisymplectic} reduces to 
    \[
    \omega^{0}\vert_{\mathsf{Body}}=d\lambda^{0}\vert_{\mathsf{Body}} =  d_V  L^{0}\vert_{\mathsf{Body}} + d\theta^{0}\vert_{\mathsf{Body}} +  d_V  \theta^{0}\vert_{\mathsf{Body}} = EL +  d_V \theta^{0}\vert_{\mathsf{Body}}, 
    \]
    where $EL$ is the (source) form obtained by $ d_V  L^{(0)}\vert_{\mathsf{Body}} = \Pi d_VL^{(0)}\vert_{\mathsf{Body}} - d\theta^{(0)}\vert_{\mathsf{Body}}$. Thus, the degree-$0$ part of the BV multisymplectic momentum map (restricted to the body) is the Lepage form associated to the classical Lagrangian $L^{0}$.

    We recall that a homotopy momentum map \cite{CalliesFregierRogersZambon} is given when one has an $L_\infty$-morphism between a Lie algebra $\fg$ and the $L_\infty$ algebra on a presymplectic manifold $(M,\omega)$. It is known that the solution space $(\mathcal{EL}=\{EL=0\},\omega_{\mathcal{EL}})$ is a presymplectic manifold (see \cite{RielloSchiavina_PS} and therein), and that the BV data encodes a Lie algebra action (at least) on the solution space \cite{BarnichBrandtHenneaux}. Moreover, there are examples where one can directly find a homotopy momentum map for the Lagrangian field theory by employing multisymplectic arguments on the variational bicomplex \cite{BlohmannGR,Bernardy_homotopy}.
    
    This leads to the following:

    \begin{conjecture}\label{Conjecture}
        Out of the data of a Hamiltonian triple $(\bul{L},\bul{\theta},Q)$ one can extract a homotopy momentum map on $(\mathcal{EL},\omega_{\mathcal{EL}})$, in the sense that there is a strict homotopy momentum map on the cohomology of the Koszul complex associated to $\mathcal{EL}$, if the Hamiltonian triple yields a Lax presentation of $\mathfrak{F}$ (Theorem \ref{thm:multisympBV}).
    \end{conjecture}

    \begin{remark}
        We stress that in order to realise Conjecture \ref{Conjecture} one might need to 
        \emph{lift} the axioms of a homotopy momentum map from the cohomology in degree $0$ of the Koszul(-Tate) complex to an $L_\infty$-algebra (bigraded by the Koszul--Tate directions as well) so that they can hold \emph{up to homotopy}. In other, slightly ridiculous, terms, one is looking for a homotopy-homotopy moment map. 
    \end{remark}

\begingroup
\sloppy
\printbibliography
\endgroup

\end{document}